\begin{document}

\title{Trees in Graphs of Large Linear Cliquewidth}

\author{Miko\l aj Boja\'nczyk}
\orcid{0000-0002-7758-1072}
\affiliation{
  \institution{University of Warsaw}
  \city{Warsaw}
  \country{Poland}
}

\author{Pierre Ohlmann}
\orcid{0000-0002-4685-5253}
\email{pierre.ohlmann@lis-lab.fr}
\affiliation{
  \institution{CNRS, LIS, Aix-Marseille Université}
  \city{Marseille}
  \country{France}
}

\begin{abstract}
The Pathwidth Theorem states that if a class of graphs has unbounded
pathwidth, then it contains all trees as graph minors.
We prove a similar result for dense graphs.
More precisely, we give a finite family of tree-like patterns and prove
that every graph class of bounded cliquewidth and unbounded linear
cliquewidth contains arbitrarily large patterns as induced subgraphs.
These patterns \mso{} transduce all trees, and \fo{} transduce
subdivisions of all binary trees.
In particular, our result provides the missing piece in establishing
that the \cmso{} transduction order is total over classes of finite
graphs.
\end{abstract}

\keywords{
Cliquewidth,
Monadic Second-Order Logic,
Transductions,
Trees
}

\maketitle

\section{Introduction}\label{sec:introduction}

We begin by stating the main result of the paper.

\begin{theorem}\label{thm:main-informal} Let $\Cc$ be a class of graphs with bounded cliquewidth. Then either:
    \begin{enumerate}
        \item $\Cc$ has bounded linear cliquewidth; or 
        \item\label{item:case-2} there is a class $\Tt$ generated by one of 16 tree-like patterns such that $\Cc$ contains all graphs from $\Tt$ as induced subgraphs. 
    \end{enumerate}
\end{theorem}

Our result provides a dense analogue of the Pathwidth Theorem of Robertson and Seymour~\cite{RobertsonS83}, which states that a class of graphs has either bounded pathwidth, or it \emph{contains all trees}, in the sense that it admits all trees as graph minors.
In our dense setting, pathwidth is replaced by linear cliquewidth, and instead of finding trees as graph minors, we propose an explicit finite family of tree-like patterns and find arbitrarily large such patterns as induced subgraphs.
Here are three representative tree-like patterns representing the same tree; the full list is described in Section~\ref{sec:dichotomy}.

\mypic{179}

Our result also has consequences on logical transductions.

\begin{corollary}\label{cor:main}
    Classes of graphs of bounded cliquewidth and unbounded linear cliquewidth admit
    \begin{enumerate}
        \item\label{item:mso} a surjective \mso{} transduction to the class of all trees; and
        \item\label{item:fo} a surjective \fo{} transduction to a class containing subdivisions of all binary trees.
    \end{enumerate}
\end{corollary}

Item~\ref{item:fo} above settles~\cite[Conjecture~3]{stableGraphsBoundedTwinWidth2022} in the affirmative for graphs of bounded cliquewidth.
The remainder of the introduction elaborates on Item~\ref{item:mso} and its significance.

\subsubsection*{Context and prior work} 
Broadly speaking, this paper is about the interplay between structural graph theory and logic, mainly monadic second-order logic \mso{}.
The foundational result is Courcelle's Theorem~\cite{courcelleMonadicSecondorderLogic1990}, which shows that for graphs of bounded treewidth, properties expressible in \mso{} can be recognised by automata.
The result is particularly known for its algorithmic consequence, which is that \mso{} model checking can be done in linear time for graphs of bounded treewidth, since tree decompositions can be computed in linear time~\cite{bodlaender1993}.

To place our result in context, we relate it to prior work on the transduction ordering for graph classes, and show how we extend that work from  sparse graphs to dense graphs.
Before further describing this prior work, we need to clarify the notion of \mso{}, since there are several non-equivalent variants under consideration.
The first choice of variant concerns the way in which graphs are represented as logical structures: 
\begin{enumerate}

    \item In the  \emph{incidence representation}, the universe  (i.e.~the domain of quantification) is the vertices and edges, and there is a binary relation for incidence. This representation is  used when studying sparse graphs and treewidth. 
      \item In the  \emph{adjacency representation}, the universe is the  vertices, and there is a binary relation for the edges. This representation is  used when studying dense graphs and cliquewidth.
\end{enumerate}
The second choice concerns modulo counting:
\begin{enumerate}
    \item We write \mso{} for the logic without modulo counting.
    \item We write \cmso{} for the  extension with modulo counting. It is equipped with predicates of the form ``the size of set $X$ is divisible by $m$'', for every $m \in \set{2,3,4,\ldots}$.
\end{enumerate}
Together, we have four possible combinations, with the least expressive one being \mso{} under the adjacency representation, and the most expressive one being \cmso{} under the incidence representation. 

\subsubsection*{Transductions} In the study of \mso{} on graphs, a prominent role is played by \mso{} transductions~\cite{arnborgLagergrenSeese1988, courcelle1991,engelfriet1991}. These are graph-to-graph transformations which define the output graph using \mso{} formulas that are evaluated in the input graph. The transformations are nondeterministic, which means that one input graph can produce several output graphs. The nondeterminism arises from a colouring of the input graph that is chosen nondeterministically. Following Blumensath and Courcelle~\cite{courcelle-blumensath}, we are interested in the transduction order on classes of graphs, where $\Cc \le \Dd$ holds if there is some \mso{} transduction from $\Dd$ to $\Cc$ which is surjective, i.e.~every graph from $\Cc$ is obtained as an output\footnote{\label{footnote:surjective-transduction} The transduction can output graphs that are not in $\Cc$. All we  care about is that it outputs all graphs in $\Cc$. This distinction is relevant when $\Cc$ is not definable in \mso{}. } from some graph in $\Dd$. This order comes in four flavours, depending on whether we use \mso{} or \cmso{}, and whether we use the adjacency or incidence representation. We indicate these flavours by annotating the order relation, e.g.~$\le^{\text{\mso{}}}_{\text{adj}}$ refers to the transduction order that uses \mso{} and the adjacency representation. All the technical work in this paper relates to the variant $\le^{\text{\mso{}}}_{\text{adj}}$ but we also discuss  the other ones in the introduction.

\subsubsection*{Sparse graphs and the incidence representation} We begin by explaining the prior work on the  transduction ordering under the incidence representation. This representation corresponds to treewidth (and thus sparse graphs), as justified by the following result of Courcelle and Engelfriet~\cite{courcelle1995logical}:
\begin{equation}
    \label{eq:bounded-treewidth-transduced-from-trees}
\text{$\Cc$ has bounded treewidth}
\quad \Leftrightarrow \quad 
\Cc \le^{\mso{}}_{\text{inc}} \text{Trees}.
\end{equation}
The above equivalence would remain valid if \mso{} was replaced with \cmso{}. In fact, all results about the transduction order under  incidence representation that we discuss here are true for both \mso{} and \cmso{}~\cite[p.~26]{courcelle-blumensath}, so we no longer indicate the logic in the following discussion. Unbounded treewidth can also be described in terms of transductions, namely classes of unbounded treewidth are exactly those that can transduce all graphs (and are thus maximal, since nothing more can be transduced): 
\begin{equation}
    \label{eq:unbounded-treewidth-transduces-all-graphs}
    \text{$\Cc$ has unbounded treewidth}
    \quad \Leftrightarrow \quad 
    \text{Graphs} \le_{\text{inc}} \Cc.
    \end{equation}
This is a consequence of  the  Grid Minor Theorem of Robertson and Seymour~\cite{robertson1986graph}, since graph minors are a special case of \mso{} transductions, and general graphs can be transduced from grids. (These observations are due to 
Seese~\cite{seese1991structure}.) Putting together~\eqref{eq:bounded-treewidth-transduced-from-trees} and~\eqref{eq:unbounded-treewidth-transduces-all-graphs}, we get a  dichotomy: there is nothing between trees and graphs in the transduction order under the incidence representation. 

Let us now examine classes that are strictly below the class of all trees, for example the class of all paths. Blumensath and  Courcelle~\cite{courcelle-blumensath} gave a complete classification of such classes:

\begin{theorem}\label{thm:old-transduction-order}
    Let $\le$ be the transduction ordering, using \mso{} and the incidence representation. Then
    every infinite class of graphs is equivalent to exactly one of the following classes: 
    \begin{align*}
        \myunderbrace{\text{Trees}_0 < \text{Trees}_1 < \text{Trees}_2}{trees of height at most $k=0,1,2,\ldots$}  < \cdots < \text{Paths} < \text{Trees} < \text{Graphs.}
    \end{align*}
    The order does not change if we use \cmso{} instead of \mso{}.
\end{theorem}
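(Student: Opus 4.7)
The strategy is a case analysis of $\Cc$ by increasingly refined structural parameters, combined with separate arguments for the strict inequalities. For the classification direction, I would proceed through the following chain. First, if $\Cc$ has unbounded treewidth, the Grid Minor Theorem yields grids of unbounded size as minors of $\Cc$; since grids can encode arbitrary graphs via \mso transduction (Seese's observation), we obtain $\text{Graphs} \le \Cc$, and thus $\Cc \equiv \text{Graphs}$. Otherwise $\Cc$ has bounded treewidth, so by~\eqref{eq:bounded-treewidth-transduced-from-trees} we have $\Cc \le \text{Trees}$. If additionally $\Cc$ has unbounded pathwidth, the Pathwidth Theorem gives all trees as minors, yielding $\text{Trees} \le \Cc$ and hence $\Cc \equiv \text{Trees}$. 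Assume then that $\Cc$ has bounded pathwidth; an analogue of~\eqref{eq:bounded-treewidth-transduced-from-trees} using path decompositions instead of tree decompositions gives $\Cc \le \text{Paths}$.

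Within the bounded pathwidth regime, the relevant parameter is tree-depth. If $\Cc$ has unbounded tree-depth, then by the classical characterization of tree-depth in terms of long paths, $\Cc$ contains arbitrarily long paths as subgraphs, so $\text{Paths} \le \Cc$ and $\Cc \equiv \text{Paths}$. If $\Cc$ has tree-depth bounded by some $d$, the elimination-forest characterization of tree-depth produces an \mso transduction from $\text{Trees}_{d-1}$ onto $\Cc$, giving $\Cc \le \text{Trees}_{d-1}$; we then take the least such $k$ to obtain $\Cc \equiv \text{Trees}_k$.

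For the strict inequalities, $\text{Trees} < \text{Graphs}$, $\text{Paths} < \text{Trees}$, and $\text{Trees}_k < \text{Paths}$ are each settled by invoking a structural invariant that is preserved (or monotonically controlled) by \mso transductions: bounded treewidth separates Trees from Graphs, bounded pathwidth separates Paths from Trees, and bounded tree-depth separates $\text{Trees}_k$ from Paths. I expect the main technical obstacle to be the strictness $\text{Trees}_k < \text{Trees}_{k+1}$: one must show that any \mso transduction maps a class of tree-depth at most $d$ to a class of tree-depth at most $f(d, \tau)$ for some function $f$ depending on $d$ and on the transduction $\tau$. This is done by a compositional / Ehrenfeucht--Fra\"iss\'e-style argument on \mso types over bounded-depth trees, showing that the \mso types of height $k$ trees form a finite structure whose ``unfolding'' cannot define height $k+1$ trees.

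Finally, the claim that \mso and \cmso induce the same order follows from the well-known fact that modulo counting adds no expressive power to \mso on classes of bounded treewidth: all named classes except $\text{Graphs}$ lie within bounded treewidth, and $\text{Graphs}$ is by definition the maximum element, so the hierarchy is unchanged.
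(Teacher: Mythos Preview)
The paper does not prove this theorem; it is quoted as a prior result of Blumensath and Courcelle~\cite{courcelle-blumensath}, so there is no ``paper's own proof'' to compare against. That said, your sketch has two genuine gaps worth flagging.

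The main gap is in the bounded tree-depth case. You take the least $k$ with $\Cc \le \text{Trees}_k$ and conclude $\Cc \equiv \text{Trees}_k$, but this requires $\text{Trees}_k \le \Cc$, which does not follow from minimality of $k$ alone. Knowing that $\Cc \not\le \text{Trees}_{k-1}$ tells you only that $\Cc$ is not dominated by $\text{Trees}_{k-1}$; it does not say $\Cc$ dominates $\text{Trees}_k$. What is actually needed is a dichotomy at each level: for every class $\Cc$ and every $k$, either $\Cc \le \text{Trees}_k$ or $\text{Trees}_{k+1} \le \Cc$. That dichotomy is the real content of the Blumensath--Courcelle argument, and it is not a formality: it requires a careful analysis (via composition of theories on bounded-height trees) of what can be transduced out of a class that fails to embed into $\text{Trees}_k$. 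Your later paragraph on $\text{Trees}_k < \text{Trees}_{k+1}$ gestures at the right tool (compositional control of \mso types on shallow trees), but that tool has to be deployed for the classification itself, not only for strictness.

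The second gap is the \mso versus \cmso claim. Your justification (``modulo counting adds no expressive power to \mso on classes of bounded treewidth'') is false as stated: \cmso is strictly more expressive than \mso already on trees, since parity of the number of leaves is \cmso-definable but not \mso-definable. The coincidence of the two transduction orders is a statement about which \emph{classes} can be reached, not about formula-level expressiveness, and it requires a separate argument showing that the extra counting power does not let one jump levels in the hierarchy.
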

They also asked if the same situation holds under the adjacency representation~\cite[Open Problem 9.3]{courcelle-blumensath}. In this paper, we give a positive answer to this question (see Theorem~\ref{thm:new-transduction-order} below).

\subsubsection*{Adjacency representation} Let us now discuss the transduction order under the adjacency representation. This  corresponds to cliquewidth and dense graphs, thanks to the following analogue of~\eqref{eq:bounded-treewidth-transduced-from-trees}, which was also proved by  
Courcelle and Engelfriet~\cite{courcelle1995logical}:
\begin{equation}
    \label{eq:bounded-cliquewidth-transduced-from-trees}
\text{$\Cc$ has bounded cliquewidth}
\quad \Leftrightarrow \quad 
\Cc \le^{\mso{}}_{\text{adj}} \text{Trees}.
\end{equation}
As in the case of treewidth, the above equivalence would remain valid if we replaced \mso{} with \cmso{}. An analogue of~\eqref{eq:unbounded-treewidth-transduces-all-graphs} also holds for dense graphs, if we allow counting:
\begin{equation}
    \label{eq:unbounded-cliquewidth-transduces-all-graphs}
    \text{$\Cc$ has unbounded cliquewidth}
    \quad \Leftrightarrow \quad 
    \text{Graphs} \le^{\text{\cmso{}}}_{\text{adj}} \Cc.
\end{equation}
The above equivalence  was proved by Courcelle and Oum~\cite{courcelle2007vertex}. The equivalence uses  modulo counting, since it is based on  a result of Geelen, Gerards and Whittle~\cite{geelen2007excluding}, which extends the Grid Minor Theorem from graphs to  matroids over some finite field, which necessitates counting in the field. Whether or not the equivalence~\eqref{eq:unbounded-cliquewidth-transduces-all-graphs} can be strengthened to use \mso{} instead of \cmso{} is essentially the same as the Seese Conjecture, and we do not address this question here.  
Therefore, up to using \cmso{} instead of \mso{}, the equivalences~\eqref{eq:bounded-cliquewidth-transduced-from-trees} and~\eqref{eq:unbounded-cliquewidth-transduces-all-graphs} tell us that there is nothing strictly between trees and general graphs. 

The next question is about what happens below trees. Here is where our results come in. (In the discussion below, all results use \mso{} and not \cmso{}, which makes them stronger.) 
Since classes that are transduced from paths are exactly those of bounded linear cliquewidth, Item~\ref{item:mso} from Corollary~\ref{cor:main} says that there is nothing between paths and trees.
Unlike the  dichotomies discussed before, we cannot call upon an existing result in structural graph theory, since there is no known dense analogue of the Pathwidth Theorem. Therefore, the main technical contribution of this paper is proving such a result.

Let us now see what happens below paths. As shown by Boja\'nczyk, Grohe and Pilipczuk~\cite{linearcliquewidth2021}, if a class of graphs has bounded linear cliquewidth, then there is an \mso{} transduction that produces the appropriate tree decompositions. This implies that every class of bounded linear cliquewidth is equivalent, in the transduction order under the adjacency representation, to some subclass of trees. If we restrict to trees, the transduction  ordering does not depend on whether we use adjacency/incidence or \mso{}/\cmso{}. Therefore, we can use the results of Blumensath and Courcelle for classes that are smaller or equal to the class of paths. Summing up, we conclude from Item~\ref{item:mso} in Corollary~\ref{cor:main} that the transduction ordering is the same\footnote{By ``same'' we mean that the classes mentioned in the statement continue to represent all other classes. The placement of the other classes might change: for instance cliques are high in incidence ordering, but low in the adjacency ordering.} in the adjacency case as it was in the incidence case:

\begin{theorem}\label{thm:new-transduction-order}
    Let $\le$ be the transduction ordering, using \cmso{} and the adjacency representation. Then
    every infinite class of graphs is equivalent to exactly one of the following classes: 
    \begin{align*}
        \text{Trees}_0 < \text{Trees}_1 < \text{Trees}_2  < \cdots < \text{Paths} < \text{Trees} < \text{Graphs}.
    \end{align*}
    Furthermore, for classes up to and including trees, the order does not change if we use \mso{} instead of \cmso{}.
\end{theorem}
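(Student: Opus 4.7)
The plan is to combine Theorem~\ref{thm:main} with the external results already recalled in the introduction. Under $\le^{\text{\cmso}}_{\text{adj}}$, I would classify each graph class $\Cc$ according to its cliquewidth and linear cliquewidth.

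If $\Cc$ has unbounded cliquewidth, then $\text{Graphs} \le^{\text{\cmso}}_{\text{adj}} \Cc$ by the Courcelle--Oum equivalence~\eqref{eq:unbounded-cliquewidth-transduces-all-graphs}, so $\Cc \equiv \text{Graphs}$. If $\Cc$ has bounded cliquewidth but unbounded linear cliquewidth, then $\Cc \le^{\text{\mso}}_{\text{adj}} \text{Trees}$ by~\eqref{eq:bounded-cliquewidth-transduced-from-trees} and $\text{Trees} \le^{\text{\mso}}_{\text{adj}} \Cc$ by Theorem~\ref{thm:main}, so $\Cc \equiv \text{Trees}$, already with plain \mso. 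If $\Cc$ has bounded linear cliquewidth, then by the Boja\'nczyk--Grohe--Pilipczuk result recalled in the introduction, $\Cc$ is $\le^{\text{\mso}}_{\text{adj}}$-equivalent to some class of trees, since its tree decompositions can be \mso-transduced in the adjacency representation and the original graph can be reconstructed from a decomposition.

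It remains to classify subclasses of trees in the adjacency order. Here I would invoke the observation, also noted in the introduction, that on subclasses of trees all four flavours of the transduction order coincide: the incidence and adjacency representations of a tree are mutually \mso-interpretable, since each edge is determined by its pair of endpoints and vice versa, and modulo counting offers no extra power on trees via the usual tree-automata encoding. Theorem~\ref{thm:old-transduction-order} of Blumensath and Courcelle then yields the chain $\text{Trees}_0 < \text{Trees}_1 < \cdots < \text{Paths} < \text{Trees}$, which together with the three cases above produces the full ordering claimed in the theorem.

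Finally, I would check that the two new edges in the chain are strict. For $\text{Paths} < \text{Trees}$, note that $\le^{\text{\mso}}_{\text{adj}}$-transductions preserve bounded linear cliquewidth, while complete binary trees witness that $\text{Trees}$ has unbounded linear cliquewidth; hence $\text{Trees} \not\le^{\text{\mso}}_{\text{adj}} \text{Paths}$. For $\text{Trees} < \text{Graphs}$, the \cmso\ variant of~\eqref{eq:bounded-cliquewidth-transduced-from-trees} forces any class $\le^{\text{\cmso}}_{\text{adj}} \text{Trees}$ to have bounded cliquewidth, which $\text{Graphs}$ does not. Assuming Theorem~\ref{thm:main} and the cited prior work, the whole argument is essentially bookkeeping; the genuine technical obstacle, and the only non-routine input, is Theorem~\ref{thm:main} itself.
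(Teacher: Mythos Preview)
Your proposal is correct and follows essentially the same route the paper takes in the introduction: split on unbounded cliquewidth (use~\eqref{eq:unbounded-cliquewidth-transduces-all-graphs}), bounded cliquewidth but unbounded linear cliquewidth (use~\eqref{eq:bounded-cliquewidth-transduced-from-trees} and Theorem~\ref{thm:main}), and bounded linear cliquewidth (use~\cite{linearcliquewidth2021} to reduce to subclasses of trees, then Theorem~\ref{thm:old-transduction-order}). One small caveat: your justification that ``modulo counting offers no extra power on trees via the usual tree-automata encoding'' is not quite right as a statement about the logics themselves (\cmso is strictly more expressive than \mso even on trees), but the conclusion you need---that the four transduction orders coincide on classes of trees---is exactly what the paper cites from~\cite{courcelle-blumensath}, so the argument goes through.
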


\subsubsection*{Outline of the paper}
Although compositionality of \mso{} is heavily relied on as a technical tool, the bulk of our analysis is more combinatorics than logics.
Section~\ref{sec:preliminaries} introduces cliquewidth and tree decompositions (in this paper, these refer to decompositions for cliquewidth) as well as \mso{}.
Section~\ref{sec:dichotomy} introduces the tree-like obstructions for linear cliquewidth, formally states our main structural result (Theorem~\ref{thm:main}) and proposes a suitably modified version that can be proved by induction which we call the Dichotomy Lemma.

The next three sections prove the Dichotomy Lemma.
Section~\ref{sec:preparation} reduces to a special case of \emph{uniform} tree decompositions where all nodes behave similarly to one another.
This is stated in terms of \mso{} theories of some fixed quantifier rank, and is guaranteed by applying a tree version of the Factorisation Forest Theorem by Colcombet~\cite{simonFactorisationForestsFinite1990,colcombetCombinatorialTheoremTrees2007}.
We also clean up typical edges (this is called normalisation), and reduce to the connected case.

Section~\ref{sec:dichotomy-locals} proves the Dichotomy Lemma in the uniform sparse case, employing ideas similar to the proof of~\cite{bojanczykDefinabilityEqualsRecognizability2016a} that graphs of bounded treewidth have definable tree decompositions.
Then Section~\ref{sec:dichotomy-non-local} establishes the result in the general uniform case.
The main new phenomenon that arises in the present paper, as compared to~\cite{bojanczykDefinabilityEqualsRecognizability2016a}, is the presence of half-graphs, and most of the original material consists of the analysis of such half-graphs.

Finally, Section~\ref{sec:obstructions} formally introduces logical transductions and proves that the tree-like obstructions \fo{} transduce classes containing a subdivision of every binary tree (and therefore they also \mso{} transduce the class of all trees), establishing Corollary~\ref{cor:main}.

\section{Preliminaries}
\label{sec:preliminaries}
In this section, we give  formal definitions of the standard notions used in this paper, namely logic, cliquewidth, trees and tree decompositions. 
We work with finite undirected graphs, i.e.~the edge relation is symmetric and irreflexive. When talking about a class of graphs, we assume that it is closed under graph isomorphism.

\subsection{Cliquewidth algebra and graph terms}
\label{sec:cliquewidth}

In this section, we  define cliquewidth. We do this using Courcelle's algebra that we call the \emph{cliquewidth algebra}.
This algebra operates on $k$-coloured graphs, for some fixed $k \in \set{1,2,\ldots}$.
Here a $k$-coloured graph is defined to be a graph together with a \emph{$k$-colouring}, i.e.~a function that assigns colours from  $\set{1,\ldots,k}$ to the vertices.
The sets of vertices mapped to the same colour are called \emph{colour classes}.
The colourings do not need to be surjective, so some colour classes could be empty.
From now on,  all graphs  will be coloured, so we will often refer to them simply as graphs.
Here is a picture of a 2-coloured graph:
\mypic{58}
We draw the colours using coloured rectangles, instead of simply assigning colours to the vertices. This is to help visualize the operations  in  the cliquewidth algebra.

\subsubsection*{Elements of the algebra} 
The  basic operations of the algebra are:

\begin{itemize}
    \item \emph{Constant.} For every colour there is one constant which represents a graph that has one vertex with that colour (and the other colour classes are empty).
    \item \emph{Recolouring.} There is a family of unary operations, which recolour the input graph using some prescribed function. More formally, for every function
    \begin{align*}
    f : \set{1,\ldots,k} \to \set{1,\ldots,k},
    \end{align*}
    there is a unary basic operation which applies $f$ to the colouring in the input graph. 
    
    \item \emph{Sum.} To combine graphs, we use the \emph{sum operations}.
    For every binary relation 
    \begin{align*}
    R \subseteq \set{1,\ldots,k} \times \set{1,\ldots,k},
    \end{align*}
    there is a corresponding sum operation, which takes the disjoint union of the two input $k$-coloured graphs and for every pair $(a_1,a_2) \in R$, creates an edge from every $a_1$-coloured vertex  in the first graph  to every $a_2$-coloured vertex in the second graph.
    The new colouring is the disjoint sum of the two colourings (i.e.~every vertex keeps the same colour).
\end{itemize}

This completes the definition of the cliquewidth algebra. 
The operations of the algebra can be composed, resulting in \emph{term operations}. An example is what we call a \emph{sum-and-recolour} operation, which is any composition of a sum and some recolourings (before and/or after the sum).
We will be frequently using pictures of term operations.
Here is a picture of a sum-and-recolour operation for $k=3$:
\mypic{78}
Let us explain the picture.
The horizontal coloured rectangles represent the two output colours. 
The two vertical rounded rectangles represent the arguments, with the dots representing the input colours (the dots will be called \emph{ports}). The placement of the dots in the coloured rectangles represents how an input colour is mapped to an output colour. Finally, the edges  represent the relation $R$.



\begin{definition}[Cliquewidth]\label{def:cliquewidth} A $k$-coloured graph has \emph{cliquewidth $k$} if it can be generated from the constants with sum-and-recolour operations. 
\end{definition}

A graph has cliquewidth $k$ if there is a $k$-colouring of the graph that has cliquewidth $k$.

\begin{exampl}[Cliques]
    The paradigmatic example is the class of cliques, which has cliquewidth $k=1$. To combine two cliques, we use the following sum-and-recolour operation:
    \mypic{80}
\end{exampl}

\subsubsection*{Linear cliquewidth} When creating a graph using the cliquewidth algebra, we use a tree of operations, where inner nodes use sum-and-recolour, and the leaves are constants. A special case of interest is when  the operations are aligned along a single branch of the tree; this case corresponds to linear cliquewidth, and is described below.

\begin{definition}
    [Linear cliquewidth] A \emph{basic linear operation} is any  operation that arises by taking some sum-and-recolour operation, and substituting a constant for the first argument. 
    A $k$-coloured graph has \emph{linear cliquewidth} $\leq k$ if it can be generated from basic linear operations.
\end{definition}

\begin{exampl}
    [Half-graph]\label{ex:half-graphs}
    A prominent role in our proof will be played by half-graphs~\cite[Section 1.1]{malliaris2014regularity}, as in this picture:
        \mypic{45}
    In a half-graph, there are two rows and $n$ columns, and the edges are exactly those that go from north-west to south-east.
    A half-graph encodes a linear order into an undirected graph.
    Half-graphs have linear cliquewidth 2, because they are generated by applying these two basic linear operations alternatingly:
    \mypic{155}
    In the graphic representations of these operations, the vertices that are not part of the argument are fresh vertices introduced in the output graph (more details on this below).
\end{exampl}
\subsubsection*{Term operations as graphs} As we have hinted in the pictures so far,  a term operation in the cliquewidth algebra can be represented as a graph, with some extra structure (the vertical rectangles, which correspond to arguments).
Here is another example, which has vertices that are not part of any argument:
\mypic{77}
In this paper, we will often identify term operations with their graph representations.
In the graph representation of a term operation, we have a $k$-coloured graph, together with a non-repeating list of $n$ groups (represented by the vertical rectangles) of $k$ distinguished vertices.
The groups are called \emph{arguments}, and the vertices inside each group are called \emph{ports} of the corresponding argument (the dots inside the vertical rectangles).
Vertices that are not ports are called vertices \emph{introduced} by the graph term.
Each argument induces a \emph{recolouring}, which is the map from $\{1,\dots,k\}$ to itself which assigns the colour of port $a$ in that argument, to $a$.
For port $a \in \{1,\dots,k\}$ in some argument, we sometimes say that $a$ is its \emph{input colour}, and refer to its colour (with respect to the colouring of the graph term) as its \emph{output colour}.
In our illustrations, since arguments as well as colours are displayed vertically, we will only depict graph terms where recolourings are monotone maps; however recolourings need not be monotone in general.

The number $n$ of arguments is called the \emph{arity} of the term.
Each argument has $k$ ports. 
Within each argument, there are no edges connecting the ports.
This is because once an argument is passed to a term operation, one can no longer modify the edges inside the argument, one can only create edges to other arguments and vertices.
There is a one-to-one correspondence between graph terms and term operations.


We raise the reader's attention on the fact that we do not allow copying of arguments, unlike the usual notion of term operation in universal algebra.
The set of term operations has an inductive definition.
Every basic operation is a term operation.
If we have two term operations of arity $n$ and $n'$, then we may compose them into a term operation of arity $n+n'-1$.
In terms of graph operations, this amounts to choosing one argument in the first term, and substituting it for the second term, where every port of the argument (in the first term) is replaced by a colour class of the second term.
Here is an example of such a composition:
\mypic{156}
Here, the notation $E\circ F$ makes sense because $E$ is a unary term (i.e.~of arity 1), so there is only one argument in $E$ to choose from.

We sometimes say $k$-graph terms for graph terms with $k$ colours.

\subsection{Logic and compositionality}
\label{sec:compositionality}

To express properties of graphs, we use monadic second-order logic \mso{}. This is the extension of first-order logic \fo{} which allows quantifying over sets of elements. For a detailed description of the syntax and semantics of \mso{}, see~\cite{libkin2004elements}. In the rest of this paper, when describing properties of graphs, we use the  {adjacency representation}, where the universe  is the  vertices, and there is a binary relation for the edges. We do not use the incidence representation, or the counting variant \cmso{}; these were only mentioned in the introduction to provide context.

We will be interested in \mso{} definable properties of  graph terms. Therefore, we need to be able to represent a graph term as a logical  structure.
To represent a graph term with $k$ colours and arity~$n$, we extend the adjacency representation with $k$ unary relations to represent the colours, and $nk$ constants to represent the ports of all arguments.
Hence the vocabulary of this structure depends on the type of the graph term (i.e.~the arity $n$ and the number of colours $k$).
This way, we can use \mso{} logic to describe properties of graph terms, in some fixed type. 

We now state the classical compositionality principle of \mso{}. 
The \emph{quantifier rank} of an \mso{} formula is the maximal number of quantifiers that can be found along a branch in the syntax tree of the formula.
For a quantifier rank $q \in \set{0,1,\ldots}$, define the \emph{$q$-theory} of a graph, or a graph term,  to be the set of sentences of \mso{} that have quantifier rank at most $q$ and are true in the corresponding structure.

\begin{lemma}[Compositionality] \label{lem:compositionality}
    Let $q \in \set{0,1,\ldots}$ be a quantifier rank.
    For every pair of graph terms, the $q$-theory of their composition depends only on the $q$-theories of the graph terms. 
\end{lemma}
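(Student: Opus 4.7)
The plan is to prove compositionality by structural induction on the graph term. Since every term operation decomposes into the three basic operations (constants, recolourings, and sums), it suffices to show that each of these preserves the property: the $q$-theory of the output depends only on the $q$-theories of the inputs. The base case (constants) is trivial, since a constant is a fixed finite graph with no arguments.

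For recolouring, the underlying graph is unchanged and only the colour predicates are rewritten; this is a quantifier-free interpretation of the input structure. Any \mso sentence about the output translates, by syntactic substitution of each colour atom with the appropriate boolean combination of input colour atoms, into an \mso sentence over the input of the same quantifier depth. Hence the $q$-theory of the output is computable from the $q$-theory of the input.

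For the sum operation, I would appeal to the standard Feferman--Vaught composition principle for \mso on disjoint unions: the $q$-theory of a disjoint union is a computable function of the $q$-theories of the two summands. The easiest proof is via the Ehrenfeucht--Fra\"iss\'e characterisation of $q$-theories — if Duplicator has winning strategies in the two $q$-round games on the components, she can play componentwise on the disjoint union by routing each of Spoiler's moves to the side where it was played. The extra cross-edges added by the sum (an edge between every $a_1$-coloured vertex on one side and every $a_2$-coloured vertex on the other, for $(a_1,a_2)\in R$) are a quantifier-free interpretation over the disjoint union, so they can be absorbed into the translation without any change in quantifier depth. The argument for arbitrary graph terms $T$ with arguments $G_1,\ldots,G_n$ is the same: view the substituted graph as a disjoint union of the vertex set of $T$ (with its ports marked as distinguished constants, and its colour classes retained) together with $G_1,\ldots,G_n$, and then apply a fixed quantifier-free interpretation that reads off the intended edges from the original edges of $T$, the edges of each $G_i$, and the port/colour information. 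Duplicator's combined strategy wins the $q$-round game on the output whenever she wins it on each of the $n{+}1$ pieces.

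The main obstacle is bookkeeping around the distinguished port vertices: one must make sure that the ``term piece'' of the EF game does not merely preserve the $q$-theory of the naked term but also the $q$-theory enriched with predicates that identify the ports and their colour classes, because it is exactly this data that drives the quantifier-free interpretation producing the substituted edges. This is handled by including the port constants in the signature when one defines the $q$-theory of a graph term in Section~\ref{sec:compositionality}, which was already built into the preliminaries. Once this alignment between the two notions of $q$-theory is made explicit, the inductive step is routine and the lemma follows.
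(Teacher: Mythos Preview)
Your proof is correct and follows the standard Feferman--Vaught/Ehrenfeucht--Fra\"iss\'e route. Note, however, that the paper does not actually supply a proof of this lemma: it is stated as the ``classical compositionality principle of \mso'' and left unproved, so there is nothing to compare your argument against. Your write-up is exactly the argument one would expect for this folklore result, and the care you take over the port constants in the signature is the right way to align the statement with the paper's representation of graph terms.
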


We are particularly interested in unary graph terms.
When the number of colours $k$ is fixed, such graph terms form a semigroup. By the above lemma, the function that maps such a term to its theory is a semigroup homomorphism, i.e.~we have  
\begin{align}\label{eq:homomorphism-property}
\text{($q$-theory of $E$)} \circ 
\text{($q$-theory of $F$)} =
\text{($q$-theory of $E \circ F$)},
\end{align}
for all unary graph terms $E$ and $F$ with $k$ colours.
In the above, the semigroup operation on the right is composition of unary graph terms, and the semigroup operation on the left is the semigroup operation on $q$-theories that arises from compositionality. 

Throughout the paper, it is convenient to omit $q$ for brevity.
Since $q=9$ will be large enough for all \mso{} formulas that we use, when we speak of the theory (of a graph or a graph term), we \emph{always} implicitly mean the \mso{} theory of quantifier rank $9$.

\subsection{Trees and tree decompositions} 
\label{sec:trees}

Although the main result of this paper talks about transducing graphs which are trees (i.e.~acyclic graphs), in the technical development of this paper it is more convenient to view trees as  hierarchical families of subsets, as explained in the following definition (these are sometimes called laminar sets systems).

\begin{definition}[Tree]\label{def:tree}
    A \emph{tree} over a set is a family of nonempty subsets of this set,  such that every two sets in the family are either disjoint, or one is contained in the other. One of the sets in the family must be the full set, which is called the root of the tree.
\end{definition}

The sets in a tree are called its \emph{nodes}. Here is a picture:
\mypic{50}

The trees in Definition~\ref{def:tree} do not have a sibling order, i.e.~we do not distinguish a first or a second child, and so on. 
We use the usual tree terminology, such as  \emph{leaves}, \emph{child}, \emph{parent}, \emph{descendant}, \emph{ancestor}. Unless stated otherwise, ancestors and descendants are proper, i.e.~an ancestor is a proper superset, and a descendant is a proper subset.  A \emph{chain} is defined to be a family of nodes that are totally ordered by inclusion. 
If the tree $T$ is not clear from the context, then we talk about $T$-children, $T$-chains, etc.

As defined above, trees are not graphs, but families of sets.
Of course, to each tree (in the sense used by this paper, i.e.~a hierarchical family of subsets), we can associate an acyclic graph, which we call its \emph{child graph}, where the vertices are the nodes of the tree and the edge relation connects a node to its children. For example the tree drawn above has this child graph:
\mypic{51}
To disambiguate, we will speak of child trees to refer to trees in the usual sense (i.e.~connected acyclic graphs), and rooted child trees when a root is identified.

We will frequently be extracting smaller trees from larger trees, using the following notion. 
\begin{definition}
    [Tree minor] A \emph{tree minor} of a tree $T$ is any tree that is obtained by choosing some new root $X \in T$, restricting the underlying set to $X$, and keeping only some nodes from the original tree (necessarily contained in $X$).
\end{definition}

The root and the underlying set could change when taking a tree minor. 
The notion of tree minor is compatible with the notion of graph minor used by Robertson and Seymour. More precisely, if $S$ is a tree minor of $T$, then the child graph of $S$ is a graph minor of the child graph of $T$. 




\subsubsection*{Tree decompositions}\label{sec:tree-decompositions} 

We now propose a definition of tree decompositions.
(These are tree decompositions corresponding to cliquewidth rather than treewidth, but since this is the only kind of tree decomposition used in this paper, we simply call them tree decompositions.)
One perspective on tree decompositions is that they are terms constructed using the operations of the cliquewidth algebra. In this paper, we use a slightly different perspective. We think of a tree decomposition as being  a graph equipped with a family of subsets, which forms a tree, together with some extra colouring information that allows us to view the nodes as coloured graphs.
This point of view will be more amenable to various extraction processes that we perform.
(In particular, they are more easily compatible with the notion of tree minors.)

Given a graph and a subset $X$ of vertices, a colouring of $X$ is \emph{compatible} (with the graph) if for every $y \notin X$ and every $x,x' \in X$ with the same colour, $xy$ is an edge if and only if $x'y$ is an edge.
A $k$-recolouring is a map from $\{1,\dots,k\}$ to itself.

\begin{definition}[Tree decomposition]\label{def:tree-decomposition}
    Let $k \in \{1,2,\dots\}$.
    A \emph{$k$-tree decomposition} consists of: 
    \begin{enumerate}
        \item\label{tree-decompositions:graph} a graph; 
        \item\label{tree-decompositions:tree} a tree on its vertices;
        \item\label{tree-decompositions:colouring-for-each-node} for each node $X$ of the tree, a compatible $k$-colouring of $X$;
        \item\label{tree-decompositions:recolourings} for every two nodes $X \subset Y$, a recolouring such that for every vertex $x \in X$, the colour of $x$ in $Y$ is obtained by applying the recolouring to its colour in $X$.
    \end{enumerate}
\end{definition}

Note that in particular, Item~\ref{tree-decompositions:recolourings} implies that two vertices with the same colour in $X$ have the same colour in $Y$.
We now introduce some terminology for tree decompositions.
The graph in Item~\ref{tree-decompositions:graph} is called the \emph{underlying graph} of the tree decomposition. 
A node of the tree in Item~\ref{tree-decompositions:tree} is called a \emph{node} of the tree decomposition.
Therefore, a tree decomposition has both vertices (of the underlying graph) and nodes (which are sets of vertices). 
For each vertex, define the \emph{introducing node} to be the smallest inclusion-wise (i.e.~deepest in the tree) node that contains this vertex.
For each node $X$, the induced subgraph of $X$ refers to the $k$-coloured graph obtained from the induced subgraph (of the underlying graph) over $X$ together with the compatible $k$-colouring.



The following straightforward lemma is an immediate consequence of the definitions.
\begin{lemma}\label{lemma:cliquewidth-binary-tree-decomposition}
    A graph has cliquewidth at most $k$ if and only if it admits a $k$-tree decomposition, where: 
    \begin{enumerate}
        \item every leaf is a singleton;
        \item every non-leaf has two children;
        \item every vertex is introduced in a leaf.
    \end{enumerate}
\end{lemma}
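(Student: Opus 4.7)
The plan is to prove both directions by structural induction, using the following recurring observation: at each node $X$, a cliquewidth colouring is always a refinement of the contextual equivalence of $X$, and conversely the contextual equivalence itself is fine enough to serve as a cliquewidth colouring.

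For the ($\Rightarrow$) direction, I would take a generating expression of cliquewidth $\leq k$. Its syntax tree is already a binary tree whose leaves are constants (hence singleton vertex sets), whose internal nodes are sum-and-recolour operations, and in which every vertex is introduced at exactly one leaf. Labelling each node by the set of vertices appearing in its subtree supplies the tree required in item~\ref{tree-decompositions:tree} of Definition~\ref{def:tree-decomposition}. At each such node $X$ the subgraph induced by $X$ carries a colouring into at most $k$ colour classes coming from the cliquewidth expression. The key point is that two vertices of the same colour at $X$ necessarily have identical neighbourhoods in $V \setminus X$: any edge joining $X$ to $V \setminus X$ is added by a later sum operation that treats its two endpoints purely according to their current colour classes. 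Hence the cliquewidth colouring refines the contextual equivalence of $X$, so $X$ has at most $k$ contextual classes. Any linear order on them supplies item~\ref{tree-decomposition:order-for-each-node}.

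For the ($\Leftarrow$) direction, I would take a tree decomposition of width $k$ satisfying the three bulleted properties and build a generating expression by induction on the tree. A leaf produces the constant of sort~$1$. At an internal node $X$ with children $X_1, X_2$ of ranks $k_1, k_2 \leq k$, the inductive hypothesis supplies expressions for the coloured subgraphs on $X_1$ and $X_2$; I would combine them by a single sum-and-recolour operation of type $k_1 \times k_2 \to k_X$. The sum component uses the relation $R \subseteq \{1,\ldots,k_1\} \times \{1,\ldots,k_2\}$ where $(i,j) \in R$ iff some (equivalently, every) vertex of contextual class~$i$ in $X_1$ is adjacent to some (equivalently, every) vertex of contextual class~$j$ in $X_2$. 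The recolouring is the surjective map sending each contextual class of $X_1$ or $X_2$ to the unique contextual class of $X$ containing it. Assembled from the leaves up to the root $V$, this yields the desired expression; at the root, rank is~$1$ because $V \setminus V$ is empty.

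The main (and only) technical point is checking that both $R$ and the recolouring are well defined, i.e.\ functions of their stated inputs rather than relations. Both reduce to the monotonicity principle: for $Y \subseteq Y'$, ``same neighbours in $V \setminus Y$'' implies ``same neighbours in $V \setminus Y'$''. Applied with $Y = X_1$ and $Y' = X$ (and symmetrically for $X_2$), it shows that each contextual class of a child is contained in a single contextual class of the parent, giving the recolouring. Applied to fix representatives on both sides, it shows that adjacency between a vertex in $X_1$ and a vertex in $X_2$ depends only on their respective contextual classes, giving $R$. Once this monotonicity is made explicit, both directions of the equivalence fall out essentially by definition, with cliquewidth colourings and contextual partitions bounding one another from above and below and lining up precisely at the width~$k$ boundary.
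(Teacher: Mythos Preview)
Your proposal is correct and is exactly the standard unfolding of the definitions that the paper has in mind; the paper itself gives no proof beyond calling the lemma ``an immediate consequence of the definitions.'' Your two directions, and in particular the monotonicity observation that contextual classes of a child refine into contextual classes of the parent, are precisely what is needed, and your care about well-definedness of $R$ and the recolouring is appropriate. One small remark: read ``width $k$'' in the statement as ``width at most $k$'' (as is implicit throughout the paper), since the tree decomposition you build in the forward direction may have width strictly less than $k$.
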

 \begin{proof}
     A tree decomposition as in the assumptions of the lemma can be seen as a term of the cliquewidth algebra, where leaves correspond to constants, and non-leaves correspond to the sum-and-recolour operations. 
 \end{proof}

A similar result holds for linear cliquewidth.
In this case, the tree decompositions are linear, which means that all nodes are on a single branch, and furthermore every node introduces at most one vertex. 

In the proof of the main theorem, we will start with a tree decomposition as in Lemma~\ref{lemma:cliquewidth-binary-tree-decomposition}.
Then, we will subject it to various extraction processes, during which the properties from the lemma will be lost.
Nodes will have an unbounded number of children, and they will introduce an unbounded number of vertices. 
When extracting parts of a tree decomposition, we will be using the following notion.

\begin{definition}[Sub-decomposition] \label{def:sub-decompositions}
    A sub-decomposition of a $k$-tree decomposition is a $k$-tree decomposition obtained as follows.
    Take some tree minor of the tree.
    In the graph, keep only the vertices from the root node of the minor, and in the tree, keep only the nodes from the minor.
    The new colourings are induced, and the recolourings stay the same.
\end{definition}

A \emph{local sub-decomposition} is a sub-decomposition consisting of some node and all its children.
(In local sub-decompositions, the tree has height 1.)
Note that the underlying graph of a sub-decomposition of $T$ is an induced subgraph of the underlying graph of $T$.

We explained above how to each node $X$ in a tree decomposition, we can associate a corresponding induced subgraph.
The following notion will be crucial in the paper.

\begin{definition}[Context]\label{def:context}
    A \emph{context} in a tree decomposition is a pair of nodes $X \subset Y$ where one is a strict descendant of the other.
\end{definition}

To a given context $X \subset Y$ in a $k$-tree decomposition, we associate a unary graph term of type $k \to k$ which is obtained from the subgraph induced by $Y$ by contracting each colour class in $X$ into a single port, and colouring it according to the recolouring from $X$ to $Y$.
We call it the graph term \emph{induced} by the context.
Intuitively, this graph term describes the part of the graph between $X$ and $Y$.

More generally, a \emph{multicontext} is given by nodes $X_1,\ldots,X_n \subset Y$ where the $X_i$'s are pairwise disjoint.
Such a multicontext induces, in the natural way, a graph term of type
\begin{align*}
\myunderbrace{k \times \cdots \times k}{$n$ times} \to k.
\end{align*}
Note that if we apply the corresponding operation to the subgraphs induced by the nodes $X_1,\ldots,X_n$, then we get the subgraph induced by the node $X$. 
When $n=2$, i.e.~when the multicontext is of the form $X,X' \subset Y$, we say that it is a \emph{bicontext}; these will play a special role.

Given a multicontext $X_1,\dots,X_n \subset Y$, we say that a vertex is \emph{introduced} in the multicontext if it belongs to $Y \setminus (X_1 \cup \dots \cup X_n)$.
This definition also applies to contexts.
Note that it is consistent with the same notion applied to graph terms: a vertex is introduced in a multicontext if and only if it corresponds to an introduced vertex in the graph term induced by the multicontext.

\section{Patterns, obstructions, main result and Dichotomy Lemma}\label{sec:dichotomy}

In this section we introduce some technology required for presenting the obstructions, then state our main result, and then propose a more precise statement which we call the Dichotomy Lemma, that will be more amenable to an inductive proof.

\subsection{Templates and patterns}

The idea behind the definition of patterns is to generate graphs resulting from very regular binary tree decompositions, i.e.~tree decompositions where the trees are binary, and for every node $Y$ with two children $X,X'$, the bicontext will correspond to a fixed template.
Patterns will have a double purpose.
\begin{itemize}
    \item On one hand, they will allow us to define obstructions for graphs of bounded cliquewidth and unbounded linear cliquewidth.
    This is captured by the notion of \emph{class generated by a pattern} which we now work towards introducing.
    \item On the other hand, patterns will give us a convenient tool to extract these obstructions from other classes of decompositions; this is permitted by the notion of \emph{covering a pattern} which will be introduced in Section~\ref{sec:dichotomy-non-local}.
\end{itemize}

There is one technical caveat that leads to the definition being a bit cumbersome: in some cases we will need templates to be able to represent isolated paths of arbitrary length.
We need these paths to be isolated not only in the template, but also in the graphs generated by repeated applications of the template, which will require some assumptions on the colours allowed in these paths.

\subsubsection*{Templates}

In a graph, we say that a vertex is \emph{isolated} if it does not have any incident edge, and that a path is \emph{isolated} if its inner vertices have degree two.
We say that a colour $a$ is \emph{isolated} in a graph term if in every argument, the port with input colour $a$ is an isolated vertex and has output colour~$a$.

Define a \emph{$k$-template} to be a binary $k$-graph term where some edges are special, which intuitively means that we will be allowed to replace them with isolated paths.
There could also be no special edges, in which case we say that the template is \emph{deterministic} (these are just binary graph terms).

We say that a binary $k$-graph term \emph{matches} a $k$-template if it is obtained from the template as follows.
For every special edge, we replace it with an isolated path.
These paths are called the \emph{extra paths}, and the inner vertices from these paths are called \emph{extra vertices}.
We require that the extra vertices have isolated colours.
It is allowed to have extra paths comprised of just one edge (these do not have any extra vertices).

In pictures, special edges are represented by wiggly lines.
Here is a picture representing a template, where the second colour (displayed in red) is isolated, and a graph term matching it:
\mypic{149}
For templates that are deterministic, there are no extra vertices, therefore there is a unique graph term matching the template, which is just the template itself.

The following lemma justifies the definition above.

\begin{lemma}
    Consider some template, together with a tree decomposition where the tree is binary, and such that for every node $Y$, the two children $X,X'$ are such that the bicontext induced by $X,X' \subset Y$ matches the template.
    Then extra vertices in these bicontexts have degree two in the underlying graph.
\end{lemma}

\begin{proof}
    Consider a node $Y$ with children $X,X' \subset Y$ such that the bicontext $E$ induced by $X,X' \subset Y$ matches the template, and consider one of the extra paths in $E$.
    The inner vertices from this extra path have no edges to vertices introduced in non-proper descendants of $Y$ because the path is isolated in $E$.
    Now consider a vertex $z$ introduced in a node which is not a non-proper descendant of $Y$; this means that $z$ belongs to some non-proper descendant of a node $Z$ which is a proper ancestor of $Y$.
    
    We claim that every extra vertex in $E$ has an isolated colour in $Z$.
    Let $Y=Y_0 \subset Y_1 \subset \dots \subset Z=Y_n$ be such that for every $i$, $Y_i$ is a child of $Y_{i+1}$.
    Then extra vertices have an isolated colour in $Y_0$ by definition of matching a template.
    Moreover, since for every $i$, the other child $Y'$ of $Y_{i+1}$ satisfies that the binary term induced by either $Y',Y_{i} \subset Y_{i+1}$ or $Y_i,Y' \subset Y_{i+1}$ matches the template, a quick induction establishes that the colours of the extra vertices in $Y_i$ are independent of $i$, and isolated in all these graph terms.
    Therefore, the corresponding ports in $Z$ are isolated vertices, which proves the claim.
\end{proof}

\subsubsection*{Patterns}

A \emph{$k$-pattern} is given by
\begin{itemize}
    \item a $k$-coloured graph called the \emph{initial graph} of the pattern; and
    \item a $k$-template called the \emph{template} of the pattern.
\end{itemize}
We say that the pattern is deterministic if its template is.

Throughout the paper, we consider the ordered full binary tree of depth $n$, which is the binary tree where all leaves have depth $n$ and every node has an identified ordered on its two children.
We say that a tree decomposition is \emph{generated} by a pattern if:
\begin{itemize}
    \item its tree is a full binary tree of depth $n$ for some $n$, which we call the depth of the decomposition;
    \item the graph induced by every leaf is the initial graph of the pattern; and
    \item for every non-leaf node $Y$ with children $X$ and $X'$ in this order, the graph induced by the bicontext $X,X' \subset Y$ matches the inner term of the pattern.
\end{itemize}
Note that if the pattern is deterministic, there is a unique graph term matching the template (i.e.~the template itself), and for every $n$ there is a unique tree decomposition of depth $n$ generated by the pattern.
Otherwise, there is some non-determinism in the way special edges are replaced by paths.

We say that a class of graphs is \emph{generated by a pattern} if for every $n \in \{1,2,\dots\}$, there is a tree decomposition which is generated by the pattern with depth $n$ whose underlying graph belongs to the class.
The obstructions from our main theorem will be described in this way.
Before moving on to listing the obstructions, we illustrate these notions with a few examples.

\begin{exampl}[Comparability tree]\label{ex:comparability-tree}
    For a tree, define its \emph{comparability graph} as follows. The vertices are the nodes of the tree, and the edge relation connects nodes that are related by the descendant relation in either way. Here is a picture:
    \mypic{31} 

Here is a deterministic pattern with one colour that generates comparability graphs of binary trees:
\mypic{46}
Since the template is a deterministic graph term (it has no special edges), there is a unique graph term matching it.
The associated binary operation inputs two graphs, takes their disjoint union without any edges between them (because there is no edge connecting the arguments), and adds a new vertex (corresponding to the introduced vertex) that is connected by edges to the two graphs (because there is an edge from the introduced vertex of the template to both ports).
Therefore, this pattern generates comparability graphs of full binary trees. 
\end{exampl}

\begin{exampl}[Child tree]\label{ex:child-tree}
Let us now design a pattern that generates child trees of full binary trees.
The pattern will have two colours.
One colour (in blue) is used to temporarily store the root vertex, and the other colour (in red) is used to store the remaining vertices. Here is the pattern:
\mypic{43}
Observe that the binary operation uses a non-trivial recolouring: the blue vertices of its input become red after applying the binary operation.

Here is a non-deterministic variant:
\mypic{150}
In this case, the graphs generated by the pattern are subdivided full binary child trees.
\end{exampl}

\begin{exampl}\label{ex:non-branching-templates}
    In the above two examples, the tree structure was visible in the generated graph.
    This is not necessarily the case for all patterns.
    Consider the following three patterns (which have the same initial graph):
\mypic{35}
For the first alternative, the generated graphs have no edges. For the second and third   alternative, the generated graphs will consist of disjoint unions of cliques.
\end{exampl}

\begin{exampl}[Half-graph]\label{ex:half-graph-template}
Recall the half-graphs from Example~\ref{ex:half-graphs}. 
Here is a pattern that generates half-graphs:
\mypic{44}
More precisely, the graph generated by the pattern and depth $n$ is a half-graph where each leaf is of the full binary tree is replaced by one vertex of each colour, and these are ordered as a half-graph following an ordering of the tree.

There are other ways to generate half-graphs.
For example, we could replace the template by any of the following ones:
\mypic{55}
In these cases, inner nodes of the tree will also create vertices in the generated half-graph.
Note that these examples, the vertices introduced by inner nodes (i.e.~non-port vertices in the templates) respect the structure of the half graph; vertices which do not respect the structure will play an important role in our proof (see broken half-graph obstructions below).
\end{exampl}

\subsection{Obstructions}

We are now ready to define the obstructions that will appear in our main theorem.
These are some well chosen patterns which give different bounded cliquewidth encodings of trees.
This is formalised in Section~\ref{sec:obstructions} where we show that for each obstruction, classes generated by that obstruction \mso transduce the class of all trees, and \fo transduce a class containing subdivisions of all trees.
We will have three families of obstructions: sparse obstructions, stable obstructions, and broken half-graph obstructions.

\subsubsection*{Sparse obstructions.}
Consider the following two sparse obstructions $A$ and $B$.
\mypic{151}
Note that in both cases, only the second (red) colour is isolated, which means that for binary graph terms matching the templates, extra vertices from the paths replacing the special edges will have this colour.
The graphs generated by pattern $A$ are subdivisions of binary child trees.
The graphs generated by pattern $B$ are line graphs of subdivisions of binary child trees.

\subsubsection*{Stable obstructions.} 
In obstructions duo (such as $D$ below) we have two versions: one where all the dotted edges are edges, and one where all the dotted edges are non-edges.
Here are the three stable obstructions $C$ and duo $D$.
\mypic{152}
The graphs generated by obstruction $C$ are comparability graphs of binary trees (see Example~\ref{ex:comparability-tree}).
The ones generated by obstruction duo $D$ are similar to comparability graphs, but with two colours: each leaf introduces a red vertex and each non-leaf node introduces a blue vertex, and blue vertices are connected to their red descendants.
The two versions of the duo correspond to whether or not

\subsubsection*{Broken half-graph obstructions.} Broken half-graph obstructions are a bit more involved.
They are all based on half-graphs (just as in Example~\ref{ex:half-graph-template}), except that we now have additional vertices introduced by non-leaf nodes which do not respect the half-graph behaviour.
Here is the first such obstruction:
\mypic{153}
Consider a graph generated by obstruction $E$.
It is bipartite: the only edges are between blue and red vertices.
Each node produces two vertices, one blue and one red, which are connected by an edge.
The vertices introduced by a leaf behave just as in Example~\ref{ex:half-graph-template}: they form a half-graph.

Now consider the blue vertex $b$ introduced by a node $X$, and observe that in the graph, it is disconnected to every vertex introduced by proper descendants of $X$ (because it does not have connections to the ports in the template).
Intuitively, relatively to the half-graph on the leaves, this vertex behaves similarly to the top-right port (which corresponds to the right child of $X$), which is why we place it on the right.
Similarly, the red vertex $r$ introduced by $X$ behaves like the bottom-left port, so we place it on the left.

Therefore the fact that there is an edge between $r$ and $b$ breaks the half-graph pattern.
This will allow us to recover (i.e.~transduce) the tree structure from graphs generated by the obstruction (see Section~\ref{sec:obstructions} for details).

Here are the other deterministic broken half-graph obstructions.

\mypic{185}

For those that are labelled as duo, we have two different patterns: one where all the dotted edges are present, and one where they are absent.

We also need to include some additional nondeterministic versions.

\mypic{186}

In every case, the only isolated colour is the one coloured in light gray.
Observe that each obstruction has a similar behaviour as obstruction $E$ above: there are some half-graphs over the leaves, and some vertices in inner nodes which are not consistent with the behaviour of the half-graph.
For example, obstruction $E'$ is an analogue of $E$, except that the edge connecting the two vertices introduced in an inner node (this edge is crucial since it is the one that disrespects the half-graph), is replaced by a path.
We let $\Oo$ denote the set of $16$ obstructions presented in this section.

\subsection{Statement of the main result} 

Before stating our main result, we need to introduce flips.
A $k$-colour-graph is a graph $F$ whose vertices are the colours $\{1,\dots,k\}$, and where loops are allowed.

\begin{definition}[Flips]\label{def:colour-flip}
    Consider a $k$-coloured graph $G$ and a $k$-colour-graph $F$.
    The \emph{$F$-flip} of $G$ is obtained from $G$ by switching the connections (between edge and non-edge) of all pairs of vertices $x,y$ whose colours define an edge in $F$.
    A \emph{colour flip} of $G$ is any graph obtained from $G$ by performing some $F$-flip.
    A \emph{$k$-flip} of a graph is any graph obtained by choosing a $k$-colouring and performing a colour flip.
\end{definition}

The above definition also applies to graph terms in the natural way, as well as templates and patterns (the same $F$ is used for flipping the initial graph and the template).
We now state our main result.

\begin{theorem}\label{thm:main}
    Consider a class of graphs $\Cc$ of bounded cliquewidth and unbounded linear cliquewidth.
    Then there is $k \in \{1,2,\dots\}$ and there is a class generated by an obstruction from $\Oo$, such that for every graph $G$ in that class, there is a $k$-flip of $G$ which is contained as an induced subgraph of $\Cc$.
\end{theorem}

We say that a class \emph{witnesses an obstruction} $O \in \Oo$ if the conclusion of the theorem holds for obstruction $O$.

\subsection{The Dichotomy Lemma}

In this section, we propose a more precise inductive version of the statement.
We call it the Dichotomy Lemma.

The Dichotomy Lemma says that for every class of $k$-tree decompositions, either (a) its class of underlying graphs witnesses an obstruction, or otherwise (b) the tree decompositions follow a strict sequential discipline, and the tree structure can be abandoned in favour of a linear order.

\subsubsection*{Tree factorisations}
To explain alternative (b), we give some high-level ideas underlying our proof structure.
In a nutshell, alternative (b) is meant to correspond to linear cliquewidth.
When applied to tree decompositions as in Lemma~\ref{lemma:cliquewidth-binary-tree-decomposition}, the second alternative will directly imply linear cliquewidth, but we will also apply the lemma to other kinds of tree decompositions that will be obtained inductively.

The main idea for our proof will be to factorise the tree decompositions into different layers indexed by levels (which are integers), where layers with small level will zoom-in on some precise parts of the decompositions, whereas layers with larger levels are more and more global, combining together the knowledge obtained from smaller levels.
This factorisation is detailed in the next section; it is based on the notion of splits, and on a combinatorial decomposition theorem for trees from Colcombet~\cite{colcombetCombinatorialTheoremTrees2007}.

What we will gain in this process is a uniformity assumption: when restricting to a given layer, all the contexts will have a similar \mso{} theory.
In some sense, this can be thought of as a Ramsey-like extraction process, which is performed in advance, and relatively to all possible \mso{} properties (of quantifier rank up to 9, which is more than enough).

When working inductively on some layer, the layers with smaller levels will have already been dealt with.
Assuming no obstruction has been found on the smaller levels, we will therefore know by induction that layers with smaller levels are linear (this is formalised below).
On the current layer, this will translate into the assumption that the local sub-decompositions (i.e.~decompositions comprised of one node of the current layer, and its children in the current layer) are linear.
Therefore the inductive statement that we should prove, states that unless an obstruction is discovered, having linear local sub-decompositions implies being linear (see Dichotomy Lemma below).
Before presenting this statement, we should now explain what linear means.

\subsubsection*{Linearisations}

A \emph{linear preorder} is a binary relation that is transitive and total (every two elements are related in at least one direction). In other words, this is a partition of a set into parts, and a linear order on these parts, as in the following example:
\begin{align*}
\set{1,2} < \set{3} < \set{4,5,6} < \set{7}.
\end{align*}

We say that a tree is linear if it is a chain: nodes have at most one child (i.e.~there are no disjoint nodes).
Note that linear preorders are the same as linear trees.
The following definition proposes a way to transform an arbitrary tree decomposition into a linear one.

\begin{definition}[Linearisation]\label{def:linearisation}
    A \emph{linearisation} of a tree decomposition is a linear preorder on its vertices such that vertices from the same part of the order are introduced in the same node of the tree decomposition.
\end{definition}


We will be interested in linearisations which are consistent with the structure of the underlying graph.
These correspond to linearisations which can be given a structure of $k'$-tree decomposition (with a linear tree), by fixing some well-chosen $k'$-colourings.
Being able to choose such consistent $k'$-colourings amounts to satisfying the definition below.

\begin{definition}[Width of a linearisation]
    A linearisation has \emph{width} $k'$ if every prefix of the linearisation has \emph{rank} $\leq k'$, i.e.~admits a $k'$-colouring which is consistent with the graph.
\end{definition}

Typically, we will find linearisations of width $k'$ for some given $k$-tree decomposition, where $k'$ is much larger than $k$.
We say that a class of $k$-tree decompositions is \emph{linear} if there is $k'$ such that to each tree decomposition one can associate a linearisation of width $\leq k'$.
We say that a class of tree decompositions is \emph{locally linear} if the class of its local sub-decompositions is linear, i.e.~to each local sub-decomposition of a decomposition in the class, one may associate a linearisation such that the widths of the linearisations are bounded.

We are now ready to state the Dichotomy Lemma.

\begin{lemma}[Dichotomy Lemma]\label{lem:dichotomy-lemma}
    Let $\Tt$ be a class of $k$-tree decompositions.
    Then either
    \begin{enumerate}[(a)]
        \item the class of underlying graphs from $\Tt$ witnesses an obstruction; or
        \item if $\Tt$ is locally linear then it is linear.
    \end{enumerate}
\end{lemma}

The main theorem of this paper is an immediate corollary of the Dichotomy Lemma.

\begin{proof}
    [Proof of Theorem~\ref{thm:main} using Lemma~\ref{lem:dichotomy-lemma}]
    Consider a class of graphs of bounded cliquewidth and let $\Tt$ be the corresponding class of tree decompositions from Lemma~\ref{lemma:cliquewidth-binary-tree-decomposition}.
    Apply the Dichotomy Lemma.
    In the first case, the theorem is proved.
    Now observe that the local sub-decompositions have trivial linearisations of bounded width, where the two parts are the two children of a node.
    Therefore $\Tt$ is it locally linear, so in the second case, $\Tt$ is linear.
    Moreover, since each node introduces at most one vertex, the linearisations are total orders, and therefore they witness bounded linear cliquewidth by the linear version of Lemma~\ref{lemma:cliquewidth-binary-tree-decomposition}.
\end{proof}

The rest of this paper is devoted to proving the Dichotomy Lemma.
The proof is by induction over $k$, and for $k=0$ there is nothing to prove.
Therefore, from now on, we fix $k \geq 1$ and assume the result known for smaller values of $k$.

Section~\ref{sec:preparation} is preparatory: we reduce the Dichotomy Lemma to the uniform case (which was informally described above), and moreover we guarantee normalisation and connectivity.
Sections~\ref{sec:dichotomy-locals} and~\ref{sec:dichotomy-non-local} then prove the Dichotomy Lemma assuming the above reduction.


\section{Preparation: uniformity, normalisation and connectivity}
\label{sec:preparation}

This section establishes the following result.

\begin{lemma}\label{lem:reduction-uniform-normalised-connected-full}
    The Dichotomy Lemma reduces to classes of tree decompositions which are:
    \begin{itemize}
        \item uniform;
        \item normalised; and
        \item connected.
    \end{itemize}
\end{lemma}

The purpose of this section is to present these notions, and prove the reduction.

\subsection{Uniformity}

The definition of uniformity will impose several properties of a class of $k$-tree decompositions.
We start with the notion of forward-invariance.

\subsubsection*{Forward invariance}\label{sec:forward-invariant-theories}

In uniform tree decompositions we want the theories (of quantifier rank $q=9$) to be similar across the decomposition.
Let us describe this similarity in more detail.
First, we will want all nodes to have the same theory, which means that for all nodes, the corresponding induced subgraph has the same theory.
This assumption can be easily achieved by a variant of the pigeonhole principle. 

Moreover, we will also want the theories to be similar for ancestor-descendant connections between nodes, i.e.~the theories of graph terms induced by contexts.
Ideally, we would like all contexts in the tree decomposition to have the same theory.
Unfortunately, this is infeasible, and we will have to settle for a weaker property, which we call {forward invariance}, and define below. 
This property will be guaranteed by an application of a theorem of Colcombet~\cite{colcombetCombinatorialTheoremTrees2007}, which can be seen as a tree version of the Factorisation Forest Theorem~\cite{simonFactorisationForestsFinite1990}.
In the statement below, as always in the paper, we recall that we refer to \mso{} theories of quantifier rank $q=9$.

\begin{definition}[Forward invariance]\label{def:forward-invariance-of-theories}
    Consider a $k$-tree decomposition.
    We say that this tree decomposition has forward invariant theories if 
    \begin{align*}
        \text{(theory of $E$)} \circ 
        \text{(theory of $F$)} =
        \text{(theory of $E$)}
        \end{align*}
    holds for all unary graph terms $E$ and $F$ that arise from contexts in the tree decomposition.
\end{definition}

Observe that by compositionality (Lemma~\ref{lem:compositionality}), the theory in the above equation is also the same as the theory of $E \circ F$.
Intuitively speaking, the theory of the left unary graph term (which is the one whose corresponding operation is applied later, i.e.~it is the one closer to the root) dominates the theory of the right one.
Observe that we do not assume that $E$ and $F$ arise from consecutive contexts $X \subset Y \subset Z$ in the tree; in fact, $X \subset Y$ and $Y \subset Z$ are arbitrary contexts (possibly overlapping, possibly incomparable) that arise in the tree decomposition.

One useful consequence of forward invariance is that all theories are idempotent, i.e.~they satisfy $\tau \circ \tau = \tau$. 

\subsubsection*{Left ideals and invariance}

Ideally, we would like to guarantee a stronger property than forward invariance, namely we would like every context to have the same \mso{} theory.
This is not always possible, but it can be achieved for certain properties, namely ideals.
An $\mso{}$ property $P$ (of quantifier rank at most $q=9$) of unary graph terms is called a \emph{left ideal} with respect to a tree decomposition if for every pair of unary graph terms $E,F$ induced by contexts in the decomposition, such that $F$ satisfies $P$, it holds that $E \circ F$ satisfies $P$.
(Though right ideals will also make a few appearances, most ideals used in this paper will be left ideals.)

\begin{lemma}\label{lem:left-ideal-invariant}
    In a decomposition for which theories are forward invariant, every left ideal $P$ is \emph{invariant}, i.e.~either all contexts belong to the ideal, or all of them are outside it.
\end{lemma}

\begin{proof}
    Suppose that $E,F$ are unary graph terms induced by two contexts in the decomposition.
    Then we have 
    \begin{align*}
    F \models P \qquad \myunderbrace{\Rightarrow}{because $P$\\ is a left ideal} \qquad E\circ F \models P 
    \qquad \myunderbrace{\Rightarrow}{because $E \circ F = E$ by \\ forward invariance} \qquad 
    E \models P.
    \end{align*}
\end{proof}

\subsubsection*{Uniform tree decompositions} 

We are now ready to define uniform tree decompositions.

\begin{definition}[Uniform tree decomposition]\label{def:uniform-tree-decomposition} 
    A tree decomposition is called \emph{uniform} if: 
    \begin{enumerate}
        \item\label{assumption:uniform-qtheories-nodes} the theories of nodes are all the same; and
          \item\label{assumption:uniform-qtheories-contexts} the theories of contexts are forward invariant.
    \end{enumerate}
\end{definition}


The above definition refers to individual tree decompositions.
We will also talk about uniformity for a class of $k$-tree decompositions.
Such a class is called uniform if all tree decompositions in it are uniform, and furthermore the theories are the same  across the class, in the following sense.
For Item~\ref{assumption:uniform-qtheories-nodes}, there is only one theory in each tree decomposition, so we want this theory to be the same across the class.
For Item~\ref{assumption:uniform-qtheories-contexts}, we want the sets of theories of contexts to be the same across the class.


\subsubsection*{Splits}

The main ingredient to reduce to the uniform case is the tree version of factorisation forests that was proved by Colcombet~\cite{colcombetCombinatorialTheoremTrees2007}; this requires introducing splits.
The idea is to hierarchically decompose a tree into layers, such that on each layer the tree behaves in a uniform way with respect to some semigroup operation.
Such a decomposition is called a split, and is defined below.

\begin{definition}[Splits and their layers]
    For a tree, define a \emph{split} to be a function from the nodes of the tree to the positive integers (called the \emph{levels}).
    We say that two nodes are \emph{visible} in the split if they are related by the ancestor relation, they have the same value under the split, and there is no node with strictly bigger value between them.
    A \emph{layer} of the split is a set of nodes that is connected by the reflexive transitive closure of the visibility relation.
\end{definition}

Observe that each layer is itself a minor of the original tree.
The layers of a split are a partition of the tree  (since they are obtained by taking the reflexive transitive closure of some relation).
Here is a picture of a split with three examples of layers, where we represent the child-tree for better readability:
\mypic{24}




Given a tree decomposition together with a split of its tree we define its \emph{layer sub-decompositions} (of level $\ell$) to be the sub-decompositions corresponding to the layers (of level $\ell$) in the split.
Applying~\cite[Theorem 1]{colcombetCombinatorialTheoremTrees2007} to the semigroup of (quantifier rank $q=9$ \mso{}) theories of unary graph terms (see Lemma~\ref{lem:compositionality}) directly gives the following result.

\begin{theorem}{\cite[Theorem 1]{colcombetCombinatorialTheoremTrees2007}}\label{thm:colcombet}
    Consider a class of tree decompositions.
    For each tree decomposition in the class, there exists a split, such that the splits have bounded height and the layer sub-decompositions have forward invariant theories.
\end{theorem}

\subsubsection*{Gluing layer sub-decompositions}

Now that we know how to obtain forward-invariant layer sub-decomposition, we should show how to reconcile them together, in the case where each of them admits a linearisation of bounded width.
This is the purpose of the next lemma.

\begin{lemma}\label{lem:compose-linearizable}
    Let $\Tt$ be a locally linear class of tree decompositions with each one having an associated split so that the heights of the splits are bounded.
    For every level $\ell$, assume that
    \begin{center}
        (*) if the class of layer sub-decompositions of level $\ell$ is locally linear, then it is linear.
    \end{center}
    Then $\Tt$ is linear.
\end{lemma}

Note that the assumption in the lemma matches the second alternative of the Dichotomy Lemma.
We first prove a useful lemma explaining how to glue an external preorder together with internal ones on each external class in a lexicographic fashion, while maintaining a bound on the width.

\begin{lemma}\label{lem:combining-preorders}
    Consider a graph together with a preorder which we call the \emph{outer preorder}, and for each of its parts an \emph{inner preorder} defined on the part.
    Then the \emph{combined preorder}, obtained by comparing elements in different parts using the external preorder, and elements of the same parts using the corresponding internal preorder, has width at most 
    \begin{align*}
              2^{\text{ maximal width of inner preorders } + 3\cdot\text{(width of outer preorder)}  },
    \end{align*}
    where the width of an inner preorder is computed relatively to the graph induced on the corresponding part.
\end{lemma}

\begin{proof}[Proof of Lemma~\ref{lem:combining-preorders}]
    In this proof, it will be more convenient to work with the matrix rank, in the two element field, which is recalled now.
    For two subsets $U$ and $V$ of vertices in a graph, which need not be a partition, we can consider a matrix where the rows are vertices from $U$, the columns are vertices from $V$, and an entry is one or zero, depending on whether there is an edge or not.
    We can compute two quantities for this matrix
    \begin{align*}
   \myunderbrace{ \mrank(U,V) }{number of linearly \\ independent rows}
    \quad \text{and} \quad
    \myunderbrace{ \crank(U,V) }{number of  \\ distinct rows}.
    \end{align*}
    We use the latter quantity in this paper for defining ranks of subsets of vertices, but the former quantity will be more convenient in this proof, because of the symmetry of rows and columns.
    The two quantities are functionally related:
    \begin{align*}
    \mrank(U,V) 
    \quad \le \quad 
    \crank(U,V)
    \quad \le \quad 
     2^{ \mrank(U, V)}.
    \end{align*}

    We will prove a bound on $\mrank$, which will imply an exponentially worse bound on $\crank$, as used in the statement of the claim, thanks to the right inequality above.

    Consider a prefix of the combined preorder.
    Such a prefix can be split into several areas with respect to how it cuts the outer and inner preorders, as explained in this figure, where the prefix is in red:
    
    \mypic{40}
    
    Because  $\mrank$  is subadditive on both arguments, it  follows that  
    \begin{align*}
            \mrank(\text{prefix}, \text{suffix})  \leq & \ 
        \mrank(\text{outer suffix, outer prefix}) + 
        \mrank(\text{outer suffix, inner prefix})  \\ + & \ 
        \mrank(\text{inner suffix, outer prefix}) + 
        \mrank(\text{inner suffix, inner prefix})
    \end{align*}
    For the first three terms, the two sets can be separated by some prefix of the outer linearisation, and therefore the corresponding rank is bounded by the width of the outer linearisations.
    For the last one, we use the same argument, but for the inner linearisations. This leads to the claimed bound.
\end{proof}

We are now ready to prove Lemma~\ref{lem:compose-linearizable}.

\begin{proof}[Proof of Lemma~\ref{lem:compose-linearizable}]
    We prove the lemma by induction on the number of levels $\ell$.
    If there is just a single level, then the layer sub-decompositions are $\Tt$ itself, and it is assumed locally linear, so (*) concludes.
    Let $\ell >1$, assume that splits in $\Tt$ have $\ell$ levels (i.e.~indexed from 1 to $\ell$), and assume that the result is known for smaller values of $\ell$.
    In the proof below, it is often convenient to construct a single linearisation and say that it has bounded width; what we mean is that applying this construction over any tree from the class (and sometimes, ranging over nodes of such trees) gives a class of linearisations whose widths are bounded.
    If $X \subset Y \subset Z$ in some tree, we say that $Y$ is \emph{between} $X$ and $Z$.

    By induction we know that the class of layer sub-decompositions of level $< \ell$ is linear.
    Our first goal is to establish the same for layers of level $\ell$; thanks to (*) it suffices to prove local linearity.

    \begin{claim}\label{claim:local-linearity}
        The class of layer sub-decompositions of level $\ell$ is locally linear.
    \end{claim}

    \begin{claimproof}
        Consider a tree in $T$ and a layer sub-decomposition of level $\ell$, and a local sub-decomposition of that layer sub-decompositions.
        Unravelling the definitions, this consists of a node $Y$ of $T$ of level $\ell$ and all its proper descendants $X_1,\dots,X_n$ of level $\ell$ such that for every $i$, nodes between $X_i$ and $Y$ have level $<\ell$.
        We refer to nodes that are between some $X_i$ and $Y$ as \emph{intermediate} nodes; these have level $<\ell$.

        For each child $Z$ of $Y$, intermediate nodes that are non-proper descendants of $Z$ form a tree, which we call \emph{the tree below $Z$}.
        Since these nodes have level $<\ell$, the induction hypothesis gives us a linearisation for the tree below $Z$, for each child $Z$ of $Y$.
        
        Since $T$ is locally linear, the local sub-decomposition at $Y$ in $T$ (not in the layer of level $\ell$ as above) is linear.
        By applying Lemma~\ref{lem:combining-preorders} to the inner preorder being the linearisations for the tree below $Z$ for each child $Z$, and the outer preorder being the linearisation of bounded width for the local sub-decomposition at $Y$ in $T$, we get a linearisation of bounded width for the tree comprised of $Y$ together with all intermediate nodes.

        There remains to add the $X_i$'s to this tree (while extending the linearisation) which will require another application of Lemma~\ref{lem:combining-preorders} that we detail now.
        We now see the above linearisation (for $Y$ together with all intermediate nodes) as the outer preorder.
        It does not necessarily define a linearisation of the tree of height $1$ given by $X_1,\dots,X_n \subset Y$ because there may be classes that intersect two different nodes $X_i,X_j$.

        For each intermediate node $Z$ that has some children among the $X_i$'s, the assumption that $T$ is locally linear gives us a linearisation of $Z$ together with its children.
        We consider its restriction to the children of $Z$ that are $X_i$'s, and call this the inner preorder at $Z$.
        Applying Lemma~\ref{lem:combining-preorders} to the outer preorder above and each of these inner preorders gives the wanted linearisation of bounded width.
    \end{claimproof}

    Therefore for every level, the corresponding class of layer sub-decompositions is linear.
    We now prove that $\Tt$ is linear.
    Let $T$ be a tree in $\Tt$.
    
    \begin{claim}\label{claim:root-max-level-wlog}
        We may assume without loss of generality that the root of $T$ has level $\ell$. 
    \end{claim}

    \begin{claimproof}
        Assume that the result is known for trees whose root have level $\ell$.
        Let $X_1,\dots,X_n$ denote all nodes in $T$ with level $\ell$ such that their proper ancestors have level $<\ell$.
        (Note that these are pairwise disjoint.)
        We refer to nodes that are not non-proper descendants of any $X_i$'s as \emph{the prefix} of $T$.
        These have a tree structure and have level $<\ell$ therefore by induction there is a linearisation of bounded width for the prefix; call it the outer preorder.

        For each $X_i$, consider the tree comprised of non-proper descendants of $X_i$.
        The root of this tree, $X_i$, has level $\ell$, therefore this tree has a linearisation of bounded width by our assumption, which we call the inner preorder.
        We conclude by applying Lemma~\ref{lem:combining-preorders}.
    \end{claimproof}

    Therefore we assume that the root of $T$ has level $\ell$.
    Note that since $\ell$ is the maximal level, this implies that all nodes of level $\ell$ belong to the same layer $L$.
    Consider a node $Y$ in $T$ with level $\ell$.
    Say that a node $X$ is in the $Y$-factor if $Y$ is the deepest non-proper ancestor of $X$ with level $\ell$.
    Note that thanks to our assumption on the root, such factors partition the nodes.
    
    Consider the linearisation the layer sub-decomposition corresponding to $L$: by definition, for every class of this linearisation, all vertices in that class are introduced in the same node of the sub-decomposition, which means that they are introduced in nodes which belong to the same factor.
    Call this the outer-preorder.
    To turn it into a linearisation, there remains to find a linearisation for each class of the outer preorder.
    
    By repeating the first half of the proof of Claim~\ref{claim:local-linearity} (applying induction hypothesis to descendants of a common child of $Y$ and gluing them together using local linearity at $Y$ and Lemma~\ref{lem:combining-preorders}), we get a linearisation of bounded width for each factor.
    For each class of the outer preorder contained in some factor, we restrict the linearisation from the factor to the class, which gives the wanted inner preorder of bounded width.
    We conclude by using Lemma~\ref{lem:combining-preorders} once again.
\end{proof}

We are now ready to reduce the Dichotomy Lemma to the uniform case.
We state a precise statement which will be reused later on.

\begin{lemma}\label{lem:reduction-dichotomy-uniform}
    Let $\Tt$ be a class of $k$-tree decompositions.
    There are finitely many uniform classes of $k$-tree decompositions such that
    \begin{itemize}
        \item decompositions in the uniform classes are sub-decompositions of decompositions in $\Tt$; and
        \item if all the uniform classes satisfy the implication (locally linear) $\implies$ (linear), then so does $\Tt$.
    \end{itemize}
\end{lemma}
Note that the lemma implies the wanted reduction for the Dichotomy Lemma, because:
\begin{itemize}
    \item if one of the uniform classes witnesses an obstruction, then so does $\Tt$ by the first item (recall that underlying graphs of sub-decompositions are induced subgraphs),
    \item otherwise, the Dichotomy Lemma applied to uniform classes tells us that all the uniform classes satisfy the above implication, and therefore so does $\Tt$ by the second item.
\end{itemize}

\begin{proof}[Proof of Lemma~\ref{lem:reduction-dichotomy-uniform}]
    By Theorem~\ref{thm:colcombet}, to each tree decomposition in $\Tt$ we can associate a split so that the splits have bounded height, and in the layer sub-decompositions, the (quantifier rank $q=9$) theories of contexts are forward invariant.
    By choosing an arbitrary order on theories and labelling each node with its theory, we can further refine the split so that on each layer, all nodes have the same theory.

    For each level $\ell$ of the splits, define $\Tt_\ell$ to be the class of sub-decompositions that arise from layers at level $\ell$. 
    By construction all tree decompositions in $\Tt_\ell$ satisfy assumptions~\ref{assumption:uniform-qtheories-nodes} and~\ref{assumption:uniform-qtheories-contexts} of uniformity.
    However note that $\Tt_\ell$ is not necessarily uniform as a class of tree decompositions.
    Since there are finitely many possibilities for the theories in Item~\ref{assumption:uniform-qtheories-nodes} and the subsemigroups in Item~\ref{assumption:uniform-qtheories-contexts}, it follows that $\Tt_\ell$ partitions into finitely many uniform classes.

    Assume that for each of the uniform classes $\Tt'_\ell$ partitioning $\Tt_\ell$,
    \begin{align*}
        \Tt'_\ell \text{ is locally linear } \implies \Tt'_\ell \text{ is linear}.
    \end{align*}
    Then the same is true for $\Tt_\ell$ (because a maximum of finitely many finite bounds is finite).
    Hence we may apply Lemma~\ref{lem:compose-linearizable} which proves that $\Tt$ is linear.
\end{proof}

\subsection{Supercolours, entanglement and normalisation}

From now on, we will work with classes of tree decompositions that are uniform.
We now move on to normalisation.
The idea is to define a meaningful notion of typical connections between the colours: either edges or non-edges can be typical between two given colours.
Then by performing some colour flip, we will be able to assume that non-edges are typical, which is the idea behind the definition of a normalised decomposition.
First we need to define supercolours and entanglement.

\subsubsection*{Supercolours}\label{sec:supercolours}

A recurrent difficulty with tree decompositions is that the colour of a vertex is relative to the choice of a node that contains it.
We will use the following notation for disambiguation: if $X$ is a node in a tree decomposition, and $a$ is a colour, then we write $X.a$ for the vertices that have colour $a$ in the induced subgraph of $X$. 

Note that recolourings of unary graph terms naturally have the homomorphism property:
\begin{align*}
\text{recolouring of $E \circ F$} =
\text{(recolouring of $E$)} \circ \text{(recolouring of $F$)}.
\end{align*}
As we have observed, if the theories of contexts are forward invariant, then they are idempotent.
Since the recolouring of a unary graph term $E$ is a part of its theory, it follows that all recolourings are idempotent as well.
We will be interested in the \emph{kernels} of a recolouring, i.e.~equivalence classes of the relation which identifies two colours if they are recoloured to the same colour.
The next lemma states that under the assumption described in the previous section, all recolourings have the same kernel.

\begin{lemma}\label{lem:forward-invariant-consequences-on-recolourings} 
    In a uniform tree decomposition, all recolourings have the same kernel.
\end{lemma}

\begin{proof}
    Consider two ports $a$ and $b$, and the following $\mso{}$ property of unary graph terms: ``$a$ and $b$ have the same output colour''. This amounts to saying that $a$ and $b$ are in the same kernel of the corresponding recolouring. This is a left ideal: if $S$ satisfies this property, then the same is true for $T \circ S$, because the recolouring of the composition is the composition of the recolourings.
    Therefore by Lemma~\ref{lem:left-ideal-invariant}, all the recolourings have the same kernel.
\end{proof}

We have a name for the equivalence classes of the kernel.

\begin{definition}[Supercolours] 
    A \emph{supercolour} is defined to be a kernel of a recolouring. 
\end{definition}

In a uniform class of tree decompositions, since theories of contexts are uniform across the class, it makes sense to talk about supercolours relative to the class.
The salient property of supercolours is that, thanks to idempotence, they do not change throughout a decomposition: if a vertex has supercolour $A$ in some node, then it has supercolour $A$ in all nodes containing it.

\subsubsection*{Entanglement}

A crucial role will be played by the analysis of a relation on  supercolours, which we call \emph{entanglement}, and define now. 
The intuitive idea is that entangled supercolours generate half-graphs (see Example~\ref{ex:half-graphs}), over branches of the tree decomposition.

Consider a uniform tree decomposition.
A colour is called \emph{transient} if it does not belong to the image of any recolouring used by some context in the tree decomposition.
A transient colour can only be used once, in the following sense: if a vertex is introduced in a node $X$, then it can have a transient colour in $X$, but not in any other node, because by the time the other node is reached, a recolouring will have been applied.
The non-transient colours, i.e.~those that belong to the image of some recolouring, are called \emph{fixpoint} colours.
This is because all recolourings are idempotent, and if a colour $a$ belongs to the image of an idempotent recolouring $e$, then it is a fixpoint, i.e.~$a=e(a)$.

We say that there is a \emph{yes-connection} from supercolour $A$ to supercolour $B$, if for every context $X \subset Y$ in the tree decomposition, and every fixpoint colour $b \in B$, there is a vertex with supercolour $A$ in $Y \setminus X$ with an edge to $X.b$.
We say that there is a \emph{no-connection} if the same holds with a non-edge.
Note that these are not mutually exclusive (there could be both a yes-connection and a no-connection).
There could also be no such connection (although this will be excluded by Lemmas~\ref{lem:entanglement} and~\ref{lem:everybody-introduces-something} below).

 \begin{definition}[Entanglement] Consider a uniform tree decomposition. We say that supercolour $A$ is \emph{entangled} with supercolour $B$, denoted by $A \hgraph B$, if we have both a yes-connection from $A$ to $B$, and a no-connection from $B$ to $A$.
\end{definition}

Here is a picture of entanglement on the corresponding unary graph term, which explains the choice of the symbol $\hgraph$, and the similarity to half-graphs: 
\mypic{37}
The dotted lines indicate that there could be an edge or not.  
The only information about the edges that is known concerns  the diagonals: one of them is an edge, and the other one is not. It is not even really meaningful to speak about the right vertical dotted line as being an edge, because a port represents a set of vertices, and therefore the connection between two ports is not necessarily well-defined. 
 
Entanglement is not necessarily symmetric, i.e.~we could have $A  \hgraph B$ but not $B \hgraph A$.  Also,  when speaking of entanglement, we do not assume that $A$ is different from $B$, i.e.~it could be the case that there is self-entanglement $A \hgraph A$. Eventually, we will show that for interesting  tree decompositions, entanglement is actually symmetric and irreflexive, and therefore it can be seen as an undirected graph where the vertices are supercolours. For the moment, however,  we will need to specify the  order between  $A$ and $B$ when speaking of  entanglement.

The following lemma shows that the universal quantifiers in the definition of entanglement can be replaced by existential ones, thanks to uniformity.

\begin{lemma}\label{lem:entanglement}
    In a uniform  tree decomposition, for every supercolours $A$ and $B$, possibly equal to each other, we have the implication~\ref{condition:entanglement-some} $\Rightarrow$ \ref{condition:entanglement-every} for the following conditions:
    \begin{enumerate}
        \item \label{condition:entanglement-some} for some fixpoint colour $b \in B$ and some context $X \subset Y$ in the class, there is an edge from $X.b$ to some vertex with supercolour $A$ in $Y \setminus X$;
        \item \label{condition:entanglement-every}  for every colour $b \in B$ and every context $X \subset Y$ in the class, there is an edge from $X.b$ to some vertex with supercolour $A$ in $Y \setminus X$.
    \end{enumerate}
\end{lemma}

\begin{proof}
    Let us say that colour $b \in B$ is \emph{good} in a context $X \subset Y$ if it has an edge to some vertex with supercolour $A$ in $Y \setminus X$.
    In this terminology, we want to show that if some fixpoint colour $b \in B$ is good in some context, then every colour $b \in B$ is good in every context.        
    We first prove an intermediate condition.
    \begin{claim}\label{claim:intermediate-fixpoint}
        If colour $b$ is good in some context, then it is good in all contexts.
    \end{claim}
\begin{claimproof}
    ``Being a context where $b$ is good'' is a left ideal, so we can use  Lemma~\ref{lem:left-ideal-invariant}.
\end{claimproof}

Let $b$ be a fixpoint colour that satisfies condition~\ref{condition:entanglement-some} in the statement of the lemma.
We will now  show that every colour in $B$ satisfies condition~\ref{condition:entanglement-some}.
Together with the above claim, this will imply  condition~\ref{condition:entanglement-every}.
Consider a colour $b' \in B$.
Since $b$ is fixpoint, there is some context $X \subset Y$ in which $b'$ is mapped by the recolouring to $b$.
By the claim, $b$ is good in this context.
If we would compose this context with itself, then $b'$ would be good in it.
Since the theory is idempotent, it follows that in this context $b'$ is good.
Therefore by the claim, $b'$ is good in all contexts.
\end{proof}

In the above lemma, we showed that one could replace universal quantifiers by existential ones.
The opposite implication should be easier.
However, it could potentially fail for trivial reasons, namely it could be the case that there is no vertex in supercolour $B$ that is introduced by the context.
The following lemma shows that such trivial reasons will not arise in any class of interest.
We say that a class of tree decompositions is \emph{full} if in every decomposition from the class, for every node and every colour, there is a vertex in the node with that colour.

\begin{lemma}\label{lem:everybody-introduces-something}
    Consider a uniform class of tree decompositions which is full and contains at least one non-linear tree decomposition, i.e.~a tree decomposition where two nodes are disjoint.
    Then for every supercolour $A$, every context introduces at least one vertex in $A$.
\end{lemma}

\begin{proof}
    Choose some context $X \subset Y$ such that there is a node $Z \subset Y$ that is disjoint with $X$.
    Such a context can be found in any non-linear tree decomposition.
    Since the node $Z$ contains vertices in all supercolours, and these vertices are introduced in the context $X \subset Y$, it follows that this context satisfies the condition in the lemma.
    Since the condition is a left ideal, and by uniformity of theories of contexts, all other contexts will also satisfy the condition.
\end{proof}

It will be easy to make the assumptions from the above lemma without loss of generality.
In particular, this shows that for every pair of supercolours $A,B$, there is either a yes-connection or a no-connection from $A$ to $B$.
(Recall that there can also be both.)

\subsubsection*{Normalisation}

To normalise tree decompositions, we will apply \emph{superflips}, which are $F$-flips where $F$ respects the supercolours: for every colours $a,a',b,b'$ such that $a$ and $a'$ as well as $b$ and $b'$ belong to the same supercolour, $ab$ is an edge in $F$ if and only if $a'b'$ is.
For two supercolours $A$ and $B$, the \emph{$AB$-superflip} is defined to be the $F$-flip where $F$ is comprised of all edges between colours in $A$ and $B$.
Note that every superflip can be obtained by performing a finite sequence of $AB$-superflips.

We are now ready to define normalised tree decompositions.

\begin{definition}[Normalised tree decompositions]\label{def:normalised}
    We say that a uniform tree decomposition is \emph{normalised} if it is full (i.e.~there is a vertex of every colour in every node) and such that, if $A$ has a yes-connection to $B$, then $A \hgraph B$ or $B \hgraph A$.
\end{definition}

We now show that non-trivial uniform tree decompositions can be normalised.

\begin{lemma}\label{lem:normalising-with-superflip}
    To every uniform tree decomposition which is full and contains at least one non-linear tree decomposition, one can apply a superflip so that the resulting tree decomposition is normalised.
\end{lemma}

\begin{proof}
    Thanks to Lemmas~\ref{lem:entanglement} and~\ref{lem:everybody-introduces-something}, it holds that for every supercolours $A$ and $B$, we have either a yes or a no-connection from $A$ to $B$, and likewise from $B$ to $A$.
    If neither $A \hgraph B$ nor $B \hgraph A$, then this means that we have a $\gamma$-connection from $A$ to $B$ and from $B$ to $A$ for some $\gamma \in \{\text{yes},\text{no}\}$.
    Therefore if we apply the superflip comprised of all pairs of supercolours $A,B$ such that $A$ and $B$ have a yes-connection in both directions, the resulting decomposition is normalised.
\end{proof}

To reduce to normalised decompositions, we should therefore prove that superflips are well behaved with respect to the other properties under consideration.

\begin{lemma}\label{lem:superflip-does-not-affect}
    Consider a uniform class of tree decompositions.
    Performing the same superflip to all decompositions in the class does not affect:
    \begin{enumerate}
        \item being uniform;
        \item being locally linear;
        \item being linear;
        \item witnessing an obstruction.
    \end{enumerate}
\end{lemma}

\begin{proof}
    For any set $X$, performing an $AB$-superflip can only at most multiply the rank of $X$ by $k$, because a connection towards $x \in X$ after the superflip can be inferred from the contextual class of $x$ before the flip and its supercolour.
    Therefore boundedness of any linearisation is preserved, which proves the second and third points.

    The fourth point is clear from the definition of witnessing an obstruction, since these include colour-flips.

    For the first point about uniformity, we should prove this claim.

    \begin{claim}\label{lem:superflip-does-not-affect-forward-invariance}
        If a tree decomposition has forward invariant $q$-theories (and therefore it also has supercolours), then the same is true after applying any superflip.
    \end{claim}

    \begin{claimproof} 
        Since supercolours are consistent, applying a superflip to the entire tree decomposition is the same thing as applying the same superflip to each context independently.
        Recall that a \emph{semigroup automorphism} is a permutation of the semigroup that is consistent with the semigroup operation.
        If we take an additive labelling that is forward invariant, and we postcompose it with a semigroup automorphism, then the result is still forward invariant.
        Therefore, to complete the proof of the lemma, it is enough to show that a superflip induces an automorphism on the semigroup of quantifier rank $q$ \mso{} theories.
        
        This is because a flip of two colours is an example of a quantifier-free interpretation, see~\cite[Section III.B]{polyregular-fold}, and such interpretations are compatible with $q$-theories.
        Since a superflip can be decomposed into a sequence of flips of individual colours, we get the desired result.
    \end{claimproof}

    Therefore after the superflip, for each tree decomposition, the $q$-theories of contexts are forward invariant.
    The facts that $q$-theories of nodes are the same, and that these are uniform throughout the class, are easily seen to be preserved by superflips.
\end{proof}

Since superflips don't affect  the assumptions or conclusions of the Dichotomy Lemma by Lemma~\ref{lem:superflip-does-not-affect}, or uniformity, we will be able to assume that all tree decompositions are normalised.

\begin{lemma}\label{lem:reduction-dichotomy-normalised}
    The Dichotomy Lemma reduces to classes of tree decompositions which are uniform and normalised.
\end{lemma}

\begin{proof}
    Assume that the Dichotomy Lemma holds for uniform and normalised classes.
    Thanks to Lemma~\ref{lem:reduction-dichotomy-uniform}, it suffices to prove the result for uniform classes.
    Let $\Tt$ be a uniform class of tree decompositions.

    If $\Tt$ is not full, then by uniformity there is some colour $a$ which is not used (i.e.~the corresponding colour class is empty) in any node of any tree decomposition.
    Therefore the underlying graphs of $\Tt$ coincide with underlying graphs of the class $\Tt'$ obtained by applying the same map $\{1,\dots,k\} \to \{1,\dots,k-1\}$ which removes $a$, to the colouring of every node.
    Since only the colouring is affected when moving from $\Tt$ to $\Tt'$, it is immediate that
    \begin{itemize}
        \item if $\Tt'$ witnesses an obstruction then so does $\Tt$;
        \item if $\Tt$ is locally linear then so is $\Tt'$; and
        \item if $\Tt'$ is linear then so is $\Tt$.
    \end{itemize}
    By induction on $k$ we get that $\Tt'$ satisfies the conclusions of the Dichotomy Lemma, and therefore by the above points, so does $\Tt$.

    Hence we assume that $\Tt$ is full.
    If all tree decompositions in $\Tt$ are linear, then $\Tt$ is linear (this is because a linearisation of width $k'$ is the same as a linear $k'$-tree decomposition) so again the Dichotomy Lemma is proved.
    Therefore we assume that some tree decomposition in $\Tt$ has two disjoint nodes.

    Hence we may apply Lemma~\ref{lem:normalising-with-superflip}, which gives a superflip such that applying this superflip gives a normalised class $\Tt'$.
    By Lemma~\ref{lem:superflip-does-not-affect}, $\Tt'$ is normalised and uniform, therefore the conclusions of the Dichotomy Lemma hold for $\Tt'$.
    By the same lemma, this proves that the conclusions of the Dichotomy Lemma hold for $\Tt$.
\end{proof}

\subsubsection*{Local edges}

As a nice feature of normalised decompositions, we will be able to control which edges remain.
This requires the following definition.

\begin{definition}[Local edges]\label{def:local-edges} Two nodes in a tree are called \emph{nearby} if they are equal to each other, or they are siblings, or one is a child of the other.
An edge in a tree decomposition is called \emph{local} if its endpoints are introduced in nearby nodes.
\end{definition}

We have the following useful statement.

\begin{lemma}
    Let $T$ be a normalised tree decomposition, and let $A,B$ be two supercolours that are not entangled.
    Then all edges between $A$ and $B$ are local.
\end{lemma}

\begin{proof}
    Let $X,X'$ be two nodes that are not nearby, and let $Y$ denote their deepest common ancestor (it could be that $Y$ coincides with $X$ or $X'$).
    Then one of $X,X'$ is at distance at least $2$ from $Y$ in the child-graph of the tree, so we assume without loss of generality that $X \subset Z \subset Y$ for some node $Z$.
    Consider an edge $xx'$ from $X.A$ to $X'.B$.
    Since $x$ has a fixpoint colour in $Z$, this witnesses a yes-connection between $A$ and $B$, contradicting the definition of being normalised.
\end{proof}

We will use the following terminology.

\begin{definition}[Local and entangled colours and vertices]\label{def:local-colour}
    A supercolour in a uniform tree decomposition is called \emph{entangled} if it belongs to an entangled pair of supercolours, and it is called \emph{local} otherwise.
    A colour is called entangled or local if it belongs to an entangled or local supercolour, and a vertex is called entangled or \emph{local} if it has an entangled or local supercolour.
\end{definition}

The following fact follows from the definition of being normalised.

\begin{fact}\label{fact:local-edges}
    In a normalised uniform tree decomposition, all edges adjacent to local vertices are local.
\end{fact}

We will say that a normalised uniform tree decomposition is~\emph{local} if there is no entangled pair of supercolours.
In this case, supercolours, colours, vertices, and edges are all local.

\subsection{Colour-connectivity}

Intuitively, for normalised decompositions, now that typical edges have been removed, it makes sense to study connectivity.
We start with colour-connectivity, which is a weak connectivity assumption that we will easily be able to guarantee by induction on the number of colours $k$.
In fact, we will even be able to guarantee connectivity later on (Section~\ref{sec:connectivity}), but this will require more work.

\subsubsection*{Colour connectivity}  
\label{sec:colour-connectivity}
For a  connected component of the underlying graph in a tree decomposition, we say that it \emph{touches a colour} $a$ in a node $X$ if the connected component has nonempty intersection with $X.a$, and that it touches all colours in $X$ if this is the case for all $a$. 
Note that if a connected component is such that for all nodes, the component does not touch all colours, then restricting the tree decomposition to the component gives fewer colours.
This leads to the following notion.

\begin{definition}[Colour connectivity]\label{def:colour-connected}
    A tree decomposition is \emph{colour connected} if there is some connected component of the underlying graph and some node such that the component touches all colours in the node.
\end{definition}

We have the following result.

\begin{lemma}
    Let $\Tt$ be a uniform class of tree decompositions.
    Then either every tree decomposition in $\Tt$ is colour connected, or none of them are.
\end{lemma}

\begin{proof}
    For a context $X \subset Y$, say that two colours $a,b$ are connected outside if there exists a path from some vertex in $X.a$ to some vertex in $X.b$ such that there is no edge on this path with both endpoints in $X$ (these will later be called \emph{outer paths} and play an important role in the proof).
    Note that if this holds for two vertices, then it holds for every pair of vertices from $X.A$ and $X.B$.
    
    Moreover, observe that the above property is equivalent to the property ``there is a path from port $a$ to port $b$'' of the unary graph term induced by the context, and that this defines a left ideal.
    Therefore $a$ and $b$ are connected outside either with respect to every context (in the class of decompositions) or none.
    Now colour connectedness can be expressed as a function of the theory of nodes, and which colours are connected outside, which implies the result.
\end{proof}

\begin{lemma}\label{lem:reduction-uniform-normalised-colour-connected}
    The Dichotomy Lemma reduces to classes that are uniform, normalised, and colour-connected.
\end{lemma}

\begin{proof}
    Note that given a $k$-tree decomposition, composing each colouring with (potentially different) permutations of $\{1,\dots,k\}$ yields another tree decomposition (though this may break uniformity or normalisation).
    Assume the Dichotomy Lemma known for uniform, normalised and colour-connected classes.
    Thanks to Lemma~\ref{lem:reduction-dichotomy-normalised}, it suffices to prove the result for classes that are already uniform and normalised.
    
    Consider such a class of tree decompositions $\Tt$ which is not colour-connected.
    Then for every $T \in \Tt$, restricting $T$ to a connected-component and applying adequate permutations to the colourings, yields a tree decomposition such that for some colour $a$, the corresponding colour class is empty for every node.
    Therefore the induction on $k$ gives us that the class $\Tt'$ comprised of restrictions of decompositions from $\Tt$ to connected components satisfies the conclusion of the Dichotomy Lemma.
    
    In the case where an obstruction is witnessed, this also holds for $\Tt$.
    Otherwise, we get that if $\Tt'$ is locally linear, then it is linear; we want to prove the same for $\Tt$.
    Assume $\Tt$ is locally linear.
    Then clearly $\Tt'$ is as well (because restricting linearisations preserve their widths), therefore $\Tt'$ is linear.
    We conclude by observing that given a linearisation of width $p$ for every connected component of a graph, we get a linearisation of width $p$ for the graph by concatenating the linearisations in an arbitrary order.
\end{proof}

We also have the following useful consequence of colour connectivity in uniform decompositions.

\begin{lemma}\label{lem:connected-in-parent}
    Consider a colour connected tree decomposition which is uniform.
    Then for every context $X \subset Y$ there is a connected component of the subgraph induced by $Y$ which touches all colours in $X$.
\end{lemma}

\begin{proof}
    Consider a node $X$.
    Say that two colours are \emph{connected inside $X$} if there is a path connecting vertices with the two colours in induced subgraph on $X$.
    Clearly being connected inside $X$ for two given colours is an \mso{} property of quantifier rank $\leq 9$, and therefore the same colours are connected independently of the node $X$.
    Therefore we will just say that two colours are \emph{connected inside nodes} if they are connected inside $X$ for some (or every) node.

    Now by colour connectivity, there is a node $X_0$ and a connected component of the underlying graph that touches all colours in $X_0$.
    If $X_0$ is the root, then there is a connected component inside $X_0$ which touches all colours, and since this is an \mso{} property, it holds for all nodes by uniformity.
    Otherwise, consider the following \mso{} property of unary graph terms:
    \begin{center}
        after adding edges connecting ports which correspond to colours that are connected inside nodes,\\ there is a connected component that touches all colours.
    \end{center}
    Then the context $X_0 \subset \rootnode$ satisfies the above property.
    Moreover, it is a direct check that the property is an ideal, and therefore we conclude by Lemma~\ref{lem:left-ideal-invariant}.
\end{proof}

\subsection{Connectivity}\label{sec:connectivity}

Say that a tree decomposition is \emph{connected} if the graph induced by every node is connected.
A class of tree decompositions is connected if it contains only connected tree decompositions.
The goal for this section is to reduce to the connected case.
This is based on an analysis of connected components in decompositions which are uniform, normalised and colour-connected.

Given a node $X$, a \emph{component of $X$} is a connected component of the subgraph induced by $X$.
The next lemma establishes that components have a tree structure.

\begin{lemma}
    Two components of nodes from a tree decomposition are either included in one another, or disjoint.
\end{lemma}

\begin{proof}
    Let $X',Y'$ be two components of some nodes $X,Y$.
    If $X$ and $Y$ are disjoint then clearly so are $X'$ and $Y'$, so we assume without loss of generality that $X \subset Y$.
    Assume $X' \cap Y'$ is non-empty.
    Then $Y'$ is a connected component of $Y$ which contains a node from $X'$, so $Y'$ contains all of $X'$.
\end{proof}

For a given component $X'$, we say that the deepest node $X$ in $T$ which contains $X'$ is the node \emph{corresponding} to $X'$.
Fix a tree decomposition $T$ and a connected component $C_0$ of its underlying graph.
We define the component-decomposition $T'$ of $C_0$ in $T$ as follows:
\begin{itemize}
    \item the tree of $T'$ is comprised of all the components (of all nodes) of $T$ which are contained in $C_0$;
    \item for every component $X'$ corresponding to $X$, define the colouring of $X'$ to be induced from the colouring of $X$;
    \item for components $X' \subset Y'$ with corresponding nodes $X \subset Y$, we define the recolouring of $X' \subset Y'$ to be the one of $X \subset Y$.
\end{itemize}

In particular, the root of $T'$ is $C_0$.

\begin{lemma}\label{lem:components-are-k-decompositions}
    A component-decomposition $T'$ of a $k$-tree decomposition $T$ is a $k$-tree decomposition.
\end{lemma}

\begin{proof}
    Clearly Item~\ref{tree-decompositions:recolourings} from the definition of tree decomposition holds.
    Therefore there remains to check that the colourings of $T'$ are indeed compatible.
    Let $X'$ be a component and $X$ be the deepest node of $T$ containing it, let $y' \notin X'$ and let $x' \in X'$.
    If $y' \notin X$, then we know that whether or not $x'y'$ is an edge depends only on the colour of $x'$.
    Otherwise, $y' \in X$ and therefore $y'$ and $x'$ are not in the same connected component in the graph induced on $X$, so $x'y'$ is a non-edge.
\end{proof}

The components will have different behaviour according to whether or not they are connected to vertices outside of $X$; this is captured by the following definition.

\begin{definition}[Outer colours]
    Fix a context $X \subset Y$.
    A colour $a$ is called an \emph{outer colour} relative to this context, if $X.a$ has an edge to a vertex $y$ introduced in the context, i.e.~$y \in Y \setminus X$.
\end{definition}

Outer colours are in fact independent of the choice of the context, thanks to uniformity.

\begin{lemma}\label{lem:outer-colours-independent}
    Consider a uniform tree decomposition.
    If $a$ is an outer colour relative to some context, then it is an outer colour relative to every context.
\end{lemma}

\begin{proof}
    This is because being an outer colour is a left ideal, so Lemma~\ref{lem:left-ideal-invariant} concludes.
\end{proof}

We say that a component of a node $X$ is \emph{alive} if it contains a vertex whose colour in $X$ is an outer colour; otherwise we say that it is \emph{dead}.
Note that, unless $X$ is the root, a component is alive if and only if it has an edge to a vertex outside of $X$.
Note also that a dead component cannot be contained in a bigger component; stated differently, in a component-decomposition, only the root can be dead.
This will mean that essentially, dead components can be ignored.
The next lemma describes the structure of alive components.

\begin{lemma}\label{lem:structure-alive}
    Consider a uniform, colour-connected decomposition and a context $X \subset Y$.
    There is a component of $Y$ which contains all components of $X$ which are alive.
\end{lemma}

\begin{proof}
    Recall from Lemma~\ref{lem:connected-in-parent} that there is a component $C$ of $Y$ which contains a vertex from $X.a$ for every colour $a$.
    Moreover, note that if $a$ is an outer colour, then $X.a$ is connected in $Y$.
    Therefore $C$ contains $X.a$ for every outer colour, which concludes.
\end{proof}

Here is an illustration of the decomposition into components, for some node $Y$ with children $X_1,\dots,X_4$, assuming uniformity and colour-connectivity:
\mypic{163}

We are now ready to prove Lemma~\ref{lem:reduction-uniform-normalised-connected-full} which reduces the Dichotomy Lemma to the case of uniform, normalised and connected decompositions.

\begin{proof}[Proof of Lemma~\ref{lem:reduction-uniform-normalised-connected-full}]
    Let $\Tt$ be a class of $k$-tree decompositions.
    By Lemma~\ref{lem:reduction-uniform-normalised-colour-connected} we may assume that $\Tt$ is uniform, normalised and colour-connected.
    Let $\Tt'$ be the class of all component-decompositions arising from decompositions in $\Tt$.
    We will reduce to applying the lemma to $\Tt'$, which we already know to be a class of $k$-tree decompositions by Lemma~\ref{lem:components-are-k-decompositions}.
    We start by proving that if the conclusions of the lemma hold for $\Tt'$, then they also hold for $\Tt$.
    Since underlying graphs of $\Tt'$ are induced subgraphs of underlying graphs of $\Tt$, clearly if $\Tt'$ witnesses an obstruction, then so does $\Tt$.
    Therefore we assume that
    \begin{center}
        $\Tt'$ is locally linear implies $\Tt'$ is linear,
    \end{center}
    and aim to establish the same implication for $\Tt$.
    This is done by combining the two following claims.

    \begin{claim}
        If $\Tt$ is locally linear, then so is $\Tt'$.
    \end{claim}

    \begin{claimproof}
        Let $T' \in \Tt'$ and let $T \in \Tt$ be such that $T'$ is a component-decomposition of~$T$.
        By Lemma~\ref{lem:structure-alive}, the local sub-decompositions of $T'$ are of the form 
        \[
            X_1^1,\dots,X_1^{n_1},X_2^1,\dots,X_2^{n_2}\dots, X_p^{1},\dots,X_p^{n_p} \subset Y',
        \]
        where $X_1,X_2\dots,X_p$ are children of $Y$, the $X_i^j$'s are components of $X_i$ and $Y'$ is a component of~$Y$.
        Consider such a local sub-decomposition.
        We build a linearisation of bounded width by combining an outer preorder and an inner one (using Lemma~\ref{lem:combining-preorders}) defined as follows.

        The outer preorder is obtained from taking a linearisation of bounded width for the sub-decomposition at $Y$ in $T$ (since $\Tt$ is locally linear) and restricting it to $Y'$.
        Note that every class is either introduced in $Y'$, or in one of the $X_i$'s.
        This is not yet a linearisation because some of its classes may span several $X_i^j$'s, for some fixed $i$.

        The inner preorder is defined over each class contained in one of the $X_i$'s, by ordering its intersection with each of the $X_i^j$'s arbitrarily.
        This preorder has bounded width because there are no edges between $X_i^j$ and $X_i^{j'}$ for $j \neq j'$.
        The obtained combined preorder defines a linearisation of the local sub-decomposition which concludes the proof of the claim.
    \end{claimproof}

    \begin{claim}
        If $\Tt'$ is linear, then $\Tt$ is linear.
    \end{claim}

    \begin{claimproof}
        Let $T \in \Tt$.
        Consider a connected component $C_0$ of the underlying graph, and let $T' \in \Tt'$ be the component decomposition over $C_0$.
        Note that for any component $X'$ in $T'$ corresponding to $X$ in $T$, the vertices that are $T'$-introduced in $X'$ are $T$-introduced in $X$.
        Therefore, a linearisation of bounded width for $\Tt'$ gives a linearisation with respect to $T$ for all vertices in $C_0$.
        This is easily extended to a linearisation for $T$ by setting all connected components in an arbitrary order (as we have done already in previous proofs).
    \end{claimproof}

    We will not prove that ${\Tt'}$ is uniform and normalised.
    Instead, we will reapply our uniformisation procedure to ${\Tt'}$, then, \emph{if required}, renormalise it (see Lemma~\ref{lem:normalising-with-superflip}) and make it colour-connected (see Lemma~\ref{lem:reduction-uniform-normalised-colour-connected}).
    We will show that this procedure terminates, because there are fewer and fewer entangled pairs of supercolours.
    First we list some properties of $\Tt'$.

    \begin{claim}
        The class $\Tt'$ is connected, has the same supercolours as $\Tt$, and for every two supercolours $A,B$, if all edges connecting $A$ to $B$ in $\Tt$ are local, then the same is true in $\Tt'$.
    \end{claim}

    \begin{claimproof}
        Clearly $\Tt'$ is connected.
        Let $T' \in \Tt'$ and let $T \in \Tt$ be such that $T'$ is a component-decomposition of $T$.
        Then every recolouring of $T'$ corresponds to a recolouring of $T$, therefore the supercolours in $T'$ are the same as in $T$.
        Consider an edge $xy$ which is local in $T$, i.e.~$x$ and $y$ are $T$-introduced in nearby nodes $X$ and $Y$.
        Let $X'$ and $Y'$ be the nodes of $T'$ such that $x$ and $y$ are introduced in $X'$ and $Y'$.
        There are three cases.
        \begin{itemize}
            \item If $X=Y$.
            Then $x$ and $y$ belong to the same connected component of $X$, therefore $X'=Y'$ so the edge is $T'$-local.
            \item If $Y$ is the parent of $X$.
            Since $x$ and $y$ are connected, $Y'$ contains $X'$, therefore $Y'$ is the parent of $X'$ so the edge is $T'$-local.
            \item If $X$ and $Y$ are siblings.
            Then $X'$ and $Y'$ are alive components (because there is an edge between them) and they belong to the same component of the parent of $X$ and $Y$, therefore $X'$ and $Y'$ are siblings, so the edge is $T'$-local.\qedhere
        \end{itemize}
    \end{claimproof}

    When applying the uniformisation procedure, we define splits of bounded height, reduce to layer sub-decompositions, and moreover partition the class into finitely many uniform classes (see Lemma~\ref{lem:reduction-dichotomy-uniform}).
    The following claim shows that this does not affect the properties under consideration.

    \begin{claim}
        Let $T' \in \Tt'$ and let $T''$ be a sub-decomposition of $T'$.
        Then $T''$ is connected, has the same supercolours as $T'$, and for every supercolours $A,B$, if all edges connecting $A$ to $B$ in $T'$ are local, then the same is true in $T''$.
    \end{claim}

    \begin{claimproof}
        Every node in $T''$ is a node of $T'$, therefore $T''$ is connected.
        The same is true for recolourings, therefore they have the same supercolours.
        The last claim follows from the fact that if two nodes are nearby in a minor, then they are nearby in the original tree.
    \end{claimproof}

    Let $\Tt''$ be one of the (finitely many) classes obtained from $\Tt'$ by applying the uniformisation lemma (Lemma~\ref{lem:reduction-dichotomy-uniform}).
    If $\Tt''$ is normalised, we are done.
    Otherwise, by the two previous claims, all supercolours that have only local edges in $\Tt$ also have only local edges in $\Tt'$, and therefore there are supercolours $A$ and $B$ which are entangled in $\Tt'$ but not in $\Tt$.
    Hence normalising $\Tt''$ leads to a uniform, normalised (but not necessarily connected) class with fewer entangled pairs.
    Therefore the process terminates, which proves the lemma.
\end{proof}

\section{Strahler number and Strahler constraints}\label{sec:strahler} 

Before moving on to the proof of the Dichotomy Lemma, we introduce in this quick section some additional tools for extracting regular trees.
The \emph{Strahler number} of a tree is the maximal depth of complete binary trees that are its tree minors.


The next lemma describes a close connection between Strahler numbers and splits.

\begin{lemma}\label{lem:strahler-split}
    A tree has Strahler number at most $n$ if and only if it has a split of height $n$ where every layer is a chain.
\end{lemma}
\begin{proof}
    For the left-to-right implication, define a split as follows: to each node we associate the Strahler number of its subtree.
    In this split, layers are chains.  

    Conversely, assume we have a tree $T$ with a split of height $n$ where every layer is a chain, and a minor $S$ of $T$ which is a full binary tree.
    Then for every node of $S$, one of its two $S$-children has a strictly greater level, otherwise we would have a layer with some branching.
    Therefore, there is a root-to-leaf path in the child graph of $S$ where levels strictly increase at each step, and thus $S$ has depth at most $n$.
\end{proof}

It follows, as stated in the next lemma, that we can assume unbounded Strahler number without loss of generality.

\begin{lemma}\label{lem:dichotomy-holds-when-strahler-bounded}
    Consider a locally linear class of decompositions with bounded Strahler number.
    Then the class is linear.
    In particular, the Dichotomy Lemma holds for classes of tree decompositions that have bounded Strahler number.
\end{lemma}
\begin{proof}
    By Lemma~\ref{lem:strahler-split}, we can associate a split to each tree decomposition, so that the splits have bounded heights and the layers are on a single branch.
    For decompositions with a single branch, we trivially have linearisations of bounded width; this is thus the case for each layer of the split.
    Therefore, we can apply Lemma~\ref{lem:compose-linearizable} to get linearisations of bounded width.
\end{proof}

In several occurrences of the proof, we will need a more refined version explaining how to extract regular trees from larger trees.
This is described next, using the notion of Strahler constraints.


\subsubsection*{Strahler constraints}
\label{sec:strahler-constraints}

Define a \emph{disjoint pair} in a tree to be two nodes that are disjoint, i.e.~none is an ancestor of the other.
In our extraction property, we will be working with some constraint on disjoint pairs; our goal will be to ensure that either all or none of the disjoint pairs satisfy this constraint.
\begin{definition}[Strahler constraint]\label{def:side-constraint}
    A \emph{Strahler constraint} in a tree  is a family $\Cc$ of disjoint pairs which is commutative
    \begin{align*}
    (X,X') \in \Cc \quad \Leftrightarrow \quad (X',X) \in \Cc
    \end{align*}
    and closed under moving down in the tree:
    \begin{align*}
    (X,X') \in \Cc \Rightarrow (Z,X') \in \Cc \qquad \text{when $Z$ is a descendant of $X$.}
    \end{align*}
\end{definition}
In the above definition, we only move $X$ down in the tree, but by commutativity we can do the same for $X'$.



The following lemma,  which will be used several times in our proof, shows us that if we take a  tree together with  a Strahler constraint, then we can either extract a tree minor of large Strahler number that uses the Strahler constraint everywhere, or otherwise we can find a split of small height for which the Strahler constraint is not used within the layers.

\begin{lemma}\label{lem:strahler-dichotomy}
    Consider a tree $T$ with a Strahler constraint. Then for every $n \in \set{0,1,2,\ldots}$, either:
    \begin{enumerate}
        \item\label{it:branch-a-lot} there is a tree minor which has Strahler number at least $n$, and such that every disjoint pair of nodes in this minor belongs to the Strahler constraint; or 
        \item\label{it:split-bounded} there is a split of height $n$ such that  on every layer, there is no disjoint pair from the Strahler constraint.
    \end{enumerate}
\end{lemma}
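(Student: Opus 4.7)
The plan is a direct construction, with no induction on $n$. For every node $v$ of $T$, let $g(v)$ be the largest $k \geq 0$ such that the subtree $T_v$ admits a tree minor with Strahler number at least $k$, all of whose disjoint pairs belong to $\Cc$. Because $T$ is finite, $g$ is well-defined. If $g(\text{root}) \geq n$ then item~\ref{it:branch-a-lot} of the lemma holds immediately. Otherwise, since every tree minor of $T_v$ is also a tree minor of the subtree at any ancestor of $v$, we have $g(v) \leq g(\text{root}) < n$ for every $v$, so $g$ takes values in $\set{0, 1, \ldots, n-1}$. I will use $g$ itself as the split, which then has height at most $n$.

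The crucial ingredient is a merging fact: if $(X, Y) \in \Cc$ is a disjoint pair and $Z$ is their least common ancestor in $T$, then $g(Z) \geq \min(g(X), g(Y)) + 1$. Indeed, take tree minors $M_X$ of $T_X$ and $M_Y$ of $T_Y$ witnessing $g(X)$ and $g(Y)$; then $\set{Z} \cup M_X \cup M_Y$ is a tree minor of $T_Z$ rooted at $Z$ whose Strahler number is at least $\min(g(X), g(Y)) + 1$. Disjoint pairs inside $M_X$ or $M_Y$ are in $\Cc$ by construction, and any cross-pair $(X', Y')$ with $X' \in M_X$, $Y' \in M_Y$ satisfies $X' \subseteq X$ and $Y' \subseteq Y$, hence lies in $\Cc$ by applying the ``move down'' closure property to each endpoint in turn (using commutativity).

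It remains to show that no layer of the split contains a $\Cc$-pair. Suppose for contradiction that some layer $L$ of value $k$ contains a $\Cc$-pair $(X, Y)$. When $X$ and $Y$ are ancestor-related the LCA lies in $\set{X, Y} \subseteq L$, so it has $g$-value $k$, contradicting the merging fact. Otherwise $X$ and $Y$ are disjoint; let $Z$ be their LCA in $T$ and fix a visibility chain $X = W_0, W_1, \ldots, W_m = Y$ inside $L$. I claim by induction on $j$ that every $W_j$ lies strictly inside the subtree of $Z$ containing $X$. When $W_{j+1}$ is a descendant of $W_j$ this is immediate. When $W_{j+1}$ is an ancestor of $W_j$ the only alternatives are $W_{j+1} = Z$ (putting $Z$ in $L$ and forcing $g(Z) = k$, contradicting the merging fact) or $W_{j+1} \supsetneq Z$ (making $Z$ a node with $g(Z) \geq k+1$ lying strictly between the consecutive chain elements $W_j$ and $W_{j+1}$ in $T$, which violates visibility); the only remaining option is $W_{j+1} \subsetneq Z$ inside the same subtree. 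But then $Y = W_m$ lies in the same subtree of $Z$ as $X$, contradicting the fact that $Z$ is the LCA of $X$ and $Y$.

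The only delicate step is this subtree-tracking induction: everything hinges on the observation that visibility forbids any ``jump'' which would place $Z$ strictly between two consecutive chain elements, which is exactly what happens whenever the chain tries to leave the subtree of $Z$ containing $X$. The remaining ingredients are routine; the closure axioms of a Strahler constraint (commutativity and closure under moving the second endpoint down) were tailored exactly to make the merging fact go through.
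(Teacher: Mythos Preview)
Your proof is correct and follows essentially the same strategy as the paper: define $g(X)$ as the maximal Strahler number of a $\Cc$-respecting minor inside $T_X$, use $g$ itself as the split, and derive a contradiction from any $\Cc$-pair in a layer via the merging observation. Your visibility-chain argument (tracking the path inside the child of the $T$-LCA) is more detailed than the paper's terse version, which only sketches the case of $S$-siblings; one can shorten your last paragraph by noting that the chain must contain some $W_j \supseteq Z$ (the $T$-LCA), and then applying the merging fact directly to $W_j$ rather than to $Z$, but what you wrote is perfectly fine.
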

\begin{proof}
    A simple extraction argument.
    Define a split as follows: to each node $X$ we associate the largest $n$ such that the subtree of  $T$ rooted at $X$ satisfies the first item, with the inherited Strahler constraint.     Consider a layer $S$ of this split, and a pair of $S$-siblings. If this pair would belong to the constraint, then the number associated to their parent in $S$ would be bigger. Therefore,  on each level $S$ of this split, no disjoint pair belongs to  the constraint. This completes the proof, since either the root node is mapped to $n$, thus satisfying the first condition, or otherwise we have the second condition. 
\end{proof}


We will often apply Lemma~\ref{lem:strahler-dichotomy} with a finite family of Strahler constraints instead of just one constraint, and a class of tree decompositions with unbounded Strahler number.
Here is a more appropriate statement for this setting.
We say that a family of Strahler constraint is \emph{total} relative to some class of tree decompositions if for every tree decomposition from the class, every pair of disjoint nodes belongs to a constraint from the family.

\begin{lemma}\label{lem:strahler-dichotomy-family}
    Consider a class $\Tt$ of tree decompositions with unbounded Strahler number and a finite family of Strahler constraints.
    Then either
    \begin{enumerate}[(a)]
        \item there is a constraint $\alpha$ in the family such that the class
        \[
            \Tt_\alpha =  \setbuild{ T}{$T$ is a sub-decomposition of $\Tt$ where \\ all pairs of disjoint nodes belong to $\alpha$}
        \]
        also has unbounded Strahler number; or
        \item\label{case-split} for each tree decomposition in $\Tt$ there exists a split such that the splits have bounded height, and in each layer sub-decomposition, disjoint nodes satisfy none of the constraints from the family. 
    \end{enumerate}
    Moreover, if the family of constraints is total, then alternative~\ref{case-split} cannot hold.
\end{lemma}

\begin{proof}
    For each $T \in \Tt$, each $n$, and each Strahler constraint from the family, apply Lemma~\ref{lem:strahler-dichotomy}.
    If we have the first outcome for arbitrarily large values of $n$, then this is also true for some fixed constraint (by pigeonhole), and therefore the lemma holds.
    Otherwise, there is some $n$ such that for each tree and each constraint, there is a split of height $n$ such that on the layers of this split, there is no disjoint pair from this constraint.

    Then we order the constraints arbitrarily and for each tree, we consider a new split, of height $n$ times the number of constraints, which is obtained by combining each of the splits lexicographically: a node $X$ has strictly greater level than a node $X'$ in the new split if and only if it has strictly greater level in a split corresponding to some constraint, and the same level in splits corresponding to greater constraints.
    In particular, two nodes have the same level in the new split if and only if this is the case for every constraint.

    It is easy to see that for each layer in the new split and for each constraint, there is a layer in the split relative to this constraint which contains the layer in the new split.
    Therefore for every layer in the new split and every constraint, there are no disjoint nodes in the layer which belong to this constraint.
    Therefore we have the second alternative.
    
    If moreover the family of constraints is total, this implies that every layer in the new split is a chain (i.e.~it does not have any pair of disjoint nodes).
    Hence $\Tt$ has bounded Strahler number by Lemma~\ref{lem:strahler-split}, a contradiction.
\end{proof}

\subsubsection*{Right ideals}

Say that a \mso{} property of unary graph terms is a \emph{right ideal} in a class of tree decompositions if for every unary graph terms $E$ and $F$ occurring in decompositions from the class such that $E$ satisfies the property, the term $E \circ F$ satisfies the property.
As a useful consequence of the previous lemma, we get a weaker counterpart of Lemma~\ref{lem:left-ideal-invariant} but for right ideals instead.
As before, we say that a family of ideals is \emph{total} relative to a class of tree decompositions if for every context occurring in the class, the unary graph term induced by the context belongs to one of the ideals from the family.

\begin{lemma}\label{lemma:extract-right-ideal}
    Consider a uniform class $\Tt$ of tree decompositions with unbounded Strahler number, and a finite family of right ideals.
Then there is an ideal $P$ from the family such that:  
    \begin{align*}
   \Tt_P =  \setbuild{ T}{$T$ is a sub-decomposition of $\Tt$ where \\ all contexts induce a graph term which belongs to $P$}
    \end{align*}
    also has unbounded Strahler number.
\end{lemma}

\begin{proof}
    For every non-root node in a tree decomposition from $\Tt$, define its \emph{parent ideal} to be a fixed ideal from the family containing the graph term induced by the context which consists of that node and its parent.
    For an ideal $P$ from the family and a tree decomposition $T \in \Tt$, consider the Strahler constraint given by $X \sim_P X'$ if and only if the graph term induced by the context $Y \subset Z$, where $Y$ is the deepest common ancestor of $X$ and $X'$ and $Z$ is its parent, belongs to $P$.

    In particular, if the deepest common ancestor $Y$ of $X$ and $X'$ is the root, then $X \nsim_P X'$; however as long as $Y$ is not the root, since the family is total, we have $X \sim_P X'$ for some constraint $P$.
    Consider the class $\Tt'$ comprised of sub-decompositions of $T \in \Tt$ obtained by restricting to a child of the root of $T$; note that $\Tt'$ has unbounded Strahler number, is uniform and moreover, the finite family of Strahler constraints defined above is total for $\Tt'$.

    Therefore we may apply Lemma~\ref{lem:strahler-dichotomy-family}, which gives us an ideal $P$ from the family such that
    \[
        \Tt_{\sim_P} =  \setbuild{ T}{$T$ is a sub-decomposition of $\Tt'$ where \\ all pairs of disjoint nodes belong to $\sim_P$}
    \]
    has unbounded Strahler number.

    Let $S$ be a sub-decomposition from $T' \in \Tt'$ such that all disjoint nodes in $S$ are related by $\sim_P$, and the tree of $S$ is a complete binary tree of height $n$.
    Let $T \in \Tt$ be such that $T'$ is obtained from $T$ by restricting to a child of its root; in particular, $S$ is also a sub-decomposition of $T$.

    Let $S'$ be the sub-decomposition of $T$ obtained from $S$ by keeping exactly all nodes that are deepest common ancestors of disjoint nodes in $S$; note that $S'$ is a complete binary tree of height $n-1$ and that for every $S'$-node $Y$ with $T$-parent $Z$, the graph term induced by $Y \subset Z$ belongs to $P$.
    Finally, let $S''$ be the sub-decomposition of $T$ obtained from $S'$ by keeping, for each node of $S'$, its parent.
    Again $S''$ is a complete binary tree of height $n-1$.
 
    Now consider a context $X \subset Z$ in $S''$.
    If $X$ is the $T$-child of $Z$, then the graph term induced by $X \subset Z$ belongs to $P$.
    Otherwise, let $Y$ be the $T$-child of $Z$, so that $X \subset Y \subset Z$, and let $E$ and $F$ denote the unary graph terms induced by $Y \subset Z$ and $X \subset Y$, so that the unary graph term induced by $X \subset Z$ is $E \circ F$.
    Then $E$ satisfies $P$, and therefore we conclude since $P$ is a right ideal.
\end{proof}

Our main application of the previous lemma will be as follows.
Consider a recolouring $e$, and the corresponding \mso{} property of unary graph terms that states that the recolouring of the graph term coincides with $e$.
In a uniform class of tree decompositions, it follows immediately from forward invariance that the above property is a right ideal.
Applying the above lemma to the finite class of right ideals comprised of these properties gives the following statement.

\begin{lemma}\label{lemma:extract-same-recolouring}
    Let $\Tt$ be a uniform class of tree decompositions with unbounded Strahler number.
    Then there is a recolouring $e$ such that
    \[
        \Tt_e =  \setbuild{ T}{$T$ is a sub-decomposition of $\Tt$ where \\ all contexts have recolouring $e$}
    \]
    has unbounded Strahler number.
\end{lemma}
\section{The Dichotomy Lemma in the local case}\label{sec:dichotomy-locals}

We now prove the Dichotomy Lemma assuming no entangled pairs of supercolours.
In this case, all edges are local (see Fact~\ref{fact:local-edges}); we say that a class of tree decompositions is local if in every tree decomposition, all edges are local, i.e.~they connect vertices introduced in nearby nodes.

\begin{lemma}[Dichotomy lemma in the local case]\label{lem:dichotomy-sparse}
    Let $\Tt$ be a class of tree decompositions which is uniform, connected, and local.
    Then either
    \begin{enumerate}[(a)]
        \item the class of underlying graphs from $\Tt$ witnesses one of the two sparse obstructions; or 
        \item if $\Tt$ is locally linear then it is linear.
    \end{enumerate}
\end{lemma}

This essentially corresponds to the case of sparse graphs (i.e.~graphs of bounded treewidth), except that local sub-decompositions could be non-sparse.
Therefore, the proof below is somewhat similar to the proof that graphs of bounded treewidth have definable tree decompositions~\cite{bojanczykDefinabilityEqualsRecognizability2016a}.
In fact, we believe that our proof structure could be used to get an easier proof of the result from~\cite{bojanczykDefinabilityEqualsRecognizability2016a}.

\subsubsection*{Overview of the section}
We first introduce the notion of recurrent outer components, which intuitively capture the idea of locally independent paths following branches of the decomposition.
This is done in Section~\ref{sec:outer-components}.
Since the decompositions are connected, there is at least one recurrent outer component (see Lemma~\ref{lem:at-least-one-recurrent-outer-component}).
Section~\ref{sec:one-component} treats the case where there is a single recurrent outer component, where we will see that a sparse obstruction is covered whenever the Strahler number is unbounded.
Last, Section~\ref{sec:two-components} proves that otherwise, there are exactly two recurrent outer components, and that in this case, the second outcome of the Dichotomy Lemma holds.

\subsection{Recurrent outer components}\label{sec:outer-components}

Throughout the section we will manipulate paths; these are potentially empty, and are not assumed to be simple (i.e.~vertices or edges can occur several times on a path).
Therefore the concatenation of two paths with matching endpoints is always a path.
We say that a path \emph{visits}, or \emph{hits} a subset of vertices $X$, if the path admits an occurrence of a vertex in $X$, and that it \emph{enters} $X$, if it admits an occurrence of an edge whose endpoints both belong to $X$.

\begin{definition}[Outer path]
    An \emph{outer path} in a context $X \subset Y$ is a path inside $Y$ which does not enter $X$.
\end{definition}

We stress that outer paths are allowed to hit $X$.
Here is a picture of outer paths in the graph term corresponding to $X \subset Y$:
\mypic{28}

\begin{definition}[Outer-connectivity]\label{def:outer-components} Consider a context $X \subset Y$ in a uniform tree decomposition.
We say that two (not necessarily distinct) colours $a$ and $b$ are \emph{outer-connected} in the context if there is an outer path which hits $X.a$ and $X.b$.
\end{definition}

In particular, note that a colour $a$ is outer-connected to itself in $X \subset Y$ if and only if there is an edge from $X.a$ to a vertex in $Y \setminus X$, which also amounts to saying that port $a$ has an incident edge in the graph term associated to the context.
This coincides with the definition of outer colours from Section~\ref{sec:connectivity}.
In particular, this need not always be the case, and thus outer-connectivity is not necessarily reflexive.
However, it is transitive.

\begin{lemma}
    In any context, outer-connectivity is transitive.
\end{lemma}

\begin{proof}
    Suppose that we have an outer path in a context $X \subset Y$  from port $a$ to port $b$, and another outer path from port $b$ to port $c$. 
    The only difficulty is that a port is not a single vertex, but an entire contextual class.
    Therefore, the  two intermediate vertices in port $X.b$, i.e.~the target of the first path and the source of the second path, need not be equal.
    We show that the first path can be modified so that such an equality does hold.
    
    Consider the last edge $vw$ used in the first path, which connects some vertex $v$ to the target vertex $w \in X.b$ of the first path. Since edges in outer paths cannot be contained in the argument $X$, and $w$ belongs to the argument, it follows that $v$ does not belong to the argument.
    Therefore, the connections between $v$ and vertices in the argument $X$ depend only on the colours of the latter vertices, and so $v$ has an edge to every vertex in $X.b$.
    Hence we can modify the first path so that its target vertex is equal to the source of the second path. 
\end{proof}

Clearly outer-connectivity is also symmetric.
Therefore, the relation  groups the colours into \emph{outer components},\footnote{We use notations $A,B,\dots$ for outer components; there will be no clash with supercolours because these are rarely mixed together in the same reasonings.} which are disjoint but do not necessarily cover all colours.
The following lemma shows that the notion of outer components is invariant in a uniform class, i.e.~all contexts in all tree decompositions have the same notion of outer component.

\begin{lemma}\label{lem:port-components} 
    In a uniform class of tree decompositions, outer components are invariant. 
\end{lemma}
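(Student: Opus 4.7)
The plan is to show that for every pair of colours $a, b$, the property ``ports $a$ and $b$ are in the same outer component'' is MSO-definable and forms an ideal in the semigroup of $q$-theories of contexts, then appeal to Lemma~\ref{lem:left-ideal-invariant} and to uniformity, exactly in the style of the proof of Lemma~\ref{lemma:fixpoint-colour-under-local-edge-property}.

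First, I would translate the outer-component relation into a property of the unary graph term associated with the context. In the graph term for $X \subset Y$, the argument is replaced by $k$ port vertices $p_1,\ldots,p_k$, one per contextual class $X.i$, with $v \in Y \setminus X$ adjacent to $p_i$ iff $v$ is adjacent to every (equivalently, any) vertex of $X.i$. Because the vertices of a contextual class $X.i$ have identical neighbourhoods outside $X$, any outer path from $X.a$ to $X.b$ may be rerouted so that each visit to the argument uses an arbitrary representative of its class; collapsing each class to its port then yields a path from $p_a$ to $p_b$ in the unary graph term. Conversely, any such path lifts to an outer path. So the property $P_{a,b}(E)$ stating ``$p_a$ and $p_b$ lie in the same connected component of $E$'' is exactly the outer-component relation for the corresponding context. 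Connectivity in a graph is MSO-expressible with very low quantifier depth, certainly within our uniform bound $q = 9$, so $P_{a,b}$ depends only on the $q$-theory of $E$.

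Next, I would check that $P_{a,b}$ is an ideal in the semigroup of unary graph terms of type $k \to k$. If $P_{a,b}(F)$ holds, and $E$ is any such term, then in $E \cdot F$ every vertex and edge of $F$ is preserved, so a path witnessing $P_{a,b}(F)$ survives in $E \cdot F$. Hence $P_{a,b}(E \cdot F)$ holds too. Applying Lemma~\ref{lem:left-ideal-invariant} to the forward-invariant additive labelling sending each context to the $q$-theory of its unary graph term, I conclude that within any uniform tree decomposition $P_{a,b}$ holds either for every context or for none.

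Finally, to promote this from a single tree decomposition to the whole class, I would invoke the defining property of a uniform class: the set of possible $q$-theories of contexts is the same across all members. Since $P_{a,b}$ is determined by the $q$-theory, the all-or-nothing dichotomy is uniform across the class. Applying this for each pair of colours, the partition of colours into outer components coincides on every context of every tree decomposition in the class, which is the statement to be proved. The only subtle point is the equivalence between outer paths in the original decomposition and plain paths in the ported graph term, which rests on the fact that members of a contextual class are interchangeable with respect to edges leaving $X$; everything else is a direct instance of the ideal-and-uniformity template already used in the proof of Lemma~\ref{lemma:fixpoint-colour-under-local-edge-property}.
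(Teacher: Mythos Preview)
Your proposal is correct and follows essentially the same approach as the paper's proof: the paper observes that the outer-component relation is readable from the $q$-theory of the context, that the existence of an outer path between two fixed colours is an ideal, and then invokes Lemma~\ref{lem:left-ideal-invariant}. You have simply spelled out these three steps in more detail, including the correspondence between outer paths and paths in the ported graph term and the extension across the class via the shared set of $q$-theories, all of which the paper leaves implicit.
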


\begin{proof}
    The existence of an outer path hitting ports $a$ and $b$ is a left ideal.
    Therefore, the outer components are invariant by Lemma~\ref{lem:left-ideal-invariant}.
\end{proof}

Thanks to the above, it makes sense to talk about outer components in a uniform class, without referring to a particular context or tree decomposition.

\subsubsection*{Recurrent outer components} We will mainly be interested in certain outer components, which we call recurrent.
The idea is that a recurrent outer component can traverse a context $X \subset Y$, i.e.~it can be used to connect vertices from $X$ with vertices outside $Y$.
The formal definition of this concept needs some preparatory material, and will be presented later in this section.
For a node $X$ in a tree decomposition and a set of colours $A$, we write $X.A$ for the union of $X.a$ where $a$ ranges over $A$.

\begin{definition}[Profile]\label{def:profile-of-context}
    Define the \emph{profile} of a context $X \subset Y$ in a uniform tree decomposition to be the following binary relation on outer components
    \begin{align*}
        \setbuild{(A,B)}{$A$ and $B$ are outer components such that  the \\  context admits an outer path from $Y.A$ to $X.B$}
    \end{align*}
\end{definition}
Observe the swapped order of arguments: colours $A$ are used in the bigger node $Y$, and colours $B$ are used in its descendant $X$.
The following lemma will be the foundation for the notion of recurrent outer components.
These will be defined (see Definition~\ref{def:recurrent-outer-component}) to be those that satisfy the third condition in the lemma. 

\begin{lemma}\label{lem:profile-diagonal} Let $X \subset Y$ be a context in a uniform  tree decomposition. Then
\begin{enumerate}
    \item\label{conclusion:profile-partial-function} its profile is a partial function; 
    \item\label{conclusion:profile-idempotent} this partial function is idempotent; and
    \item\label{conclusion:profile-invariant} if an outer component $A$ is such that $(A,A)$ belongs to the profile of some context in the tree decomposition, then it belongs to the profiles of all contexts.
\end{enumerate}
\end{lemma}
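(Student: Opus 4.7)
I would prove the three parts in order.

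For (1), suppose $(A,B_1), (A,B_2) \in P(X \subset Y)$, witnessed by outer paths $\pi_1, \pi_2$ from $Y.A$ to $X.B_1, X.B_2$ respectively. The key observation is that whenever an outer path contacts the argument $X$ at two different vertices, the two contact points have $X$-colours in the same outer component: the subpath between them is entirely outside $X$ (since two consecutive $X$-vertices would force a forbidden inside-argument edge), so it is itself an outer path from port to port. Under the local-edge hypothesis, Lemma~\ref{lemma:fixpoint-colour-under-local-edge-property} implies that outer components contain no fixpoints, so $Y.A$ lies entirely in $Y \setminus X$. Using contextual equivalence at $Y$ to align the two paths at a common starting point in $Y.A$, the union of $\pi_1$ and $\pi_2$ yields (after extracting a simple path) an outer path from $X.B_1$ to $X.B_2$, forcing $B_1 = B_2$ by definition of outer components.

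For (2), the profile is an invariant of the $q$-theory of the context, since $(A,B) \in P$ is expressible by a low-depth MSO sentence asserting the existence of a connected vertex set forming a walk from an output-$A$ vertex to a port of colour in $B$ that avoids edges inside the argument. Let $E$ denote the graph term of the context. Forward invariance of $q$-theories gives $\tau(E) \cdot \tau(E) = \tau(E)$, so $P(E \circ E) = P(E)$. On the other hand, any outer path in the stacked graph $E \circ E$ from top output to bottom input must cross the middle layer at some single outer component, giving the decomposition $P(E \circ E) = P(E) \circ P(E)$ as partial functions. Combining these two identities, $P(E) \circ P(E) = P(E)$, which is idempotence.

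For (3), I would deploy the ideal framework already used elsewhere in the paper (for instance in the proof of Lemma~\ref{lemma:fixpoint-colour-under-local-edge-property}). Fix an outer component $A$, and let $I_A$ be the set of $q$-theories $\tau$ of contexts with $(A,A) \in P(\tau)$; this is a well-defined subset of the semigroup $S$ of $q$-theories of contexts in the tree decomposition, since the property is MSO-expressible with low quantifier depth. The central task is to verify that $I_A$ is a left ideal of $S$. Given $\tau \in I_A$ witnessed by a context $E$, and any $\sigma \in S$ realized by a context $F$, I would produce an outer path in $F \circ E$ witnessing $(A,A) \in P(F \circ E)$. Because $A$ consists of transient colours, the $A$-coloured output vertices of $F \circ E$ come exclusively from vertices introduced in $F$, and such vertices exist by colour surjectivity at the nodes of the tree decomposition. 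Concatenating $F$'s internal descent to the middle layer with the given outer path in $E$ produces the required witness. Once $I_A$ is established as an ideal, Lemma~\ref{lem:left-ideal-invariant} applied to the forward invariant $S$ immediately yields the desired invariance: either every context has $(A,A) \in P$ or none does.

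The main obstacle is the ideal verification in (3): one must construct the concatenated outer path in $F \circ E$, which requires a careful use of $F$'s connectivity from body to middle layer, together with the observation that body vertices with transient colour in $A$ admit some path into the argument region. This analysis relies on the local-edge hypothesis and the invariance of outer components across the uniform tree decomposition. In contrast, (1) is a gluing argument using outer-component transitivity, and (2) is an algebraic consequence of forward invariance of $q$-theories combined with MSO-definability of profiles.
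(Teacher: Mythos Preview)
Your argument for part~(2) is essentially correct and matches the paper's reasoning. However, parts~(1) and~(3) each contain a genuine gap.

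\textbf{Part (1).} The alignment step does not work. The two witnessing outer paths $\pi_1,\pi_2$ start at vertices $u_1,u_2 \in Y.A$, and these may be distinct and not connected inside $Y$. Contextual equivalence at $Y$ says only that $u_1$ and $u_2$ have the same neighbours \emph{outside} $Y$; since both paths live inside $Y$, this gives you no leverage to swap one starting point for the other. The paper's proof avoids this by composing $E$ with itself: in $E \cdot E$, the two starting points $u_1,u_2$ (now sitting in the inner copy) are substituted into ports of the outer copy, and because $A$ is an outer component, the outer copy supplies an outer path connecting them. Concatenating then gives an outer path in $E \cdot E$ from port $B_1$ to port $B_2$, and idempotence of the $q$-theory pushes this back to $E$.

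\textbf{Part (3).} The set $I_A$ is not a left ideal, and your concatenation argument cannot establish that it is. You need, for an arbitrary context $F$, a path in $F$ from an output-$A$ vertex down to the argument layer; nothing guarantees this (an output-$A$ vertex of $F$ could be isolated from the ports of $F$). If you try instead to take $S$ to be the sub-semigroup of context theories, then $S$ is left-zero, and the only left ideals are $\emptyset$ and $S$ --- so proving $I_A$ is a left ideal is equivalent to the conclusion. The paper argues differently: it observes that any outer path in $E \cdot F$ reaching a port must cross the middle layer, whence $\mathrm{im}\,P(E\cdot F)\subseteq \mathrm{im}\,P(F)$; combined with forward invariance giving $P(E\cdot F)=P(E)$, this yields $\mathrm{im}\,P(E)\subseteq\mathrm{im}\,P(F)$ for all $E,F$, hence all profiles share the same image. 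Since profiles are idempotent by~(2), having $A$ in the image is equivalent to $(A,A)$ belonging to the profile.
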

\begin{proof}
We begin with the first item, about partial functions.
    \begin{claim}\label{claim:profile-partial-function}
        For every context and  every outer component $A$, there is at most one outer component $B$ such that $(A,B)$ belongs to the profile of the context.
    \end{claim}
    \begin{claimproof}
        Suppose that for some context $X \subset Y$, the profile contains $(A,B_1)$ and $(A,B_2)$ for two outer components $B_1, B_2$.
        Let $E$ be the unary graph term that corresponds to this context.
        The following picture shows that if we compose $E$  with itself, then in the resulting unary graph term $E \cdot E$, we  get an outer path from some port in  $B_1$ to some  port in  $B_2$:
        \mypic{90}
        Since the existence of such a path can be read from the \mso{} theory, and the theory of $E$ is idempotent, it follows that such a path would exist also in the original unary graph term $E$, therefore $B_1$ and $B_2$ are connected by an outer path in $E$, and thus they are the same outer component.  
    \end{claimproof}
    
    Inspired by the above claim, instead of saying that $(A,B)$ belongs to the profile, we  say that $A \mapsto B$ belongs to the profile, to underline the fact that the profile is a partial function from the first argument to the second one. The following claim shows that composition of such partial functions is consistent with composition of graph terms. (To avoid confusion, let us recall how we compose unary graph terms: in the composition $G \cdot H$, the argument of $G$ is replaced by $H$.) The following claim shows that mapping a unary graph term to its profile is a homomorphism, from unary graph terms to partial functions on outer components.

    \begin{claim}\label{claim:signature-homomorphism} Consider two unary graph terms $G$ and $H$ which arise from contexts in the class of tree decompositions.  If $A \mapsto B$ belongs to the profile of $G$ and  $B \mapsto C$ belongs to the profile of $H$, then $A \mapsto C$ belongs to the profile of their composition $G \circ H$.
    \end{claim}
    \begin{claimproof}
Essentially the same argument as in Claim~\ref{claim:profile-partial-function}, see the following picture:
        \mypic{106}
    \end{claimproof}

    Since profiles can be read from \mso{} theories, and \mso{} theories are idempotent, it follows that the profiles are idempotent, thus proving the second item of the lemma.  
    
    Consider now the last item of the lemma, which says that if $(A,A)$ belongs to one profile, then it belongs to all profiles. As we have proved above, the  profiles are idempotent partial functions. This means that $(A,A)$ belongs to a profile if and only if $A$ belongs to the image of the corresponding partial function. Therefore, we need to show that if $A$ belongs to the image of some profile (when viewed as a partial function), then it belongs to the images of all profiles. This is a simple corollary of forward invariance, since the image of the profile of $E \cdot F$ is contained in the image of $F$, and also equal to the image of the profile of $E$. Therefore all profiles have the same image.\footnote{This is a somewhat dual argument to the one in Lemma~\ref{lem:forward-invariant-consequences-on-recolourings}. In this lemma, the functions are top-down, i.e.~they are directed from $Y$ to $X$. In Lemma~\ref{lem:forward-invariant-consequences-on-recolourings}, the  functions are bottom-up, i.e.~they are directed from $X$ to $Y$. In the top-down case, the images are invariant, while in the bottom-up case, the kernels are invariant.}
\end{proof}

The outer components  which satisfy the third condition from the above lemma will be of particular interest, so we give them a name. 

\begin{definition}[Recurrent outer components]\label{def:recurrent-outer-component}
    An outer component in a uniform tree decomposition is called \emph{recurrent} if it belongs to the image of some (equivalently, every) profile  in the tree decomposition.
\end{definition}

If we take a uniform class of tree decompositions, then all tree decompositions in this class will have the same recurrent outer components, and therefore it is meaningful to speak about the recurrent outer components of a uniform class.

\subsection{One recurrent outer component}\label{sec:one-component}

First let us make the following observation.

\begin{lemma}\label{lem:at-least-one-recurrent-outer-component}
    Let $\Tt$ be a uniform, connected and local class of tree decompositions of Strahler number $\geq 3$.
    There is at least one recurrent outer component.
\end{lemma}

\begin{proof}
    Consider three nodes $X \subset Y \subset Z$ (such nodes exist by our assumption on the Strahler number) and a path from $Z \setminus Y$ to $X$, which exists thanks to connectivity.
    Let $y$ and $x$ denote the first occurrences of a vertex in $Y$ and in $X$ on the path.
    Then they belong to outer components $C$ and $D$, such that $(C,D)$ is in the profile of $X \subset Y$.
    Therefore $D$ is a recurrent outer component.
\end{proof}

This section proves the following lemma, which implies the Dichotomy Lemma in the local case with a single recurrent outer component.

\begin{lemma}\label{lem:dichotomy-local-one-recurrent-component}
    Let $\Tt$ be a class of tree decompositions which is uniform, connected, local, and admits a single recurrent outer component.
    Then $\Tt$ witnesses either obstruction $A$ or $B$.
\end{lemma}

\begin{proof}
    We start with a quick technical statement.

    \begin{claim}\label{claim:image-of-recolourings-avoid-outer-components}
        Let $X \subset Y$ be a context in a tree decomposition from $\Tt$ where $Y$ is not the root.
        Colours in the image of the recolouring are not outer colours. 
    \end{claim}

    \begin{claimproof}
        Assuming otherwise, we would get a vertex from $X$ which has an edge to a vertex outside $Y$, which contradicts the fact that all edges are local.
    \end{claimproof}

    Let $A$ denote the unique outer component.
    We now identify connected separators which we will call centers.
    The proof below relies on idempotence of the theories of contexts in a uniform class of tree decompositions.

    \begin{claim}\label{claim:heart}
        Let $X \subset Y$ be a context in a tree decomposition from $\Tt$. There is a subset of the vertices introduced in this context, which we call the \emph{center}, such that:
        \begin{enumerate}
            \item the subgraph induced by the center is connected;
            \item every path from $X$ to a vertex outside $Y$ visits a vertex from the center.
        \end{enumerate}
    \end{claim}

    \begin{claimproof}
        First note that if $Y$ is the root, then there is nothing to prove since there is no path leaving $Y$.
        So we assume that $Y$ is not the root.

        Consider the \mso{} formula that states that in a given graph term, there exists a connected subset of introduced vertices which intersects every path from a port of the term to a vertex with an outer colour.
        Since vertices with outer colours are exactly those that are connected to the exterior, the above formula states the existence of a center for the context associated to the graph term.

        Let $G$ be the unary graph term that is induced by the context.
        We now show that the property from the statement of the claim holds in $G \circ G$, i.e.~the result of composing $G$ with itself.
        Since the theory is idempotent for $G$, it follows that also $G$ must have the property from the statement of the claim, i.e.~existence of a center.
        In $G \circ G$, vertices are either part of the argument (these are the ports), introduced in the inner copy (call these inner vertices), or in the external copy (call these external vertices).
        Let $f$ be the recolouring of $G$.
        Inner vertices as well as ports have an inner colour (this is their colour with respect to the inner copy of $G$), as well as an external colour, which is obtained by applying $f$ to their inner colour. (Likewise, the inner colour of a port $a$ is $f(a)$.)
        External vertices also have an external colour.
        Note that by Claim~\ref{claim:image-of-recolourings-avoid-outer-components}, the external colours of inner vertices cannot be outer colours.

        Let $W$ denote the set of inner vertices whose inner colour belongs to the recurrent component $A$.
        By the assumption that $A$ is a single outer component, we know that $W$ is connected inside $G \circ G$, using only outer paths excluding edges among vertices from the inner copy of $G$ (i.e.~inner vertices or ports); in particular these paths avoid the argument, so they use only vertices introduced in $G \circ G$.
        For each pair of vertices of $W$, fix such a path inside $G\circ G$ avoiding the argument and connecting the two vertices, and define the candidate center to be the union of all these paths.
        Consider a path from a port to a vertex $y$ whose external colour is an outer colour (in particular, $y$ must be an external vertex).

        This path has to visit an inner vertex $x$, otherwise there would be an edge connecting a port to an external vertex, which is not possible since, by Claim~\ref{claim:image-of-recolourings-avoid-outer-components}, inner colours of ports cannot belong to outer components.
        Now let $c$ be the inner colour of $x$, we claim that $c \in A$: this is because $y$'s outer colour belongs to an outer component $B$, and since $x$ and $y$ are connected, it holds that $B \mapsto C$, where $C$ is the outer component of $c$, is in the profile of $G \circ G$, which is the same as the profile of $G$ since profiles can be read from the $\mso{}$ theory.
        Therefore the path indeed intersects our candidate center, which concludes the proof. 
    \end{claimproof}

    An induced minor $H$ of a graph $G$ is a graph obtained from $G$ by contracting edges and removing vertices.
    Equivalently, the vertices of $H$ are given by a family of pairwise disjoint subsets of vertices of $G$, and vertices are connected by an edge if the subsets are connected by an edge.
    We deduce the following claim.

    \begin{claim}
        For every $n \in \{1,2,\dots\}$, the full binary child tree of depth $n$ occurs as an induced minor of an underlying graph of $\Tt$.
    \end{claim}

    \begin{claimproof}
        Let $n \in \{1,2,\dots\}$ and take $T \in \Tt$ which has Strahler number $\geq 2n$.
        
        Thanks to the previous claim, we know that for every context $X \subset Y$ of $T$, there is a connected component of the induced subgraph over $Y \setminus X$ that intersects all paths going from $X$ to the complement of $Y$.
        We call this component the \emph{center} of $X \subset Y$.
        Now consider nodes $X,X',Z,Z'$ and $Y$ such that $X \subset Z \subset Y$ and $X' \subset Z' \subset Y$.
        Since $\Tt$ is connected, the centers of $X \subset Y$ and $X' \subset Y$ are connected in $Y$.
        Since moreover there are no edges between $X$ and $X'$ (because all edges are local), these two centers are connected in $Y \setminus (X \cup X')$.
        Stated differently, there is a connected component of $Y \setminus (X \cup X')$ which intersects all paths going from $X$ or $X'$ to the complement of $Y$; call this component the \emph{center} of $X,X' \subset Y$.

        Take a sub-decomposition $S$ of $T$ which is a full binary tree with the property that for siblings $X,X'$ in $S$ with $S$-parent $Y$, there are nodes $Z,Z'$ in $T$ such that $X \subset Z \subset Y$ and $X' \subset Z' \subset Y$; note that $S$ can be chosen of depth $n$.
        In particular, whenever $X,X'$ are disjoint nodes of $S$, they are not nearby in $T$ and therefore there are no edges between $X$ and $X'$.
        
        To obtain the full binary child tree, it suffices to proceed as follows.
        First we contract every leaf of $S$, which is connected since $\Tt$ is connected.
        Then for every siblings $X,X'$ in $S$ with $S$-parent $Y$, we contract the center of $X,X' \subset Y$, and remove all other vertices from $Y \setminus (X \cup X')$.
    \end{claimproof}

    We conclude thanks to~\cite[Lemma~5]{Hickingbotham2023} that $\Tt$ contains subdivisions of arbitrarily large full binary child trees or their line graphs, which gives the wanted result.
\end{proof}

\subsection{At least two recurrent outer components}\label{sec:two-components}

This section deals with the case where there are at least two recurrent outer components.
We will see that in this case, the second alternative holds in the Dichotomy Lemma.

\begin{exampl}\label{example:folded-path}
    A typical situation with two recurrent components is given by the following template:
    \mypic{137}
    This template generates a single path corresponding to a depth-first traversal of the decomposition; in particular it has a linearisation of bounded width.
    In a decomposition using this template, contexts look like this:
    \mypic{138}
    Here there are two recurrent components, $\{a\}$ and $\{c\}$, and $b$ is not part of an outer component.
\end{exampl}
    
In fact, we will prove that whenever there are at least two recurrent components, decompositions behave in a similar fashion, i.e.~they are essentially a folded path.

    \begin{lemma}\label{lem:two-component} Let $\Tt$ be a class of $k$-tree decompositions which is uniform, connected, local, and admits at least two recurrent outer components.
    If $\Tt$ is locally linear, then it is linear.
\end{lemma}

\begin{proof}
    The next claim is the key technical result.
    
    \begin{claim}\label{claim:entering-descendants}
        Let $A$ and $B$ be different recurrent outer components, and let $Y$ be a node.
        Then every path in the induced subgraph of $Y$ from $Y.A$ to $Y.B$ enters every descendant of $Y$.
    \end{claim}

    \begin{claimproof}
        Let $X$ be a descendant of $Y$, and assume for contradiction that there is a path from $Y.A$ to $Y.B$ that does not have an edge inside $X$, i.e.~an outer path in $X \subset Y$.
        By definition of recurrent components, we also have outer paths from $X.A$ to $Y.A$ and from $X.B$ to $Y.B$.
        Let $E$ denote the theory of the context $X \subset Y$.
        Then by definition of outer components, the dashed paths in the picture below appear in $E \circ E$.

        \mypic{111}

        We conclude by idempotence that in $E$, there is an outer path from $X.A$ to $X.B$, a contradiction.
    \end{claimproof}

    We say that $X$ occurs before $X'$ or that $X'$ occurs after $X$ on a path if for every occurrence of $X$ on the path, there is a later occurrence of $X'$, and for every occurrence of $X'$, there is a previous occurrence of $X$.
    Stated differently, the first occurrence of $X$ is before the first occurrence of $X'$, and the last occurrence of $X$ is before the last occurrence of $X'$.
    We deduce the following statement.

    \begin{claim}\label{claim:local-children-order}
        Let $A$ and $B$ be different recurrent outer components, and let $Y$ be a node.
        There is a linear order on the children of $Y$ such that for every two children $X \leq X'$ and for every path from $Y.A$ to $Y.B$, $X$ occurs before $X'$.
    \end{claim}

    \begin{claimproof}
        Let $X,X'$ be two children of $Y$.
        It follows from the above claim and the fact that $X$ and $X'$ are connected that on every path from $Y.A$ to $Y.B$, either $X$ occurs before $X'$ or $X'$ occurs before $X$.
        Assume that there are two paths $P,P'$ from $Y.A$ to $Y.B$ such that $X$ occurs before $X'$ on $P$ and $X'$ occurs before $X$ on $P'$.
        Then by concatenating a prefix of $P$, a path in $X$, and a suffix in $P'$, we get a path from $Y.A$ to $Y.B$ which does not hit $X'$, contradicting the previous claim.

        Therefore every pair of children $X,X'$ can be ordered as in the statement, by picking any path from $Y.A$ to $Y.B$ and checking whether $X$ occurs before $X'$.
        Moreover, this relation is clearly transitive and antisymmetric.
    \end{claimproof}

    We refer to the order from the previous claim as the \emph{child order at $Y$}.
    We also have the following surprising consequence of Claim~\ref{claim:entering-descendants}.

    \begin{claim}
        Unless every decomposition from $\Tt$ is a chain, there are at most two recurrent outer components.
    \end{claim}

    \begin{claimproof}
        Assume $\Tt$ is not a chain, so there are two siblings $X,X'$ with parent $Y$.
        Assume that there are three different recurrent outer components $A,B,C$.
        Consider the graph induced on $Y$, where moreover $X$ and $X'$ are contracted.
        Then the vertices corresponding to $X$ and $X'$ both separate $Y.A$ and $Y.B$ by Claim~\ref{claim:entering-descendants}.
        Consider a path in this graph from $Y.A$ to $Y.B$, and assume that $X$ and $X'$ are visited only once in that order.

        Consider the projection of $Y.C$ on this path.
        Since $X$ and $X'$ appear on every path from $Y.C$ to $Y.B$, it must be that $Y.C$ projects to a vertex before (or equal to) the vertex corresponding to $X$ on the path.
        But since $X$ and $X'$ appear on every path from $Y.C$ to $Y.A$, it must be that $Y.C$ projects to a vertex after (or equal to) the vertex corresponding to $X'$ on the path; a contradiction.
    \end{claimproof}
    
    If every decomposition from $\Tt$ is a chain, then $\Tt$ is linear.
    So from now on we assume that this is not the case, hence there are exactly two recurrent outer components, that we call $A$ and $B$.
    
    Given a node $X$, we refer to sets $X.A$ and $X.B$ as the two \emph{$X$-blocks}.
    Recall from Claim~\ref{claim:image-of-recolourings-avoid-outer-components} that unless $X$ is the root, vertices in $X$-blocks are introduced in $X$.
    A \emph{block} is an $X$-block for some node $X$.
    We have a linear order on the blocks, defined by setting $X.A<X.B$ for leaves, and for every non-leaf node $Y$ whose children are $X_1,\dots,X_n$ in the child order at $Y$:
    \[
        Y.A < \text{ blocks of $X_1$ and its descendants } < \cdots < \text{ blocks of $X_n$ and its descendants } < Y.B.
    \]
    We say that a path is \emph{block-free} if its inner vertices are not in any block.
    Note that by definition of recurrent outer components, for any context $X \subset Y$, it holds that block $X.A$ separates $Y.A$ from $X$.
    Our work so far leads to the following claim.
    
    \begin{claim}\label{claim:block-free-paths-connect-consecutive-blocks}
        Consider two blocks that are connected by a block-free path.
        Then the two blocks are consecutive.        
    \end{claim}

    \begin{claimproof}
        We start with the following statement.

        \begin{subclaim*}
            Consider a block-free path from $x \in X$ to $x' \in X'$ where $X < X'$ are two siblings.
            Then $X$ and $X'$ are consecutive, and moreover $x \in X.B$ and $x' \in X'.A$.
        \end{subclaim*}

        \begin{subclaimproof}
            Let $P_2$ denote the path from the statement.
            Let $Y$ be the parent of $X$ and $X'$, and consider a shortest path $P_0$ in $Y$ from $Y.A$ to $x_0 \in X$.
            By Claim~\ref{claim:local-children-order}, the path $P_0$ avoids $X'$.
            Thanks to connectivity, there is a path $P_1$ from $x_0$ to $x$ inside $X$.
            Concatenating $P_0,P_1$ and $P_2$ gives a path from $Y.A$ to $X'$, where the first occurrence of a vertex $z$ in a node $Z>X$ can only be on $P_2$.
            By the observation above the statement of Claim~\ref{claim:block-free-paths-connect-consecutive-blocks}, $z$ belongs to $Z.A$, and since moreover $P_2$ is block-free, it must be that $z=x'$ and $Z=X'$.
            
            Therefore $x' \in X'.A$, and a symmetric proof gives $x \in X.B$.
            The fact that there is a path from $X$ to $X'$ which avoids other siblings of $X$ and $X'$ implies that $X$ and $X'$ are consecutive thanks to Claim~\ref{claim:local-children-order}.
        \end{subclaimproof}

        Now consider two blocks associated to nodes $X,X'$ that are connected by a block-free path.
        If $X=X'$ then by Claim~\ref{claim:entering-descendants} it must be that $X$ is a leaf and thus its two blocks are indeed successors.
        Otherwise, let $Y$ be their deepest common ancestor.
        If $Y \neq X$ and $Y \neq X'$, then $X$ and $X'$ are contained in two distinct children of $Y$, and therefore the subclaim concludes.

        Otherwise, $X$ and $X'$ are ancestor-descendants of one another; without loss of generality we assume that $X' \subset X$ and that the block-free path is from $X.A$ to $x' \in X'$.
        Therefore we should prove that $X'$ is the first child of $X$ and that $x' \in X'.A$.
        This follows immediately from Claim~\ref{claim:local-children-order} and the observation above the statement of Claim~\ref{claim:block-free-paths-connect-consecutive-blocks}.
    \end{claimproof}

    Given a non-leaf node $Y$ and a vertex $z$ (which is not necessarily in $Y$), we say that
    \begin{itemize}
        \item\label{item:attach-YA} $z$ attaches to $Y.A$ if $z \in Y.A$ or if there is a block-free path from $z$ to $Y.A$ and there is no block-free path from $z$ to a block $< Y.A$;
        \item\label{item:attach} $z$ attaches between two consecutive children $X<X'$ if $z$ is on a block-free path from $X.B$ to $X'.A$; and
        \item\label{item:attach-YB} $z$ attaches to $Y.B$ if $z \in Y.B$ or if there is a block-free path from $z$ to $Y.B$ and there is no block-free path from $z$ to a block $> Y.B$.
    \end{itemize}
    We say that a vertex attaches within node $Y$ if one of the three above cases hold.
    Note that since $Y$ is not a leaf, the three cases are mutually exclusive.
    Here is a picture:
    \mypic{166}
    For a leaf $Y$, we say that $z$ attaches to $Y$ if the first or the third item above occurs for $z$ and $Y$ (note that in this case, they are not mutually exclusive).
    The picture above is justified by the following claim.

    \begin{claim}
        Let $Z$ be a node and let $z \in Z$.
        Then there is a unique descendant $Y \subseteq Z$ such that $z$ attaches within $Y$.
    \end{claim}

    \begin{claimproof}
        Note that by connectivity, $z$ either belongs to a block contained in $Z$, or has a path to such a block.
        If $z$ belongs to a block in $Z$, or has a block-free path to a single block, then it attaches to this block.
        Otherwise, by the previous claim, $z$ has block free paths only to two consecutive blocks $R<R'$.
        There are four cases.
        \begin{itemize}
            \item $R=Y.A$ and $R'=X.A$ where $X$ is the first child of $Y$.
            Then $z$ attaches within $Y$, to $Y.A$.
            \item $R=X.B$ and $R'=X'.A$ where $X$ and $X'$ are two consecutive siblings with parent $Y$.
            Then $z$ attaches within $Y$, between $X$ and $X'$.
            \item $R=Y.A$ and $R'=Y.B$ where $Y$ is a leaf.
            Then $z$ attaches to $Y$.
            \item $R=X.B$ and $R'=Y.B$ where $X$ is the last child of $Y$.
            Then $z$ attaches within $Y$, to $Y.B$.\qedhere
        \end{itemize}
    \end{claimproof}

    We say that a vertex attaches below a node if it attaches within a (non-proper) descendant of that node.
    Consider the linear preorder given by
    \begin{align*}
        \text{ vertices attached to $Y.A$} < &\text{ vertices attached below $X_1$} \\< &\text{ vertices attached between $X_1$ and $X_2$} \\< & \ \cdots \\<& \text{ vertices attached between $X_{n-1}$ and $X_n$}\\ < &\text{ vertices attached below $X_n$} \\<& \text{ vertices attached to $Y.B$}
    \end{align*}
    We call this the \emph{outer preorder}.

    \begin{claim}
        The outer preorder has width $\leq k+1$.
    \end{claim}

    \begin{claimproof}
        Fix a tree $T \in \Tt$, and consider a prefix of the outer preorder.
        There are a few cases.
        \begin{itemize}
            \item The last class of the prefix is vertices that attach to $Y.A$, where $Y$ is a non-leaf node.
            Then the only edges between the prefix and the suffix are between $X$ and its complement, where $X$ is the first child of $Y$.
            Since $X$ has at most $k$ contextual classes, the width is bounded by $k+1$ (the (potentially) additional class being the one with no edges to the exterior).
            \item The last class of the prefix is vertices that attach between $X$ and $X'$.
            Then the only edges between the prefix and the suffix are between $X'$ and its complement, so we conclude as above.
            \item Otherwise, either the first class of the suffix is vertices that attach to $Y.B$, where $Y$ is a non-leaf node, or vertices that attach between $X$ and $X'$, and we conclude symmetrically.\qedhere
        \end{itemize}
    \end{claimproof}

    To turn the outer preorder into a linearisation, we will use the following claim.

    \begin{claim}
        Consider two vertices $x,x'$ that are connected by a block-free path.
        Then they are introduced in nodes $X,X'$ that are at distance $\leq 5$ in the child graph of the decomposition tree.
    \end{claim}

    \begin{claimproof}
        This is because all edges are local, and if a path visits vertices introduced in nodes $Z_0 \subset Z_1 \subset Z_2$, then by the same argument as in Lemma~\ref{lem:at-least-one-recurrent-outer-component}, it hits a $Z_0$-block.
    \end{claimproof}

    The next claim will allow us to define the wanted inner preorder.

    \begin{claim}
        Consider a class of the outer preorder.
        It is contained in a sub-decomposition of $\Tt$ of Strahler number $\leq 10$.
    \end{claim}

    \begin{claimproof}
        This is because every class of the outer preorder is comprised of vertices that have a block-free path to some block, so the previous claim concludes. 
    \end{claimproof}

    Therefore for every class of the outer preorder, we get an inner preorder of bounded width by using Lemma~\ref{lem:dichotomy-holds-when-strahler-bounded}, and the fact that $\Tt$ is locally linear, so that classes of the inner preorder are introduced in a single node.
    We conclude by combining the two preorders using Lemma~\ref{lem:combining-preorders}.
\end{proof}

\section{The Dichotomy Lemma in the non-local case} 
\label{sec:dichotomy-non-local}

We now turn our attention to the bulk of the proof, which concerns classes of tree decompositions which are uniform, normalised, connected, and admit a local edge.
In this case, we will prove a variant of the Dichotomy Lemma for which we have more control on how the obstructions are witnessed.
The idea will be to obtain binary sub-decompositions from the class $\Tt$ in such a way that for some obstruction, for every node $Y$ with two children $X,X'$, the graph term induced by the bicontext $X,X' \subset Y$ will contain, as an induced subgraph, a graph term matching the template of the obstruction.
This requires defining induced subgraphs of graph terms.

We say that a $k$-graph term $G$ is an induced subgraph of a $k'$-graph term $G'$ if $G$ and $G'$ have the same arity and there are inclusions $\iota$ from $k \to k'$ and from vertices of $G$ to vertices of $G'$ such that
\begin{itemize}
    \item introduced vertices in $G$ are mapped to introduced vertices in $G'$;
    \item the $i$-th port of the $j$-th argument of $G$ is mapped to the $\iota(i)$-th port of the $j$-th argument of~$G'$;
    \item for all vertices $x$ in $G$ it holds that $\iota(\text{colour of x}) = \text{colour of } \iota(x)$; and
    \item for all vertices $x,y$ in $G$ it holds that $xy$ is an edge in $G$ if and only if $\iota(x)\iota(y)$ is an edge in $G'$.
\end{itemize}
We are now ready to define covering.

\begin{definition}[Covering]\label{def:covering}    
    Consider a tree decomposition $T$.
    We say that a node $X$ \emph{covers} a graph term if the graph term is contained as an induced subgraph of the graph induced by the node.
    We say that a bicontext \emph{covers} a template if the graph term induced by the bicontext contains a graph term matching the template as an induced subgraph.

    We say that a class of $k$-tree decompositions $\Tt$ \emph{covers a $k$-pattern} if for every $n \in \{1,2,\dots\}$, there is a decomposition $T \in \Tt$ which has a sub-decomposition $S$ such that:
\begin{itemize}
    \item the tree of $S$ is a full binary tree of depth $n$;
    \item every leaf of $S$ covers the initial graph of the pattern;
    \item for every non-leaf node $Y$ of $S$ with children $X,X'$ (in a well-chosen order), the bicontext $X,X' \subset Y$ covers the template of the pattern.
\end{itemize}
\end{definition}

The following fact follows from unravelling the definitions, and observing that removing a vertex $y$ introduced in a node $Y$ from the underlying graph of $T$ amounts to removing it from the graph term induced by the bicontext $X,X' \subset Y$ as above.

\begin{fact}\label{fact:covering}
    Let $\Tt$ be a class of $k$-tree decompositions which covers a pattern.
    Then there is a class generated by the pattern such that every graph in that class is an induced subgraph of the underlying graph of some $T \in \Tt$.
\end{fact}

If the pattern from the above fact is an obstruction $O \in \Oo$, then in particular, this implies that $\Tt$ witnesses $O$.

Recall the finite class of obstructions $\Oo$ with up to five colours from Section~\ref{sec:dichotomy}.
There is a slight technical caveat: when covering a non-sparse, non-deterministic obstruction (i.e.~obstructions $E'$, $G'$, $H'$ and $I'$) we will not be able to ensure that the isolated path traverses only one isolated colour.
Therefore we add variants of these obstructions with several isolated colours: let $\bar{\Oo}$ denote the set of obstruction obtained from $\Oo$ by adding empty isolated colour classes.
For instance, this is an obstruction in $\bar \Oo$ obtained from $I'$ by adding $2$ isolated colours:
\mypic{168}


Note that for every obstruction $O$ in $\bar{\Oo}$, there is an obstruction $O'$ in $\Oo$ (the one obtained by removing some empty colour classes) such that classes generated by $O$ are also generated by $O'$.

We say that a class of tree decompositions $\Tt$ reduces to a class $\Tt'$ if the following implications hold:
\begin{itemize}
    \item if $\Tt'$ witnesses an obstruction then so does $\Tt$;
    \item if $\Tt$ is locally linear then so is $\Tt'$; and
    \item if $\Tt'$ is linear then so is $\Tt$.
\end{itemize}

This section proves the following statement.

\begin{lemma}[Dichotomy Lemma in the non-local case]\label{lem:dichotomy-prepared}
    Let $\Tt$ be a uniform, normalised, connected class of $k$-tree decompositions which admits an entangled pair of supercolours and has unbounded Strahler number.
    Then either
    \begin{enumerate}[(a)]
        \item\label{alternativea} $\Tt$ covers a colour-flip of an obstruction from $\bar{\Oo}$;
        \item\label{alternativeb} if $\Tt$ is locally linear, then it is linear; or
        \item\label{alternativec} $\Tt$ reduces to a class $\Tt'$ of $(k-1)$-tree decompositions.
    \end{enumerate}
\end{lemma}

This implies the Dichotomy Lemma from Section~\ref{sec:dichotomy} by induction on $k$, and thanks to the reduction to the uniform, normalised and connected case from the previous section (Lemma~\ref{lem:reduction-uniform-normalised-connected-full}), the previous Section which deals with the case where no supercolours are entangled, and Lemma~\ref{lem:dichotomy-holds-when-strahler-bounded} which allows to assume unbounded Strahler number, together with Fact~\ref{fact:covering}.

\subsubsection*{Proof overview}

The overall idea is that we will progressively identify a linear order in the tree decompositions, and show that any deviation from this linear order will give rise to an obstruction.
This is achieved in three successive steps (Sections~\ref{sec:orientations},\ref{sec:bipolar-orientations} and~\ref{sec:consistent-cuts}) which correspond to different obstructions.
Next, Section~\ref{sec:non-trivial-cuts} provides additional structure up to reducing the number of colours~$k$, which corresponds to Item~\ref{alternativec} in Lemma~\ref{lem:dichotomy-prepared}.
Finally, Section~\ref{sec:linear-global} explains how to obtain linearisations assuming the structure derived in Sections~\ref{sec:orientations}---\ref{sec:non-trivial-cuts}, proving Lemma~\ref{lem:dichotomy-prepared}.

\subsection{Orientations}\label{sec:orientations}

In the first step toward defining this order discipline, we will study connections between disjoint nodes.
We will see that unless an obstruction is covered, such connections must necessarily result in some orientation of the entanglement relation, with orientations defined as follows.

\begin{definition}[Orientation]\label{def:orientation}
    An \emph{orientation} for a uniform tree decomposition is a binary relation~$\to$ on its supercolours such that for every supercolours $A$ and $B$, possibly equal, we have:
    \begin{enumerate}
        \item if $A \hgraph B$ then exactly one of $A \to B$ or $B \to A$ holds; and
        \item if $A \hgraph B$ does not hold, then neither $A \to B$ nor $B \to A$.
    \end{enumerate}
\end{definition}

The following straightforward fact, whose main purpose is clarifying the definition above, shows  that an orientation can only arise if the entanglement relation is an undirected graph (in which case it corresponds to the usual notion of orientation for an undirected graph).
\begin{fact} \label{fact:orientation-must-be-directed} An orientation is possible if and only if the entanglement relation is symmetric and irreflexive.
\end{fact}

\begin{proof}[Proof of Fact~\ref{fact:orientation-must-be-directed}]
    Let us first clarify an edge case concerning the first item in the definition of orientations: when $A = B$, then we count the conditions $A \to B$ and $B \to A$ as two conditions, so it cannot be the case that exactly one of them holds.  In particular, if there is an orientation, then the entanglement relation must be irreflexive, i.e.~we cannot have self-entanglement $A\hgraph A$. 
    
    Let us now prove symmetry.  Toward a contradiction, suppose that we would have $A \hgraph B$ but not $B \hgraph A$. Then an orientation would be impossible, since the first item would require one of the conditions $A \to B$ or $B \to A$ to be true, while the second item would require none to be true. 
\end{proof}

We will be able to reduce to tree decompositions where every choice of disjoint nodes and recolourings leads to an orientation, as described in the following definition.
We say that a recolouring \emph{occurs} in a tree decomposition (or in a class of tree decompositions) if it is the recolouring of some context.

\begin{definition}[Oriented tree decomposition]\label{def:oriented-tree-decomposition}
    We say that a uniform tree decomposition is \emph{oriented} if for every:
    \begin{itemize}
        \item disjoint nodes $X$ and $X'$; and 
        \item recolourings $e$ and $e'$ that occur in the decomposition,
    \end{itemize}
    the following binary relation on supercolours is an orientation:
    \begin{align*}
    A \to B  
    \quad \iff \quad
    \text{there is an edge from $X.e(A)$ to $X'.e'(B)$}.
    \end{align*}
\end{definition}

For an oriented tree decomposition, we will use the name \emph{orientation of $(X,e)$ and $(X',e')$} for the orientation that is described in the above definition. 

Ideally, we would like the orientation to depend only on the nodes $X$ and $X'$, and not on the recolourings $e$ and $e'$. 
Eventually, we will be able to reduce to this case.
This, however, will require a substantial amount of work. 
Therefore, for the moment we have to work with tree decompositions where all four parameters (two disjoint nodes and two recolourings) are needed to specify an orientation.

The next lemma will allow us to reduce to the oriented case.

\begin{lemma}\label{lem:orientation-reduction}
    Let $\Tt$ be a uniform class of tree decompositions with unbounded Strahler number.
    Then either $\Tt$ covers one of the two stable obstructions, or to each tree decomposition in $\Tt$ one can associate a split so that the heights of the splits are bounded, and all layers of the splits are oriented tree decompositions.
\end{lemma}

\begin{proof}
    Before proving the lemma, we clarify a point about its conclusion.
    In the definition of an oriented tree decomposition, there is a quantification over recolourings occurring in the tree decomposition.
    Therefore, if we take a layer of a split, then the corresponding sub-decomposition might have fewer recolourings that occur in it, and for this reason it might be easier for the sub-decomposition to be oriented.
    This will be leveraged in the proof below.

    Since $\Tt$ is uniform, all  tree decompositions in $\Tt$ have the same entanglement relations on supercolours. We will be interested in a finite set of Strahler constraints, which correspond to entangled pairs.
    The following definition is local to this proof, and unrelated to the notion of yes-connections from Section~\ref{sec:supercolours}.

    \begin{definition}\label{def:constraints-in-proof-of-orientation-lemma}
        For a tree decomposition in $\Tt$, define the \emph{yes-constraint} of an entangled pair $A \hgraph B$ to be the set of pairs of  disjoint nodes $X$ and $X'$ such that  
        \begin{eqnarray*}
            & \text{($v \in X.A$ and $w \in X'.B$)
            or ($v \in X.B$ and $w \in X'.A$)}\\
            &   \Downarrow \\
            &   \text{$vw$ is an edge}
        \end{eqnarray*}
        holds for every vertices $v$ and $w$.
        The no-constraint is defined similarly, except that the conclusion of  the implication says ``is not an edge''.
    \end{definition}

    Each of the constraints in the above definition is easily seen to be a Strahler constraint,  because moving the nodes $X$ or $X'$ lower in the tree preserves the defining property.  This gives us a family of Strahler constraints, one for each entangled pair $A \hgraph B$ of supercolours and every $\gamma \in \set{\text{yes, no}}$. 
    Applying Lemma~\ref{lem:strahler-dichotomy-family} we get either
    \begin{enumerate}[(a)]
        \item\label{orientation-case-a} for some Strahler constraint in the family,  there are sub-decompositions of $\Tt$ that have  unbounded Strahler number and in which all disjoint pairs belong to the  Strahler constraint; or 
        \item\label{orientation-case-b} to each tree decomposition in $\Tt$ one can associate a split so that the heights of the splits are bounded and on each layer, no pair of disjoint nodes belongs to any of the Strahler constraints.
    \end{enumerate}
        
    The first case gives an obstruction.
    
    \begin{claim}
        If case~\ref{orientation-case-a} above holds, then $\Tt$ covers one of the two stable obstructions.
    \end{claim}
        
    \begin{claimproof}
        
    Assume that for some $\gamma$ and entangled pair $A \hgraph B$ the class
    \begin{align*}
    \Ss = \setbuild{S}{$S$ is a sub-decomposition of $\Tt$ such \\ that all  disjoint pairs  in $S$ belong to \\ the constraint for $\gamma$ and $A \hgraph B$}
    \end{align*}
    has   unbounded Strahler number. By Lemma~\ref{lemma:extract-same-recolouring}, we know that there is some recolouring $e$ such that 
    
    \begin{align*}
        \Ss_e =  \setbuild{ S}{$S$ is a sub-decomposition of $\Ss$ where \\ all contexts have recolouring $e$}
    \end{align*}
    
    has unbounded Strahler number.
    Since $\Ss_e$ consists of sub-decompositions of $\Ss$, we know  that all disjoint pairs of nodes belong to the Strahler constraint corresponding to $\gamma$ and $A \hgraph B$.
    
    We say that a bicontext $X,X' \subset Y$ is \emph{separated} if there exists a node $Z$ such that $X,X'\subset Z \subset Y$; this notion will be useful in this proof as well as the subsequent ones.
    We will prove that for one of the three stable obstructions, every such bicontext from $\Ss_e$ covers the template of the obstruction.
    Let $X,X' \subset Z \subset Y$ be a separated bicontext in a tree decomposition from $\Ss_e$.

    \begin{itemize}
    \item Suppose that $A=B$. Let $a$ be the image of $A$ under the shared recolouring $e$.
    We only prove the case when $\gamma$ is ``no'', with the ``yes'' case being symmetric (up to a superflip of $A$ with itself).
    Because $\gamma$ is ``no'',  there is no edge from $X.a$ to $X'.a$.

    Since $A \hgraph A$ is an entangled pair, we know that in the  context $Z \subset Y$  there is some vertex in $Y \setminus Z$ that has supercolour $A$, and is connected to colour $a$ of $X$.
    In particular this vertex is connected to colour $a$ in both $X$ and $X'$. 
    We only keep that vertex in the bicontext and delete the others, leading to  the template of obstruction~$C$:
    \mypic{131}
    \item Suppose now that $A\neq B$. Let $a$ and $b$ be the images, under the recolouring $e$, of the two supercolours.

    \begin{claim}
        We can assume without loss of generality that for every pair of disjoint nodes $X,X'$ from a tree decomposition in $\Ss_e$, ports $X.a$ and $X.a'$ as well as $X.b$ and $X.b'$ are connected by a non-edge. 
    \end{claim}

    \begin{claimproof}
        Since all recolourings in $\Ss_e$ coincide with $e$, it holds that $$X \sim_a X' \text{ if and only if } X.a \text{ and } X'.a \text{ are connected by an edge}$$
        defines a Strahler constraint, and similarly for $\sim_b$, and for non-edges instead of edges.
        Therefore we get four Strahler constraints, corresponding to each possible choice of ``edge'' or ``non-edge'' for both $a$ and $b$, and the corresponding family of Strahler constraints is total.
        Therefore Lemma~\ref{lem:strahler-dichotomy-family} gives a constraint such that the class of sub-decompositions of $\Ss_e$ where all pairs of disjoint nodes belong to that constraint has unbounded Strahler number.

        There remains to possibly apply $AA$ and $BB$ superflips (corresponding to the above constraint) to turn edges into non-edges if necessary. 
    \end{claimproof}
    
    Since $A \hgraph B$, we know that there is some vertex $y$ in $Y \setminus Z$ that has supercolour $A$, and which is connected to $Z.b$. 
    Moreover, whether or not there is such a vertex $y$ which is connected to $Z.a$ is a left-ideal, so, independently of the choice of the context $Y \subset Z$, we may pick $y$ so that this always holds or always does not hold.
    Therefore, the bicontext covers one of the two versions of the template of obstruction $D$ (depending on whether $y$ is connected to $Z.a$ or not):
    \mypic{69}
    \end{itemize}
    In both cases, it is easy to see thanks to uniformity that the initial graph of the templates are covered by the leaves.
    We conclude that $\Ss_e$ covers a stable obstruction, because sub-decompositions comprised of separated bicontexts clearly have unbounded Strahler number.
    Hence this is also the case for $\Tt$.
    \end{claimproof}

    The next claim deals with the second alternative above.

    \begin{claim}
        If case~\ref{orientation-case-b} holds, then each decomposition in $\Tt$ admits a split such that the splits have bounded height and layer sub-decompositions are oriented.
    \end{claim}
    
    \begin{claimproof}
    Assume case~\ref{orientation-case-b}, i.e.~each tree decomposition admits a split so that the heights of the splits are bounded and on each layer no disjoint pair of nodes belongs to any of the Strahler constraints.
    The general idea is that we will further refine the split from the assumption, so that it has fewer recolourings, which will make it easier for the layer to be oriented.

    Consider a tree $T$ in $\Tt$ and layer $S$ of the split described above. Since the notions of orientation and child depend on the chosen tree structure, and we consider several such structures, we will talk about $S$-children, or $S$-orientations below, to make it clear which structure is considered. 
    To each node $Y$ on this layer, define its \emph{profile} to be the set 
    \begin{align*}
    \setbuild{\text{recolouring of $X \subset Y$}}{$X$ is a node of $S$ that  is a \\ proper descendant of $Y$}.
    \end{align*}
    Consider some profile $\Pi$. If we take any sub-decomposition of $S$ that uses only nodes with profile $\Pi$, then the recolourings of that sub-decomposition will be contained in $\Pi$. The following claim shows that such a sub-decomposition will be oriented.

    \begin{subclaim*}
        Let $S$ be a tree decomposition such that no disjoint pair of nodes belongs to any of the Strahler constraints defined in Definition~\ref{def:constraints-in-proof-of-orientation-lemma}, and such that all nodes have the same profile $\Pi$.
        Then it is oriented: for every disjoint nodes $Y,Y'$ and for every recolourings $e, e' \in \Pi$, the relation defined by
        \begin{align*}
        A \to B 
        \quad \iff \quad
        \text{there is an edge from $Y.e(A)$ to $Y'.e'(B)$}
        \end{align*}
        is an orientation.
    \end{subclaim*}

    \begin{subclaimproof}
        Consider an entangled  pair of supercolours $A \hgraph B$. 
        By definition of the profiles, there must be some nodes $X$ and $X'$ such that $X \subset Y$ and $X' \subset Y'$ are contexts with respective recolourings $e$ and $e'$. Because the pair $X,X'$ does not belong to the no-constraint for $A \hgraph B$, it follows that  there must be an edge:
        \begin{align*}
        \text{from $X.A$ to $X'.B$} 
        \qquad \text{or} \qquad
        \text{from $X.B$ to $X'.A$}.
        \end{align*}
        This implies that there must be an edge
        \begin{align*}
        \myunderbrace{\text{from $Y.e(A)$ to $Y'.e'(B)$}}{this implies $A \to B$}
            \qquad \text{or} \qquad 
        \myunderbrace{\text{from $Y.e(B)$ to $Y'.e'(A)$}.}{this implies $B \to A$}
        \end{align*}
        Furthermore, we cannot have both $A \to B$ and $B \to A$, since otherwise the pair $X,X'$ would satisfy the yes-constraint. 
    \end{subclaimproof}

    By the subclaim, if we take a sub-decomposition of $S$ by keeping only nodes with a given profile, then that sub-decomposition will be oriented.
    We can view the profiles as a split, by using an arbitrary total order on profiles.
    For this split, each layer is an oriented tree decomposition.
    \end{claimproof}
    This concludes the proof of the lemma.
\end{proof}

Using the above lemma, together with Lemma~\ref{lem:compose-linearizable} and the observation that taking a sub-decomposition preserves being uniform, normalised, connected, and having an entangled pair of supercolours, we get the following result, which collects all assumptions gathered so far.

\begin{lemma}{\emph{(Orientation Lemma).}}\label{lem:orientation-lemma} In order to prove the Dichotomy Lemma, it is enough to prove that classes of $k$-tree decompositions which 
    \begin{enumerate}
        \item are uniform;
        \item\label{assumption:polarisation-normalised} are normalised;
        \item\label{assumption:connected} are connected;
        \item\label{assumption:polarisation-entangled} contain at least one entangled pair of supercolours;
        \item\label{assumption:polarisation-oriented} contain only oriented tree decompositions,
        \item\label{assumption:polarisation-unbounded-strahler} have unbounded Strahler number;
        \item\label{assumption:polarisation-not-branching} do not cover a colour-flip of an obstruction from $\bar{\Oo}$;
        \item are locally linear;
    \end{enumerate}
    are linear.
 \end{lemma}

The eight conditions from the above lemma will be referred to as \emph{the conditions from the Orientation Lemma}.



    


    

\subsection{Bipolar orientations}
\label{sec:bipolar-orientations}
From now on, we only work with classes of tree decompositions satisfying the conditions from the Orientation Lemma.

Consider a tree decomposition in such a class.
By definition of oriented tree decompositions, for every disjoint nodes $X$ and $X'$, and for every recolourings $e$ and $e'$, the choice of $(X,e)$ and $(X',e')$ in Definition~\ref{def:oriented-tree-decomposition} induces an orientation.
We will eventually show that (a) there are only two possible orientations, and (b) the orientation depends only on the nodes $X$ and $X'$ and not on the recolourings $e$ and $e'$.
This will allow us to orient any two nodes as $X \to X'$ or $X \leftarrow X'$, depending on the direction of the orientation, and will ultimately lead to a linear preorder on the  nodes and vertices of the tree decomposition.
This section is devoted to property (a).
We begin by giving a name to this property.
In the following definition, the \emph{opposite} of an orientation is obtained by swapping the directions of all edges.

\begin{definition}[Bipolar orientations]\label{def:polarized}
    We say that a uniform class of tree decompositions has \emph{bipolar orientations} if there is some orientation $\to$ such that for every pair of  disjoint nodes $X,X'$  and every recolourings $e$ and $e'$ occurring in class, the orientation of $(X,e)$ and $(X',e')$ exists, and is either $\to$ or its opposite. 
\end{definition}

For the moment, whether the orientation is $\to$ or its opposite could depend on the choice of recolourings.
This section proves the following statement.

\begin{lemma}[Bipolar Lemma]\label{lem:bipolar-lemma} If a class of tree decompositions satisfies the conditions from the Orientation Lemma, then it has bipolar orientations.
\end{lemma}

\begin{proof}
    When proving the Bipolar Lemma, we use an alternative characterisation of having bipolar orientations, which is easier to handle because it looks at only four supercolours at a time. Consider two entangled pairs $A \hgraph B$ and $C \hgraph D$. (We must have $A \neq B$ and $C \neq D$, because entanglement is irreflexive, but there could be other equalities, and also other entanglements.) We say that these are \emph{oriented in the same direction} by an orientation $\to$ if 
    \begin{align*}
    A \to B \quad \text{iff} \quad C \to D.
    \end{align*}
    Otherwise, we say that they are \emph{oriented in opposite directions}. This notion depends on the order in the pairs, i.e.~the directions will change if we replace $A \hgraph B$ by $B \hgraph A$. We would like the answer to the question
    \begin{center}
        are they oriented in the same direction?
    \end{center}
    to be independent of the choice of orientation, as expressed in the following definition. 

    \begin{definition}
        Consider two entangled pairs  $A \hgraph B$ and $C \hgraph D$. We say that they are  \emph{oriented consistently} if either: 
        \begin{enumerate}
            \item for every orientation, they are oriented in the same direction; or
            \item for every orientation, they are oriented in opposite directions,
        \end{enumerate}
        where the orientations range over those arising from a choice of disjoint nodes $X,X'$ and recolourings $e,e'$ which occur in the class.
    \end{definition}

    The following claim states that consistent orientations are an equivalent way of defining bipolar orientations.

    \begin{claim}
        The class $\Tt$ is has bipolar orientations if and only if every two entangled pairs  are oriented consistently.
    \end{claim}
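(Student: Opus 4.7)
The plan is to prove the biconditional by unpacking the two definitions, with the forward implication being almost immediate and the reverse implication requiring an explicit construction of the witness orientation.

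For the forward direction I would fix the witness orientation $\to$ provided by bipolarity. Given two entangled pairs $A \hgraph B$ and $C \hgraph D$, inspect whether they are oriented in the same direction or in opposite directions under $\to$. The key observation is that taking the opposite of $\to$ flips both pair-directions simultaneously, so the ``same vs.\ opposite'' relationship is invariant under this swap. Since every orientation arising from some $(X_1, e_1)$ and $(X_2, e_2)$ equals $\to$ or its opposite, every arising orientation orients $(A,B)$ against $(C,D)$ the same way, which is exactly consistency.

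For the reverse direction I would first fix a reference entangled pair $A_0 \hgraph B_0$, which exists by assumption~\ref{assumption:polarisation-entangled} from the Orientation Lemma. Using the consistency hypothesis, I would then define a candidate orientation $\to$ by declaring, for each entangled pair $C \hgraph D$, that $C \to D$ if $(A_0, B_0)$ and $(C, D)$ are oriented in the same direction in every (equivalently, some) arising orientation, and $D \to C$ otherwise. Well-definedness of this recipe is exactly the consistency of $(A_0, B_0)$ with $(C, D)$, and the two defining conditions of an orientation from Definition~\ref{def:orientation} hold by construction.

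To verify bipolarity, I would take any orientation $\to'$ arising from some $(X_1, e_1)$ and $(X_2, e_2)$ and split on the direction of the reference pair under $\to'$. If $A_0 \to' B_0$, then the definition of $\to$ combined with the consistency of each entangled pair with $(A_0, B_0)$ forces $\to'$ to agree with $\to$ on every entangled pair, so $\to' = \to$. If instead $B_0 \to' A_0$, the symmetric argument gives that $\to'$ is the opposite of $\to$. The only subtle point in the whole argument is notational: consistency speaks about ordered pairs, but entanglement is symmetric, so one must carefully track which endpoint of each pair plays the role of source. Fixing the reference pair $(A_0, B_0)$ once and for all resolves this cleanly, and the remainder is a short unpacking of the definitions with no further combinatorial content required.
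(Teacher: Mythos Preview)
Your proposal is correct and follows essentially the same approach as the paper. Both directions match: for the forward implication, both you and the paper observe that passing to the opposite orientation preserves the ``same vs.\ opposite'' relationship between two entangled pairs; for the reverse implication, both fix a reference entangled pair and note that its direction determines the whole orientation, your version being a more explicit spelling-out of the paper's one-line remark that ``an orientation is uniquely determined by how it orients some fixed entangled pair.''
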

    \begin{claimproof}
        For the right-to-left implication, we observe that if all  pairs are oriented consistently,  then an orientation is uniquely determined by how it orients some fixed entangled pair $A \hgraph B$. Therefore, there are two possible orientations, depending on whether this pair is oriented as $A \to B$ or $A \leftarrow B$.

        For the left-to-right implication, we observe that if two entangled pairs  are oriented in the same direction by an orientation $\to$, then the same will be true for the opposite orientation.  Therefore, if there are only two possible mutually opposite orientations, then the above equivalence will hold for all orientations or none.
    \end{claimproof}

    By the above claim, to prove the conclusion of the Bipolar Lemma, it remains to show that every two  entangled pairs are oriented consistently.

    \subsubsection*{Entanglement graph} A simple consequence of the assumption on unbounded Strahler number is that there is some pair of disjoint nodes (we will use the full strength of the assumption later on).
    Since the class has oriented tree decompositions, this pair has some corresponding orientation (for some arbitrarily chosen recolourings).
    Therefore, there exists at least one orientation. 
    Thus by Fact~\ref{fact:orientation-must-be-directed}, the entanglement relation  must be irreflexive and symmetric.
    Therefore, it is meaningful to view it as a graph.  

    \begin{definition}[Entanglement Graph]\label{def:entanglement-graph}
        The \emph{entanglement graph} of a uniform class of tree decompositions is the graph where vertices are supercolours, and edges represent entanglement. 
    \end{definition}

    The entanglement graph is not necessarily  well-defined, but as we have remarked above, it is  well-defined under the conditions from the Orientation Lemma. In the proof below, we will discuss connectivity in the entanglement graph.

    The proof will have two parts. In the first part, we show that for every connected component in the entanglement graph, entangled pairs in that component are consistently oriented. In the second part, we will show that connected components are also consistent with each other. 

    \subsubsection*{One connected component of the entanglement graph} The first part of the proof studies what happens in one connected component of the entanglement graph.
    The essential question is about adjacent edges, i.e.~entangled pairs of the form $A \hgraph B$ and $B \hgraph C$.
    We prove that a violation of consistency will lead to certain patterns, which will in turn allow us to cover some obstruction.
    The next claim identifies these patterns.

    \begin{claim}\label{claim:polarisation-obstructions-one}
        Assume that $A \hgraph B$ and $B \hgraph C$ are not oriented consistently, and let  $a\in A$ and $c \in C$. For every  context $X \subset Y$   one can find vertices  $y_1$ and $y_2$ which are introduced in the context with colour $B$, such that the following hold (see picture~\ref{pic:abc-contexts}):
            \begin{eqnarray*}
            \text{$y_1$ has a non-edge to $X.a$} 
            \quad \text{and} \quad 
            \text{$y_1$ has an edge to $X.c$};
            \\
            \text{$y_2$ has an edge to $X.a$} 
            \quad  \text{and} \quad 
            \text{$y_2$ has an edge to $X.c$}.
            \end{eqnarray*}
    \end{claim}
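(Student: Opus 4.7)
The plan hinges on a crucial observation about the sets appearing in Definition~\ref{def:oriented-tree-decomposition}: each of $X_1.e_1(S)$ and $X_2.e_2(T)$ is a single contextual class of $X_1$ (resp.\ $X_2$), so all vertices in such a class share the same neighbourhoods in the complement of their node. Applying this to both endpoints simultaneously, the bipartite graph between $X_1.e_1(A)$ and $X_2.e_2(B)$ is either complete or empty, and the existentially stated ``$A \to B$'' is in fact equivalent to the \emph{universal} statement that every vertex in $X_1.e_1(A)$ is adjacent to every vertex in $X_2.e_2(B)$; similarly $\neg(A \to B)$ means all such pairs are non-edges. This is the main obstacle to overcome: without this universality, the ``opposite direction'' orientation would only yield two separate witnesses, one adjacent to $X.a$ and another adjacent to $X.c$, and we could not produce a single vertex $v_2$ with both edges.

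With this in hand, I will fix a pair of disjoint nodes $(X_1,X_2)$ with recolourings $(e_1,e_2)$ and form the context $X \subset Y$ by taking $X = X_2$ (so that $X_1 \subseteq Y \setminus X$), with $a = e_2(A)$ and $c = e_2(C)$. A vertex $v \in X_1.e_1(B) \subseteq (Y \setminus X).B$ then has, by the universality principle, an entirely determined edge/non-edge pattern to $X.a$ and to $X.c$ in each of the four possible orientations of the pair $\{A \hgraph B,\ B \hgraph C\}$. A direct case analysis shows that both ``same direction'' orientations, namely $A \to B,\ B \to C$ (at $X = X_2$) and its mirror $B \to A,\ C \to B$ (at $X = X_1$ with $v \in X_2.e_2(B)$), yield the pattern $(\text{non-edge},\text{edge})$, which is precisely a $v_1$; and both ``opposite direction'' orientations, namely $A \to B,\ C \to B$ (at $X = X_1$) and $B \to A,\ B \to C$ (at $X = X_2$), yield $(\text{edge},\text{edge})$, which is precisely a $v_2$. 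Since by hypothesis $A \hgraph B$ and $B \hgraph C$ are not oriented consistently, the class realises orientations in both families, and so there is at least one context in the class containing a $v_1$ and at least one context containing a $v_2$, for the specific fixpoints coming from the witnessing recolourings.

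The final step is to promote these existence statements to \emph{every} context in \emph{every} tree decomposition of the class, and to \emph{every} choice of fixpoints $a \in A$ and $c \in C$. For a fixed pair $(a,c)$, the property ``some vertex of supercolour $B$ introduced in the context has the prescribed edge/non-edge pattern to $X.a$ and $X.c$'' is expressible in \mso at bounded quantifier depth, and forms a left ideal in the semigroup of context theories because left-composing with any other context preserves the introduced vertex together with its adjacencies to the ports. Lemma~\ref{lem:left-ideal-invariant}, applied to the forward-invariant $q$-theories of a uniform tree decomposition, then gives that the property holds at every context; and uniformity of theories across the class transports the conclusion to every tree decomposition. To pass from the specific fixpoints produced by the orientation-witnessing recolourings to arbitrary fixpoints $(a,c)$, I will mimic the composition argument of Lemma~\ref{lem:entanglement}: all recolourings in the class share the supercolour kernel and are idempotent, so for any target fixpoint there is a context whose recolouring collapses the entire supercolour onto it, and composing with such a context transports the edge/non-edge pattern at the ports from one fixpoint to another.
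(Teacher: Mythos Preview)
Your proposal is correct and follows essentially the same approach as the paper. Both arguments extract a witness vertex from one side of the disjoint pair $(X_1,X_2)$ realising each of the two inconsistent orientations (same-direction gives $v_1$, opposite-direction gives $v_2$), place it in the context $X_i \subset \rootnode$, and then use the left-ideal/forward-invariance machinery together with an idempotent-composition step to propagate from the specific fixpoints $e_i(A), e_i(C)$ to all contexts and all choices of $a\in A$, $c\in C$. Your write-up is in fact more explicit than the paper's: you spell out the ``complete or empty bipartite'' structure between two contextual classes and carry out the full four-case analysis, whereas the paper compresses this into a picture and a sentence.
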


    \begin{claimproof}
    Observe a certain asymmetry in the statement of the claim: we do not have control over which particular colour from $B$ will be used by the vertices $y_1$ and $y_2$, but we can enforce the connections to any chosen colours $a \in A$ and $c \in C$. Here is a picture of these vertices:
    \numpic{86}{pic:abc-contexts}
    For the sake of clarity, let us say that a vertex as $y_1$ is a left-witness (for context $X \subset Y$ and colours $a$ and $c$), and that a vertex as $y_2$ is a right-witness (for $X \subset Y,a $ and $c$).
    Such witnesses are required to belong to $Y \setminus X$ and have supercolour $B$.
    Observe that for every pair of colours $a,c$, existence of a left or right witness in a given graph term defines left ideals, and therefore these are invariant by Lemma~\ref{lem:left-ideal-invariant}.

    Let us now find such witnesses. 
    From the assumption that  $A \hgraph B$ and $B \hgraph C$ are not oriented consistently, we know that they are oriented in the same direction for some orientation $\to_1$, and in opposite directions for some other orientation $\to_2$. 
    Here is a picture, where each side of the picture represents an induced subgraph of the graph term induced by $X,X' \subset Y$, for some disjoint nodes $X,X'$ and recolourings $e,e'$ occurring in the class, and any common ancestor $Y$ of $X,X'$:
    \numpic{87}{pic:abc-assumption}
    Consider the first orientation $\to_1$.
    It arises for some pair $(X,e),(X',e')$, where $X$ and $X'$ are disjoint nodes and $e$ and $e'$ are recolourings occurring in the decomposition.
    Then picking any vertex from $X.e(B)$ gives a left-witness for the context $X' \subset \rootnode$, and colours $a_1=e'(A)$ and $c_1=e'(C)$.
    Similarly, the orientation $\to_2$ will give a right-witness for some context $X_2' \subset \rootnode$ and colours $a'_2=e_2'(A)$ and $c'_2=e_2'(C)$ for some recolouring $e_2'$ which occurs in the decomposition.
    
    Thanks to the above observation that existence of witnesses are ideals, we now know that for every context, there exists a left-witness for $a_1,c_1$ and a right-witness for $a_2,c_2$.
    Pick a context $X \subset Y$ which has recolouring $e'$; we will now argue by idempotence that we have a left-witness for all pairs of colours.
    Let $E$ be the graph term associated to $X \subset Y$, and let $y$ be the left-witness for $a_1,c_1$ in $E$.
    Then the copy of $y$ in the outer copy of $E$ in $E \circ E$, is a left-witness for every pair of colours $a,c$, because these are coloured to $a_1,c_1$ in the inner copy.
    We conclude that there exists a left-witness for every pair of colours in every context (actually, we could even take the same witness for each pair of colours, but we won't need this), and similarly for right-witnesses. 
\end{claimproof}

We are now ready to establish consistent orientations for overlapping entangled pairs.

    \begin{claim}\label{claim:overlapping-half-graphs}
        All overlapping pairs $A \hgraph B$ and $B \hgraph C$ are oriented consistently. 
    \end{claim}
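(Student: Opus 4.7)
The plan is to argue by contradiction, using the witnesses from Claim~\ref{claim:polarisation-obstructions-one} to exhibit a uniform covering by the obstructions of Claim~\ref{claim:polarisation-branching-term-one}, and then invoking the \hyperlink{lem:covering-lemma}{Covering Lemma} to violate assumption~\ref{assumption:polarisation-not-branching} of the Orientation Lemma.

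Suppose towards contradiction that $A \hgraph B$ and $B \hgraph C$ are not oriented consistently. Fix fixpoint colours $a \in A$ and $c \in C$. Consider any separated bicontext $X_1, X_2 \subset X$ with separating node $Y$, so that $X_1 \cup X_2 \subseteq Y \subset X$. Applying Claim~\ref{claim:polarisation-obstructions-one} to the context $Y \subset X$ produces vertices $v_1, v_2 \in X \setminus Y$ of supercolour $B$, where $v_1$ has an edge to $X.c$ and a non-edge to $X.a$, while $v_2$ has edges to both $X.a$ and $X.c$. Since $a$ and $c$ are fixpoint colours and every vertex in $X_1 \cup X_2$ lies in a descendant of $Y$, these edge conditions transfer verbatim to the ports of the bicontext: each $v_i$ has the prescribed edges and non-edges to $X_j.a$ and $X_j.c$ for $j = 1, 2$. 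Moreover, by Lemma~\ref{lem:everybody-introduces-something} each argument $X_i$ contains at least one vertex in each of the supercolours $A$, $B$, $C$, so there are genuine ports to connect to.

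Next, to manufacture a cover of one of the obstructions, I would restrict attention to a single representative of each contextual class in supercolours $A, B, C$ inside each argument, delete all other vertices and all vertices not in $\{v_1, v_2\}$ outside the arguments, and use an entangled $AC$-superflip (or $AB$- or $BC$-superflip, depending on the case) — permitted by the covering definition because $A \hgraph B$ and $B \hgraph C$ — to align the cross-supercolour edge patterns with the obstructions of Claim~\ref{claim:polarisation-branching-term-one}. The choice among the two obstructions is dictated by the behaviour of $v_1$ and $v_2$ relative to each other and to the $B$-ports of the arguments (whether the synchronising vertex acts like $v_1$ or like $v_2$); the fact that exactly these two obstructions suffice is the reason Claim~\ref{claim:polarisation-branching-term-one} lists two patterns. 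With covering established for every separated bicontext, and unbounded Strahler number guaranteed by condition~\ref{assumption:polarisation-unbounded-strahler} of the Orientation Lemma, the \hyperlink{lem:covering-lemma}{Covering Lemma} yields that $\Tt$ transduces all trees, contradicting condition~\ref{assumption:polarisation-not-branching}.

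The main obstacle I anticipate is the case analysis in the middle paragraph: the witnesses $v_1, v_2$ come with guaranteed connections only to $X.a$ and $X.c$, while their mutual edges, their edges to other supercolours in $Y \setminus X$, and the internal edges of the $X_i$ among the $A$, $B$, $C$ classes are not constrained by Claim~\ref{claim:polarisation-obstructions-one}. Showing that every such uncontrolled configuration can be canonicalised, by a single entangled superflip followed by contractions of local colours and vertex deletions, into one of the two listed obstructions is the delicate book-keeping step. Once this is settled, both the existence of a bipolar orientation within each connected component of the entanglement graph and the subsequent extension across components in the second half of the proof follow by the same template.
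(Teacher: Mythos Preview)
Your overall strategy matches the paper's: contradiction, witnesses from Claim~\ref{claim:polarisation-obstructions-one}, covering the obstructions of Claim~\ref{claim:polarisation-branching-term-one}, then the Covering Lemma. But the book-keeping step you flag as an obstacle is a genuine gap, and the paper fills it with two ingredients you are missing.

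First, the paper does not work in $\Tt$ directly but first passes, via Lemma~\ref{lemma:extract-strongly-uniform}, to the subclass $\Tt_e$ of sub-decompositions in which every context has one fixed recolouring $e$ (still of unbounded Strahler number). This is essential: the obstructions in Claim~\ref{claim:polarisation-branching-term-one} require the introduced vertex and both $B$-ports of the arguments to lie in one common colour $b$. Claim~\ref{claim:polarisation-obstructions-one} only guarantees the witness lies in supercolour $B$; once the recolouring is fixed to $e$ one takes $b=e(B)$ and argues by idempotence of the theory that the witness may be assumed to have colour $b$, matching the ports $X_1.b$ and $X_2.b$. Without this, the introduced vertex and the ports need not share a colour, so you are not covering the stated obstruction.

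Second, the correct case split is not on the mutual behaviour of $v_1$ and $v_2$, but on the orientation $\to$ of $(X_1,e)$ and $(X_2,e)$, which exists by assumption~\ref{assumption:polarisation-oriented}. This orientation pins down \emph{all} port-to-port edges among colours $a,b,c$ between the two arguments --- information you never invoke and which your proposed superflips cannot manufacture. If $\to$ orients $A\hgraph B$ and $B\hgraph C$ in the same direction, one keeps only $v_2$ and applies the $AB$- and $BC$-superflips; if in opposite directions, one keeps $v_1$ and applies only the $BC$-superflip. In either case one of the two terms from Claim~\ref{claim:polarisation-branching-term-one} is covered (the residual $a$--$c$ edge is handled by an $AC$-superflip when $A\hgraph C$, or vanishes by normalisation when not). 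Without using condition~\ref{assumption:polarisation-oriented} you have no control over the cross-argument edges, and the hope of ``canonicalising every uncontrolled configuration by a single entangled superflip'' does not go through.
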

    \begin{claimproof}
        Suppose by contradiction that $A \hgraph B$ and $B \hgraph C$ are not oriented consistently.
        We will show that $\Tt$ covers a colour-flip of an obstruction from $\bar{\Oo}$, thus contradicting assumption~\ref{assumption:polarisation-not-branching} from the assumptions of the Orientation Lemma.

        By Lemma~\ref{lemma:extract-same-recolouring}, we know that there is a recolouring $e$ such that
        \begin{align*}
            \Tt_e =  \setbuild{ T}{$T$ is a sub-decomposition of $\Tt$ where \\ all contexts have recolouring $e$}
             \end{align*}
        has unbounded Strahler number. 

Consider  a separated bicontext $X,X' \subset Y$ in some tree decomposition from $\Tt_e$, i.e.~there is $Z$ such that $X,X' \subset Z \subset Y$.
Let $\to$ be the orientation of $(X,e)$ and $(X',e)$.
Consider two cases, depending on whether $A \hgraph B$ and $B \hgraph C$ are  oriented in the same direction by  $\to$.
\begin{enumerate}
    \item Suppose that  $A \hgraph B$ and $B \hgraph C$ are oriented in the same direction by $\to$. Let $a,b,c$ be the images of $A,B,C$ under the recolouring $e$. Here is a picture of the situation:  
    \mypic{88}
    The dotted lines in the picture mean that there may or may not be an edge.
    By potentially flipping the supercolours $A$ and $C$, we may ensure that either there is no edge between $a$ and $c$, or there is one edge that goes from the top-right to the bottom-left ports on the picture.
    Consider the right-witness $y_2$ from Claim~\ref{claim:polarisation-obstructions-one}, applied to the context $Z \subset Y$.
    Like all contexts in $\Tt_e$,  this context has recolouring $e$, and therefore we can assume that the witness $y_2$ has colour $b$. 
    (This is because composing the context with itself gives such a witness, and the theories are idempotent.)
    By keeping only this witness $y_2$, we see that the bicontext $X,X' \subset Z$ contains the following as an induced subgraph:
    \mypic{89}
    (The fact that there is no edges between $y_2$ and $Z.b$ follows from the facts that the decomposition is normalised and since it is oriented, there is no self-entanglement. This is also true for the cases below.)

    By flipping the supercolours $AB$ and $BC$, we get a template from the obstruction duo $F$, which we now recall for convenience:
    \mypic{167}
    By an easy application of Lemma~\ref{lem:strahler-dichotomy-family}, we may ensure that this is always the same obstruction from the duo.
    \item Suppose now that $A \hgraph B$ and $B \hgraph C$  are oriented in opposite directions by $\to$.
    Using the same argument as in the previous case but with a left-witness from Claim~\ref{claim:polarisation-obstructions-one}, we see that the bicontext $X, X' \subset Z$ contains the following as an induced subgraph:
    \mypic{91}
    After flipping the supercolours $BC$, we again recover a template from the obstruction duo $F$.
\end{enumerate}
By applying Lemma~\ref{lem:strahler-dichotomy-family}, we obtain sub-decompositions of unbounded Strahler number where always the same case holds, and the same obstruction from duo is covered.
Moreover, the fact that the corresponding initial graph is covered by the leaves is an easy consequence of uniformity.
Therefore we conclude that $\Tt_e$, and thus also $\Tt$, covers a colour-flip of an obstruction from $\bar \Oo$. 
\end{claimproof}

So far, we have proved that overlapping entangled pairs must be oriented consistently. 
By an induction on the paths in the entanglement graphs (which uses the previous claim at each step), this extends to connected components as follows.

\begin{claim}\label{claim:consistent-inside-entanglement-component}
    Suppose that $A \hgraph B$ and $C \hgraph D$ are entangled pairs that are in the same connected component of the entanglement graph.
    Then they must be oriented consistently.
\end{claim}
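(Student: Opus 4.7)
The plan is to upgrade Claim~\ref{claim:overlapping-half-graphs}, which establishes consistent orientation for two entangled pairs sharing a vertex, to the global statement by viewing ``oriented consistently'' as an equivalence relation on ordered entangled pairs and chaining overlap equivalences along a path in the entanglement graph.

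First, I would record a convenient reformulation. For each orientation $\to$ and each ordered entangled pair $(A,B)$, set $\sigma_\to(A,B) = +1$ if $A \to B$ and $-1$ if $B \to A$. Then $(A,B)$ and $(C,D)$ are oriented consistently precisely when the product $\sigma_\to(A,B)\cdot\sigma_\to(C,D)$ takes the same value for every orientation $\to$. This product description makes reflexivity, symmetry, and transitivity immediate, so consistent orientation defines an equivalence relation $\sim$ on ordered entangled pairs. Two basic equivalences follow at once: the \emph{reversal equivalence} $(A,B) \sim (B,A)$, which holds because $\sigma_\to(A,B)$ and $\sigma_\to(B,A)$ are always opposite (by the definition of an orientation, exactly one of $A\to B$ and $B\to A$ holds); and the \emph{overlap equivalence} $(A,B) \sim (B,C)$ whenever $A \hgraph B$ and $B \hgraph C$, which is exactly Claim~\ref{claim:overlapping-half-graphs}.

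For the main argument, suppose $A \hgraph B$ and $C \hgraph D$ are entangled pairs in the same connected component of the entanglement graph. Pick any path $A = V_0, V_1, \ldots, V_m = C$ between $A$ and $C$ in the entanglement graph, and form the walk $B, A, V_1, \ldots, V_{m-1}, C, D$, in which every two consecutive vertices are entangled (using $A \hgraph B$, the chosen path, and $C \hgraph D$). Applying the overlap equivalence at each interior vertex yields the chain
\[
(B,A) \sim (A,V_1) \sim (V_1,V_2) \sim \cdots \sim (V_{m-1},C) \sim (C,D),
\]
and combining with $(A,B) \sim (B,A)$ from the reversal equivalence gives $(A,B) \sim (C,D)$ by transitivity. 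The degenerate case $m=0$, where $A=C$, reduces to a single overlap equivalence at the shared vertex after one reversal, and any repeated vertices along the walk are harmless since only consecutive-edge overlap is ever invoked.

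No serious new obstacle arises at this stage: Claim~\ref{claim:overlapping-half-graphs} already performs the hard combinatorial work via the Covering Lemma applied to the obstructions of Claim~\ref{claim:polarisation-branching-term-one}, and what remains is a routine connectivity-and-transitivity bookkeeping. The only subtlety worth isolating is the reversal equivalence, which is needed because the overlap equivalence only compares pairs of the form $(X,Y),(Y,Z)$ sharing a middle vertex and cannot by itself relate $(A,B)$ to pairs starting at $A$.
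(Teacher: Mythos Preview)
Your proof is correct and follows essentially the same approach as the paper, which simply says ``Induction on the length of the path, in the entanglement graph, that connects $A$ with $C$. In the induction step, we use Claim~\ref{claim:overlapping-half-graphs}.'' Your version makes explicit the equivalence-relation bookkeeping (in particular the reversal equivalence $(A,B)\sim(B,A)$ needed to align the shared vertex) that the paper leaves implicit, but the underlying argument is identical.
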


\subsubsection*{Disconnected components} We now turn to the case of entangled pairs $A \hgraph B$ and $C \hgraph D$ that are in different connected components of the entanglement graph.
In this part of the proof, we will use the connectivity assumption. 

We proceed similarly to the case of overlapping entangled pairs. We first show that a violation of consistent orientations leads to certain patterns appearing in contexts from the tree decompositions.
Then we use these patterns to cover a colour-flip of an obstruction from $\bar{\Oo}$. 

We say that a path is \emph{local} if all its internal vertices are local; in particular such paths are comprised only of local edges (see Fact~\ref{fact:local-edges}).
We will be interested in such paths that connect two entangled (i.e. non-local) colours. 
Observe that the existence of such paths can be expressed in \mso{}, and therefore it does not depend on the choice of node or tree decomposition. 

\begin{claim}\label{claim:half-graphs-connected-by-local-path}
    Consider entangled pairs $A \hgraph B$ and $C \hgraph D$ such that:
    \begin{enumerate}
        \item\label{abcd:inconsistent} they are not oriented consistently;
        \item\label{abcd:connected} there is a local path from $B$ to $C$.
    \end{enumerate}
    Then for every context $X \subset Y$ and every recolouring $e$ which occurs in the class, there are local paths\footnote{The picture is slightly misleading: we insist that the colours of the endpoints of $P_1$ and $P_2$ are not required to coincide with $e(B)$ and $e(C)$, but should be in $B$ and $C$.} $P_1$ and $P_2$ from $Y.B$ to $Y.C$ whose vertices are all introduced in the context, and that are connected to ports as in the following picture, where $a,b,c,d$ correspond respectively to $e(A),e(B),e(C)$ and $e(D)$:
    \mypic{172} 
\end{claim}
\begin{claimproof}
    For clarity, say that a path satisfying the requirements of $P_1$ is a $P_1$-witness (for context $X \subset Y$ and recolouring $e$ leading to colours $a=e(A),b=e(B),c=e(C)$ and $d=e(D)$), and likewise for $P_2$.

    By Claim~\ref{claim:consistent-inside-entanglement-component}, we know that in the entanglement graph, $A \hgraph B$ cannot be in the same connected component as $C \hgraph D$. In particular, we know that there are no other entanglements except the assumed ones $A \hgraph B$ and $C \hgraph D$. 
    Similarly as in Claim~\ref{claim:polarisation-obstructions-one}, we will first prove existence of $P_1$ and $P_2$-witnesses relative to one context and one recolouring, and then extend this to all contexts and all recolourings, thanks to an easy argument relying on idempotence.

    By assumption~\ref{abcd:inconsistent} of the claim, the class witnesses both kinds of orientations, i.e.~both of these pictures can be obtained for different choices of $(X,e),(X',e')$ (i.e.~these are induced subgraphs of the term induced by any bicontext $X,X' \subset Y$):
    \numpic{92}{pic:both-kinds-of-orientations}
    Note that there are no edges between ports with fixpoint colours in $A$ or $B$ and those with fixpoint colours in $C$ or $D$; this is because there are no entanglements between $A,B$ and $C,D$, and the decomposition is normalised.

    Let $(X,e),(X',e')$ be disjoint nodes and recolourings arising in the class, which yield the orientation on the left.
    Let us find a $P_1$-witness in the context $X' \subset \rootnode$, with recolouring $e'$. 
    By assumption~\ref{abcd:connected} of the claim, we know that every node has a local path from $B$ to $C$.
    Since $e$ is a recolouring that appears in the class, then there is such a path where the source has colour $e(B)$ and the target has colour $e(C)$.
    As usual, by uniformity, the existence of such a path, even with fixed colours for the source and target, does not depend on the choice of node. 
    In particular, the node $X$ has such a path.
    Clearly the endpoints of the path are connected to ports $X'.\{a,b,c,d\}$ as specified by the statement of the claim.
    Moreover, the inner points cannot be connected to ports with fixpoint colours (such as $a,b,c$ or $d$), otherwise we would have non-local edges incident to local vertices.
    
    Therefore the path gives a $P_1$-witness in the context $X' \subset \rootnode$ and for recolouring $e'$.
    The existence of a $P_2$-witness follows exactly the same lines applied to the orientation on the right.

    We now show that this extends to all contexts and all recolourings.
    For fixed colours $a,b,c,d$, existence of a $P_1$-witness is a left ideal.
    Therefore thanks to Lemma~\ref{lem:left-ideal-invariant}, and since such a path exists for one context, it exists for all contexts, for the colours $a,b,c,d$ obtained using the recolouring $e'$ above.
    Now take a recolouring $e$ occurring in the decomposition for some context $X \subset Y$ inducing a graph term $E$.
    We know that there is a $P_1$-witness in $E$ relative to $e'$.
    Consider the copy of this witness occurring in the outer copy of $E$ inside $E \circ E$: it gives a $P_1$-witness in $E \circ E$ relative to $e$.
    We conclude that these exist for all recolourings and all contexts, and the argument is the same for~$P_2$.
\end{claimproof}

As previously, we will use the patterns from the above claim to cover an obstruction from $\bar{\Oo}$.

\begin{claim}\label{claim:consistent-five-colours}
    There cannot be any entangled pairs as in the assumptions of Claim~\ref{claim:half-graphs-connected-by-local-path}.  
\end{claim}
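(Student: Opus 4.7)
The plan is to derive a contradiction with assumption~\ref{assumption:polarisation-not-branching} of the Orientation Lemma by applying the Covering Lemma, with the single obstruction identified in Claim~\ref{claim:polarisation-branching-term-two}. Suppose towards a contradiction that there are entangled pairs $A\hgraph B$ and $C\hgraph D$ as in the assumptions of Claim~\ref{claim:half-graphs-connected-by-local-path}. By Lemma~\ref{lemma:extract-strongly-uniform}, extract a recolouring $e$ such that the class $\Tt_e$ of sub-decompositions in which every context has recolouring $e$ still has unbounded Strahler number. This ensures assumption~\ref{assumption:covering-strahler} of the Covering Lemma.

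To verify assumption~\ref{assumption:covering-covers}, fix a separated bicontext $X_1, X_2 \subset X$ in some $T \in \Tt_e$, with separating node $Y$. Apply Claim~\ref{claim:half-graphs-connected-by-local-path} to the context $Y \subset X$ with recolouring $e$, yielding two paths $P_1$ and $P_2$ that live in $X \setminus Y$ and thus in $X \setminus (X_1 \cup X_2)$. The paths use only vertices introduced by the context, with all internal vertices in local supercolours, and with the edge/non-edge pattern to the ports $e(A), e(B), e(C), e(D)$ of $Y$ prescribed in Figure~\ref{fig:path-p1-p2}. Since $X_1$ and $X_2$ both contain vertices in every supercolour (by Lemma~\ref{lem:everybody-introduces-something}) and the supercolours $A,B,C,D$ are globally preserved, the ports of $Y$ can be replaced by the corresponding supercolour witnesses in $X_1$ and $X_2$: this produces the two copies of the half-graph pattern $a \hgraph b$ and $c \hgraph d$ required by the obstruction. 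Keeping only these witnesses plus the paths $P_1, P_2$, contracting each of $P_1$ and $P_2$ (which is allowed because their internal vertices are local), and applying the $AB$- and $CD$-superflips that realise the relevant orientations, we obtain exactly the term of Claim~\ref{claim:polarisation-branching-term-two} (possibly after deleting $P_2$ or $P_1$, as only one of them is needed to produce the synchronising vertex $\ell$ of the obstruction).

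Thus every separated bicontext of $\Tt_e$ covers the obstruction, and the Covering Lemma yields a surjective \mso transduction from $\Tt_e$ to all trees. Since $\Tt_e$ is a class of sub-decompositions of $\Tt$, the same holds for $\Tt$, contradicting assumption~\ref{assumption:polarisation-not-branching} of the Orientation Lemma.

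The delicate point, and likely the main obstacle to making this rigorous, is the verification that after the superflips the edge/non-edge pattern on $P_1$ and $P_2$ survives contraction and matches the obstruction: the internal vertices of each path are in local supercolours, so the normalisation assumption guarantees that no stray edges between a contracted path and the distinguished ports or between the two contracted paths are created by the superflips, since entangled superflips only affect pairs of non-local supercolours. Once this is checked, the contracted paths produce precisely one $\ell$-coloured vertex joined to the appropriate endpoints, as required by the obstruction.
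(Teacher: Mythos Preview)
Your proof is correct and follows essentially the same route as the paper: assume such pairs exist, pass to $\Tt_e$ via Lemma~\ref{lemma:extract-strongly-uniform}, and apply the Covering Lemma with the obstruction of Claim~\ref{claim:polarisation-branching-term-two}, using the paths supplied by Claim~\ref{claim:half-graphs-connected-by-local-path} in the context $Y\subset X$. The one point the paper makes explicit and you leave implicit is the case split on the orientation of $(X_1,e)$ and $(X_2,e)$: this orientation fixes the edge pattern between the arguments, and it is precisely this that determines whether $P_1$ or $P_2$ must be used (the two paths were constructed in Claim~\ref{claim:half-graphs-connected-by-local-path} from the two possible orientations, so exactly one of them complements the pattern seen between $X_1$ and $X_2$); your phrase ``possibly after deleting $P_2$ or $P_1$'' is correct but would benefit from stating this reason.
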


\begin{claimproof}
    We proceed in the same way as  in Claim~\ref{claim:overlapping-half-graphs}. Suppose that $A \hgraph B$ and $C \hgraph D$ satisfy the assumptions of Claim~\ref{claim:half-graphs-connected-by-local-path}.
    Apply the claim: in all contexts and for all recolourings, we have $P_1$ and $P_2$-witnesses.
    Using Lemma~\ref{lemma:extract-same-recolouring}, choose some recolouring $e$ such that 
    \begin{align*}
        \Tt_e =  \setbuild{ T}{$T$ is a sub-decomposition of $\Tt$ where \\ all contexts have recolouring $e$}
    \end{align*}
    has unbounded Strahler number.
    For convenience, let us recall obstruction $G'$ (this is the one from the duo where the dotted edges are not present, moreover some vertices have been moved around to improve readability of this proof):
    \mypic{169}

    Consider a separated bicontext $X, X' \subset Z \subset Y$.
    Let $\to$ be the orientation of $(X,e)$ and $(X',e)$. 
    Suppose first that $A \hgraph B$ and $C \hgraph D$ are oriented in opposite ways by $\to$, 
    We will use the $P_1$-witness corresponding to the context $Z \subset Y$ and the recolouring $e$.
    Thanks to idempotence, we can assume the source of the path is in $a=e(A)$, the target is in $b=e(B)$, and all local colours used by the inner nodes of the path are in the image of $e$.
    In particular, the ports of these local colours are isolated.
    Therefore the graph term induced by $X,X' \subset Y$ has the following induced subgraph:
    \mypic{170}
    After flipping $C$ and $D$, this matches a variant of the template of obstruction $G'$.

    In the case where $A \hgraph B$ and $C \hgraph D$ are oriented in the same direction by $\to$, then we apply the same proof with the $P_2$-witness instead, and get the following induced subgraph, which matches a variant of obstruction $G'$:
    \mypic{171}
    By applying Lemma~\ref{lem:strahler-dichotomy-family}, we get a family of sub-decomposition of $\Tt_e$ with unbounded Strahler number such that the same case from above occurs for all separated bicontexts.
    Therefore all separated bicontexts from that family cover the template of the same colour-flip of an obstruction from $\bar{\Oo}$; moreover it is a direct check that leaves cover the initial graph of that (colour-flipped) obstruction.
    We conclude that $\Tt$ covers a colour-flip of an obstruction from $\bar{\Oo}$, which contradicts the assumptions of the Orientation Lemma.
\end{claimproof}

Let us summarise what we know so far.
We know that the following are sufficient conditions for $A \hgraph B$ and $C \hgraph D$ to have consistent orientations:
\begin{description}
    \item[Claim~\ref{claim:consistent-inside-entanglement-component}] $A \hgraph B$ and $C \hgraph D$ are in the same connected component of the entanglement graph; or 
    \item[Claim~\ref{claim:consistent-five-colours}] there is a local path from $B$ to $C$.
\end{description}
In particular, if we take two connected components in the entanglement graph which are connected by a local path, then all entangled pairs in the union of the two components must be oriented consistently. 
By the connectivity assumption of the Orientation Lemma, the graph over connected components of the entanglement graph where edges are given by local paths is connected, and therefore all entangled pairs are oriented consistently. 
This completes the proof of the Bipolar Lemma.
\end{proof}

\subsection{Consistent cuts}
\label{sec:consistent-cuts}

Thanks to the previous section, we know that the class has bipolar orientations, i.e.~every pair $(X,e),(X',e')$ is oriented by one of two orientations which are opposites.
Until the end of the proof, we fix one of these two orientations and call it $\to$.
In this section, we show that given a context $X \subset Y$, this orientation can be used to cut entangled vertices in $Y \setminus X$ into two parts in a consistent way, following the next definition.

\begin{definition}[Cut consistent with an orientation]\label{def:consistent-cut}
    Consider a uniform tree decomposition, and a subset $U$ of vertices.
    A \emph{cut} of this subset is defined to be a partition of $U$ into two parts, called the \emph{left} and \emph{right} parts.
    We say that it is \emph{consistent} if the equivalence
    \begin{align}
        \label{eq:equivalence-in-consistent-cut}
    \text{$vw$ is an edge} 
    \  \Leftrightarrow \ 
    \text{(supercolour of $v$)}\to \text{(supercolour of $w$)}
    \end{align}
    holds for every $v$ in the left part and every $w$ in the right part. 
\end{definition}

Note that in particular, the definition implies that if $v$ and $w$ are local vertices and $vw$ is an edge, then $v$ and $w$ are necessarily on the same side of the cut.

Consider a context $X \subset Y$.
We will be interested in cutting the entangled vertices introduced in the context.
This definition will be parameterised by a recolouring $e$ that occurs in the decomposition.
Consider a vertex $y \in Y \setminus X$ with an entangled supercolour $A$.
We say that $y$ is \emph{on the left side of\footnote{Note that the definition of being on the left or right side of $X$ is independent of the choice of $Y$.} $X$ with respect to $e$} (because of $B$) if there is a supercolour $B$ such that:
\begin{itemize}
    \item $A \to B$ and there is an edge from $y$ to $X.b$; or
    \item $A \leftarrow B$ and there is no edge from $y$ to $X.b$.
\end{itemize}
Similarly, we say that $y$ is \emph{on the right side of $X$ with respect to $e$} (because of $B$) if there is a supercolour $B$ such that:
\begin{itemize}
    \item $A \to B$ and there is no edge from $y$ to $X.b$; or
    \item $A \leftarrow B$ and there is an edge from $y$ to $X.b$.
\end{itemize}
In other words, for every supercolour $B$ entangled with $A$, $y$ is either on the left side or on the right side of $X$ with respect to $e$ because of $B$.
However, it could be the case that a vertex $y$ is on both sides because of two different supercolours entangled with $A$.
The next lemma proves that the left and right sides are well behaved (in particular, the above does not occur), in classes of decompositions satisfying the conditions of the Orientation Lemma.

\begin{lemma}\label{lem:nice-cuts}
    Let $\Tt$ be a class of tree decompositions satisfying the conditions of the Orientation Lemma, let $T \in \Tt$, let $X \subset Y$ be a context in $T$ and let $e$ be a recolouring occurring in $\Tt$.
    Then
    \begin{enumerate}[(a)]
        \item\label{it:one-side-per-vertex} every entangled vertex $y \in Y \setminus X$ is on a single side of $X$ with respect to $e$;
        \item\label{it:consistency-of-cut} the corresponding cut over entangled vertices from $Y \setminus X$ is consistent; and
        \item\label{it:local-paths-dont-switch-sides} if two such vertices $y,y'$ are connected by a local path contained in $Y \setminus X$, then they are on the same side.
    \end{enumerate}
\end{lemma}

\begin{proof}
    Say that $(X \subset Y,e)$ is nice if the three conditions from the statement of the lemma are satisfied.
    We start by showing that thanks to uniformity, it suffices to prove that one pair $(X \subset Y,e)$ is nice.

\begin{claim}\label{claim:recolouring-unimportant-for-consistency}
    The following are equivalent:
    \begin{enumerate}
        \item for some context $X \subset Y$ and some recolouring $e$ occurring in $\Tt$, $(X \subset Y,e)$ is nice;
        \item for every context $X \subset Y$ and every recolouring $e$ occurring in $\Tt$, $(X \subset Y,e)$ is nice.
    \end{enumerate}
\end{claim}

\begin{claimproof}
    Observe that for a fixed recolouring $e$, whether $(X \subset Y,e)$ is nice is definable from the theory of unary graph term induced by the context $X \subset Y$.
    For a unary graph term $E$, we say that $(E,e)$ is nice if the three conditions from the lemma are satisfied.
    The following fact, which follows by unravelling the definitions, states when this is preserved when composing unary graph terms.

    \begin{fact}
        Consider two unary graph terms $E$ and $F$.
        If $(E \circ F,e)$ is nice, then
        \begin{itemize}
            \item $(F,e)$ is nice; and
            \item $(E,(\text{recolouring of $F$}) \circ e)$ is nice.
        \end{itemize}
    \end{fact}

    Suppose now that for some unary graph term $E$ arising from some context, and some recolouring $e$ occurring in $\Tt$, $(E,e)$ is nice.
    Consider now some other unary graph term $F$ that arises from some context.
    By uniformity, we know that $E \circ F$ has the same \mso{} theory as $E$.
    Since being nice together with $e$ can be expressed in this theory, it follows that also $(E \circ F,e)$ is nice.
    Therefore, thanks to the above claim, we know that so is $(F,e)$.
    This proves that if some context in the tree decomposition is nice together with $e$, then this is the case for all contexts.

    Let us now show that $e$ is unimportant.
    Take some recolouring $f$, which is associated to some unary graph term $F$.
    We know that $(F \circ F,e)$ is nice, because $(F,e)$ is, and its theory is idempotent.
    Therefore, by the above claim it follows that $(F,f \circ e)$ is nice, which concludes since $f \circ e = f$.
\end{claimproof}

    Using Lemma~\ref{lemma:extract-same-recolouring}, choose a recolouring $e$ so that  
    \begin{align*}
    \Tt_e = \setbuild{ T}{$T$ is a sub-decomposition of $\Tt$ where \\ all contexts have recolouring $e$}
    \end{align*}
    has unbounded Strahler number.
    By Claim~\ref{claim:recolouring-unimportant-for-consistency}, it suffices to prove that for every context $X \subset Y$ in $\Tt_e$, it holds that $(X \subset Y, e)$ is nice.
    Therefore from now on, to simplify notations, we assume that $\Tt$ has a unique recolouring $e$.
    This means that for each supercolour $A$, there is a unique fixpoint colour $e(A)$, which we will call \emph{the fixpoint colour of $A$}.
    We will just say that $y$ is on the left (or right) of $X$, with $e$ being implicit.

    As in the Bipolar Lemma, we will show that every violation of the conclusion of the lemma will lead to covering a colour-flip of an obstruction.
    We treat the three items separately.
    
\subsubsection*{Item~\ref{it:one-side-per-vertex}} We start with the first item: assume that there is a context $X \subset Y$ and an entangled vertex $y \in Y \setminus X$ with supercolour $A$ such that $y$ is on the left side of $X$ because of some supercolour $B_1$ with fixpoint $b_1$, and on the right side of $X$ because of some supercolour $B_2$ with fixpoint $b_2$.
We say that $y$ is a \emph{witness} for $X \subset Y$ with respect to $A,B_1$ and $B_2$.

Note that the existence of such a witness is an \mso{} property defining a left ideal, and therefore every context $X \subset Y$ has such a witness by Lemma~\ref{lem:left-ideal-invariant}.
Moreover, let us observe that we can assume without loss of generality that the colour of $y$ is a fixpoint.

\begin{claim}\label{claim:wlog-fixpoint-colour}
    Without loss of generality, we can assume that the colour of $y$, with respect to the node $Y$, is the fixpoint colour $a \in A$.
\end{claim}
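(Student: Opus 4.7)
The plan is to exploit idempotence of the $q$-theory of the unary graph term $E$ associated with the context $X \subset Y$. Since all contexts in the tree decomposition share the same recolouring $e$, each supercolour $A$ has a \emph{unique} fixpoint colour $a = e(A)$, and moreover $e(a') = a$ for every $a' \in A$.

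First, I would consider the composition $E \circ E$, and examine the copy of $v$ that sits inside the inner occurrence of $E$. Viewed as a vertex of $E \circ E$, the inner copy of $v$ is introduced strictly between the argument (the ports of the inner $E$) and the output (the colouring of the outer $E$). Whatever colour $a' \in A$ it carried in the inner copy, when the inner output is plugged into the outer copy the recolouring $e$ is applied, so in $E \circ E$ the vertex inherits colour $e(a') = a$, the fixpoint colour of $A$.

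Next, I would check that this inner copy of $v$ still witnesses the ``on both sides'' problem in $E \circ E$. Its adjacencies to the ports of $E \circ E$ (which are precisely the ports of the inner $E$) coincide with the adjacencies that $v$ had to the ports of $E$. In particular, the edge to $X.b_1$ and the (non-)edge to $X.b_2$ that placed $v$ simultaneously on the left side (with respect to $B_1$) and on the right side (with respect to $B_2$) in $E$ are preserved verbatim in $E \circ E$. Hence $E \circ E$ contains a vertex with \emph{fixpoint} colour $a$ that exhibits Problem~\ref{no-cut:reason-one}.

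Finally, I would observe that the statement ``there exists a vertex with colour $a$ that is simultaneously on the left with respect to some supercolour and on the right with respect to some supercolour'' is expressible by a formula of very low quantifier depth over the signature of graph terms (one existential quantifier for the witness, plus a bounded check of edges to the port constants $X.b$ for each fixpoint colour $b$, combined disjunctively over the finitely many supercolours). Since $q = 9$ easily exceeds this depth, and since the $q$-theories of contexts are forward invariant, hence idempotent, the $q$-theory of $E \circ E$ equals that of $E$. Therefore $E$ itself has a vertex with fixpoint colour $a$ witnessing Problem~\ref{no-cut:reason-one}, and we may replace $v$ by this vertex. The only subtle point is bookkeeping of the recolouring through the composition and confirming that the ``on both sides'' property really is captured at bounded quantifier depth; once those are in hand, the argument is a standard pump-and-idempotence step.
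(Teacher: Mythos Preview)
Your proposal is correct and follows exactly the same approach as the paper's proof, which is the one-liner ``If we compose the context with itself, then we get a witness in a fixpoint colour.'' You have simply unpacked this: the inner copy of $v$ in $E \circ E$ acquires the fixpoint colour $e(a') = a$ while retaining the same adjacencies to the ports, and idempotence of the $q$-theory pushes the witness back to $E$.
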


\begin{claimproof}
    Existence of a witness $y$ with a given colour belongs to the \mso{} theory of a context.
    Moreover, if we compose the context with itself, then we get a witness in a fixpoint colour, therefore this is also true in the context itself.
\end{claimproof}

Unravelling the definitions, we get:
\begin{enumerate}
    \item $A \to B_1$ and there is an edge from $y$ to $X.b_1$; or
    \item $A \leftarrow B_1$ and there is no edge from $y$ to $X.b_1$,
\end{enumerate}
and
\begin{enumerate}
    \item $A \to B_2$ and there is no edge from $y$ to $X.b_2$; or
    \item $A \leftarrow B_2$ and there is an edge from $y$ to $X.b_2$.
\end{enumerate}

The two case disjunctions above lead to four different cases, which are similar.
To reduce the total number of cases, we perform superflips of $A,B_1$ and $B_2$ so that $B_1 \to A \to B_2$ (this case will be best for pictures).
This means that $y$ is connected to neither $X.b_1$ nor $X.b_2$.
It is also not connected to $X.a$, since otherwise $A$ would be entangled with itself.
Here is a picture of the situation, where we display an induce subgraph of the graph term of $X \subset Y$ (after performing the adequate superflips):
\numpic{118}{pic:problem-one-context}
Again up to performing another superflip, we can assume without loss of generality that if $B_1$ and $B_2$ are entangled, then they are oriented in the direction $B_1 \to B_2$.
This means that for every disjoint nodes $X$ and $X'$ such that $(X,e)$ and $(X',e)$ are oriented by $\to$, the connections between ports $b_1,a,b_2$ look like this:
\numpic{119}{pic:problem-one-orientations}
Consider a separated bicontext $X,X' \subset Z \subset Y$ in the class.
Up to exchanging $X$ and $X'$, we assume that $(X,e)$ and $(X',e)$ are oriented by $\to$.
Then the graph term from Picture~(\ref{pic:problem-one-context}) is an induced subgraph of the graph term induced by the context $Z \subset Y$, and one of the graph terms from Picture~(\ref{pic:problem-one-orientations}) is an induced subgraph of the binary graph term induced by $X,X' \subset Z$.
This shows that the bicontext $X,X' \subset Y$ covers one of the templates from the obstruction duo $F$, which we recall for convenience:
\mypic{173}
Moreover, the choice of which obstruction from the duo depends on whether $B_1$ and $B_2$ are entangled, which depends on the class but not on the choice of the separated bicontext $X,X' \subset Y$.
It is a direct check that leaves cover the corresponding initial graph, and therefore, since the Strahler number is unbounded, one of the obstructions from the duo $F$ is covered.

\subsubsection*{Item~\ref{it:consistency-of-cut}}
We now consider the second item, that states that the obtained cut is consistent.
Assuming otherwise gives witness vertices, which are entangled vertices $y,y' \in Y \setminus X$ with supercolours $A$ and $A'$ such that $y$ is on the left of $X$ because of $B$, $y'$ is on the right of $X$ because of $B'$, and
\begin{align}\label{eq:non-equiv}
    yy' \text{ is an edge } \nLeftrightarrow A \to A'.
\end{align}
    
As previously, existence of witnesses defines a left ideal and therefore such vertices $y,y'$ exist in every context.
Just like in Claim~\ref{claim:wlog-fixpoint-colour}, by idempotence, we may assume that $y$ and $y'$ have fixpoint colours $a \in A$ and $a' \in A'$.

Some equalities may arise between $A,A',B$ and $B'$, which lead to a number of different cases.
We study them in increasing order of the number of supercolours that are involved in the analysis. 

\begin{enumerate}
    \item\label{item:cut-obstruction-1} Suppose first that the two witness vertices have the same supercolour, i.e.~$A=A'$.
    In this case the supercolours $B$,$B'$ that are used to assign sides can also be taken equal.
    Assume without loss of generality that $A \to B$.
    Since the supercolour $A$ cannot be entangled with itself, it follows from~\eqref{eq:non-equiv} that $yy'$ is an edge.
    This means that the following graph term occurs as an induced subgraph of the graph term induced by every context: 
    \mypic{121}
    Combining this with the assumption that $A \to B$, we see that every separated bicontext covers the template of obstruction $I$:
    \mypic{174}
    We conclude that obstruction $I$ is covered by the class.

    \item\label{item:cut-obstruction-2} Suppose now that $A$ and $A'$ are entangled, so that they can be used to assign sides to the two witness vertices.
    Up to performing a superflip, we assume $A' \to A$ and therefore the graph term induced by every context has the following induced subgraph: 
    \mypic{123}
    This in turn means that every separated bicontext covers the template of obstruction $E$:
    \mypic{153}

    \item\label{item:cut-obstruction-3} Suppose now that $A$ and $A'$ are not entangled with one another.
    Suppose first that there is a third supercolour $B$ that is entangled with both of them.
    Up to performing some superflips, we assume without loss of generality that $A' \to B \to A$.
    Using the same reasoning as above, we see that every separated bicontext covers the template of obstruction $H$: 
    \mypic{175}

    \item\label{item:cut-obstruction-4} Finally, consider the case with all four supercolours $A,A',B,B'$ being distinct.
    In this case, we can use the same reasoning as above to show that every separated bicontext covers one of the obstructions from the duo $G$: 
    \mypic{176}
    Here, the obstruction containing the dotted edge as an edge occurs if the two entangled supercolours used to assign sides are entangled, and the one omitting the dotted edges occurs otherwise.
\end{enumerate}

\subsubsection*{Item~\ref{it:local-paths-dont-switch-sides}}
Finally, let us consider the third item about local paths.
Assume that there are witnesses contradicting this item: these are entangled vertices $y,y' \in Y \setminus X$ with supercolours $A,A'$ which are assigned to different sides because of supercolours $B,B'$ but are connected by a local path inside $Y \setminus X$.
Once again, existence of such witnesses $y,y'$ holds for every context $X \subset Y$, and by idempotence (see Claim~\ref{claim:wlog-fixpoint-colour}) they can be chosen so that $y$ and $y'$ have fixpoint colour $a \in A$, and the local colours used by the local path from $y$ to $y'$ are also fixpoint.
Moreover, existence of such witnesses using a fixed subset of fixpoint local colours is also an invariant, and therefore we may choose these witnesses so that the same set of colours are used.

We again have several cases according to equalities arising among $A,A',B$ and $B'$; the analysis being similar to the previous item, we will only detail the first case ($A=A'$), and otherwise just give the obstructions. 

Suppose that $A=A'$, in which case we may take $B=B'$ to assign sides.
Assume without loss of generality that $A \to B$.
Then the graph term induced by every separated bicontext covers the template of a variant of obstruction $I'$:
\mypic{177}
The precise variant of obstruction $I'$ that is covered depends only on which (fixpoint) local colours are used by the local path, which is independent of the choice of the separated bicontext.

The other cases, which correspond exactly to cases~\ref{item:cut-obstruction-2}--~\ref{item:cut-obstruction-4} from the previous item, lead to covering variants of obstructions $E',H'$ and $G'$:
\mypic{178}
\end{proof}

We say that a vertex in $Y \setminus X$ is \emph{attached} (with respect to the context $X \subset Y$) if it is entangled, or admits a local path inside $Y \setminus X$ to an entangled vertex from $Y \setminus X$.
Thanks to Item~\ref{it:local-paths-dont-switch-sides} from the lemma, for every recolouring $e$, we may assign a side to every attached vertex, so that local edges cannot switch sides.
In particular, the obtained cut over attached vertices is consistent, thanks to Item~\ref{it:consistency-of-cut} and the fact that edges adjacent to local vertices are local.

\subsection{Excluding non-trivial cuts}\label{sec:non-trivial-cuts}

Consider two disjoint nodes $X,X'$ and a recolouring $e$ occurring in the class.
By connectivity, observe that all vertices in $X'$ are attached in the context $X \subset \rootnode$, and therefore their side with respect to $X$ and $e$ is well defined.

It could be that all vertices in $X'$ are on the same side of $X$, in which case there is a natural order between the two nodes.
This section reduces to this case without loss of generality, by proving that if vertices in $X'$ are split in a non-trivial way, then we may modify the tree decomposition so as to lower its width $k$.

\begin{lemma}\label{lem:no-non-trivial-cut}
    Let $\Tt$ be a class of $k$-tree decompositions satisfying the conditions of the Orientation Lemma.
    Then one of the two following cases arises:
    \begin{enumerate}
        \item\label{it:reducing-width} $\Tt$ reduces to a class $\Tt'$ of $(k-1)$-tree decompositions; or
        \item\label{it:no-non-trivial-cut} for every $T \in \Tt$, every disjoint pair of nodes $X,X'$ and every recolouring $e$ occurring in the class, every vertex of $X'$ is on the same side of $X$.
    \end{enumerate}
\end{lemma}

\begin{proof}
    We will be interested in cuts that are defined as follows.
    Let $\kappa$ be a map that assigns a side to each entangled colour.
    Call such a map a \emph{colour assignment}.
    We say that a colour assignment $\kappa$ is \emph{good} for a node $X$ if every entangled vertices that is connected by a local path are assigned to the same side.
    A good colour assignment $\kappa$ induces a unique cut over $X$, which is defined by assigning the side $\kappa(a)$ to every vertex with a local path (possibly empty) to an entangled vertex with colour $a$.
    We call such a cut a \emph{colour cut}.
    Note that being a good colour assignment is an \mso{} property of the node, therefore by uniformity this is either the case for every node or for none of them; so we just say that $\kappa$ is good without needing to specify the node.
    Likewise, for a colour cut to be consistent is an \mso{} property and thus it is independent of the node.

    We say that a colour assignment is non-trivial if it assigns some colours to both sides, and that a colour cut is non-trivial if it is associated to a (good) non-trivial colour assignment.
    We will show that if Item~\ref{it:no-non-trivial-cut} does not hold, then there is a non-trivial consistent colour cut, and this can be used to reduce the width $k$ and get Item~\ref{it:reducing-width}.

    \begin{claim}
        If Item~\ref{it:no-non-trivial-cut} does not hold, then there is a non-trivial consistent colour cut.
    \end{claim}

    \begin{proof}
        Assume Item~\ref{it:no-non-trivial-cut} does not hold: there is a disjoint pair of nodes $X,X'$ and a recolouring $e$ such that some vertices of $X'$ are on the left of $X$ and some are on the right.
        Then by definition, there are some entangled vertices on the left and some on the right.
        
        Since vertices with the same colour in $X'$ have the same connections to $X$, it follows that two entangled vertices with the same colour in $X'$ are on the same side of $X$ with respect to $e$.
        Let $\kappa$ be the colour assignment such that for every entangled colour $a$, every vertex with colour $a$ in $X'$ is on the side $\kappa(a)$ of $X$ with respect to $e$.

        Then $\kappa$ is good by Item~\ref{it:local-paths-dont-switch-sides} of Lemma~\ref{lem:nice-cuts}, and by definition, the cut of $X'$ into its left and right sides of $X$ with respect to $e$ coincides with the colour cut induced by $\kappa$.
        Moreover, this colour cut it consistent by Item~\ref{it:consistency-of-cut} of Lemma~\ref{lem:nice-cuts}, and non-trivial by assumption.
    \end{proof}

    The following claim shows that colour cuts that are consistent necessarily assign a single side to every descendants of a node.

    \begin{claim}\label{claim:colour-cut-image-recolouring}
        Let $Y$ be a node, consider the colour cut induced by a good colour assignment $\kappa$, and assume moreover that it is consistent.
        Then for every $X \subset Y$, every vertex of $X$ is assigned to the same side by $\kappa$.
    \end{claim}

    \begin{proof}
        Let $e$ denote the recolouring of $X \subset Y$.
        Note that for every entangled supercolour $A$, every vertex in $X.A$ belongs to $Y.e(A)$ and therefore these vertices are on the same side of the cut.
        Now consider an entangled pair of supercolours $A \hgraph B$.
        Then $X.A$ and $X.B$ are connected by an edge as well as a non-edge therefore since the cut is consistent, it must be that vertices in $X.A$ and in $X.B$ are all assigned to the same side.

        Consider two vertices that are connected by an edge.
        Either their supercolours form an entangled pair, in which case they are on the same side by the argument above, or they do not, in which case the vertices are on the same side by the definition of consistency.
        Since the graph is connected, every vertex is therefore on the same side of the cut.
    \end{proof}
    
    Consider a tree decomposition $T \in \Tt$, and let $\kappa$ be a good colour assignment which is non-trivial and induces a consistent colour cut.
    For each node we get an induced colour cut which partitions the node into two sides, the left side and the right side.
    By Claim~\ref{claim:colour-cut-image-recolouring}, every descendant of the node is contained in either the left side, or the right side.

    \newcommand{\lef}{\text{left}}
    \newcommand{\rig}{\text{right}}
    \newcommand{\leftincl}{\iota_\text{left}}
    \newcommand{\rightincl}{\iota_\text{right}}
    \newcommand{\leftrev}{\overline{\iota_\text{left}}}
    \newcommand{\rightrev}{\overline{\iota_\text{right}}}
    \newcommand{\side}[1]{\text{side of $#1$}}

    Given a tree decomposition $T \in \Tt$, we say that a node is on the \emph{left side} if it is contained in the left side of the root, and that it is on the \emph{right side} otherwise.
    The \emph{left root} is the left side of the root and similarly for the \emph{right root}.
    Note that every non-root node is either on the left side or the right side, i.e.~it is either contained in the left root or the right root.
    Then $T$ induces two decompositions, which we write $T'_\lef$ and $T'_\rig$ and call its left decomposition and its right decomposition, where $T'_\lef$ is defined by
    \begin{itemize}
        \item the graph is the subgraph induced by the left root;
        \item the tree is 
        \begin{align*}
            T' =\setbuild{\text{left side of $X$}}{$X$ \text{ is on the left side}} \cup  \setbuild{\text{right side of $X$}}{$X$ \text{ is on the left side}},
        \end{align*}
        \item the colourings and recolourings are induced.
    \end{itemize}
    The tree $T'_\rig$ is defined likewise, except that the underlying set of vertices is the right root.
    A node of $T'_\lef$ or $T'_\rig$ is called a \emph{left node} if it is the left side of $X$ for some node $X$ of $T$, and a right node otherwise.
    We let $\Tt'$ be the class of $k$-tree decompositions comprised of all left and right decompositions of trees in $\Tt$.

    In order to apply the induction hypothesis on the number of colours, we need to relabel tree decompositions in $\Tt'$ so as to use fewer colours; this is quite straightforward but a bit tedious.
    Let $k'=k-1$.
    Note that for every left node, and every entangled colour which $\kappa$ assigns to the right, the corresponding colour class in the node is empty.
    Since $\kappa$ is non-trivial, there is a map $\leftincl : k \to k'$ such that if $\leftincl(i)=\leftincl(i')$ then $i$ and $i'$ are entangled colours such that $\kappa(i)=\kappa(i')=\text{right}$; stated differently, $\leftincl$ is injective over local colours and entangled colours on the left.
    We let $\leftrev: k' \to k$ be such that $\leftrev \circ \leftincl$ defines a bijection over local colours and entangled colours on the left.
    We define $\rightincl$ and $\rightrev$ similarly.
    Then given $T' \in \Tt'$, we define $T''$ as follows:
    \begin{itemize}
        \item the graph is the same as $T'$;
        \item the tree is the same as $T'$;
        \item for every left node, the colouring in $T''$ is obtained from the colouring of $T'$ by post-composing it with $\leftincl$, and likewise for right nodes and $\rightincl$;
        \item for all nodes $X \subset Y$ with sides $(\side X)$ and $(\side Y)$, the recolouring from $X$ to $Y$ is defined by
        \[
           \iota_{(\side Y)} \circ \text{(recolouring of $X \subset Y$ in $T'$)} \circ \overline{\iota_{(\side X)}}.
        \]
    \end{itemize}

    \begin{claim}
        For every $T' \in \Tt'$, it holds that $T''$ defines a $k'$-tree decomposition.
    \end{claim}

    \begin{proof}
        Take a left node $X$.
        Since colour classes corresponding to entangled colours on the right are empty, and $\leftincl$ is injective on other colours, it follows that the colouring is compatible.
        The same is true for right nodes.
        Now clearly for nodes $X \subset Y$, the definition of the recolourings is so that for every vertex $x \in X$, the colour of $x$ in $Y$ is obtained by applying the recolouring to its colour in $X$.
    \end{proof}
    
    Let $\Tt''$ be the class consisting of all $T''$ obtained for $T'$ ranging over $\Tt'$.
    To reduce $\Tt$ to $\Tt''$, we should establish the following claim.
    
    \begin{claim}
        The following implications hold:
        \begin{enumerate}
            \item if $\Tt''$ witnesses an obstruction, then so does $\Tt$.
            \item if $\Tt$ is locally linear then so is $\Tt''$;
            \item if $\Tt''$ is linear then so is $\Tt$.
        \end{enumerate}
    \end{claim}

    \begin{claimproof}
        The first item is clear because underlying graphs of $\Tt''$ are induced subgraphs of underlying graphs of $\Tt$.

        We focus on the second item: assume that $\Tt$ is locally linear.
        Consider a local sub-decomposition $S''$ of $\Tt''$.
        By definition, there is tree decomposition $T \in \Tt$ and a node $X$ of $T$, such that the root of $S''$ is the left or right side of $X$, and the leaves of $S''$ are all the left and right sides of $T$-children of $X$.
        Without loss of generality we assume that the root of $S''$ is the left side of $X$.
        \mypic{148}
        Consider a linearisation of bounded width of the local sub-decomposition of $X$ in $T$.
        There are two kinds of classes: those that contain only vertices introduced in $X$, and those that are contained in children of $X$.
    
        Define the outer preorder as follows: classes containing vertices introduced in $X$ are intersected with the left side of $X$ (if this intersection is empty, we discard the class), and classes contained in a child of $X$ are left untouched.
        The order on the classes is induced.
        Clearly the outer preorder has bounded width.
        
        Fix a class contained in a child $Z$ of $X$.
        We split it into two parts: its intersection with the left side of $Z$, and its intersection with the right side of $Z$.
        We order these parts by setting the left side to be smaller than the right side (this is arbitrary), and call this the inner preorder.
        As observed in the proof of Claim~\ref{claim:colour-cut-image-recolouring}, the only edges between the two sides of the cut are between vertices whose supercolours form an entangled pair.
        For such vertices, the consistency of the cut implies that existence or not of an edge depends only on the supercolours of the vertices, and therefore the inner preorders have bounded width.
        
        Therefore the combined preorder obtained from Lemma~\ref{lem:combining-preorders} has bounded-width, and it is a linearisation of $S''$.

        We now prove the third item: assume that $\Tt''$ is linear.
        Since only the colourings differ between $\Tt'$ and $\Tt''$, it also holds that $\Tt'$ is linear.
        Let $T \in \Tt$.
        It is enough to find separate linearisations for the left and right sides of $T$, this is because by the same argument as above, the (outer) preorder with two classes, the left side and the right side, ordered arbitrarily, is of bounded width.
        Therefore we now focus on the left side; consider a linearisation of bounded width of $T'_\lef$.
        This immediately gives a linearisation of bounded width for $T$, because vertices that are $T'$-introduced in a left node corresponding to the left side of a node $X$ of $T$, are $T$-introduced in $X$.
    \end{claimproof}
    This concludes the proof of the lemma.
\end{proof}

We say that $\Tt$ is \emph{well cut} if the second alternative holds in the above lemma.
Thanks to the lemma, from now on we will work with classes which are well cut.

\begin{lemma}\label{lem:equivalence-in-well-cut-classes}
    Consider two disjoint nodes $X,X'$ in a class which is well cut.
    The following are equivalent, where each quantification over a recolouring assumes that it occurs in the class:
    \begin{enumerate}[1.]
        \item for some pair of recolourings $e,e'$, the pair $(X,e),(X',e')$ is oriented by $\to$;
        \item for every pair of recolourings $e,e'$, the pair $(X,e),(X',e')$ is oriented by $\to$;
        \item for every recolouring $e$, all vertices from $X$ are on the left of $X'$ with respect to $e$;
        \item for every recolouring $e$, all vertices from $X'$ are on the right of $X$ with respect to $e$;
        \item for some entangled pair of supercolours $A \hgraph B$, and some fixpoint colours $a \in A$ and $b \in B$, we have
        \begin{align*}
            \text{$X.a$ and $X'.b$ are connected by an edge} \quad \iff \quad A \to B.
        \end{align*}
        \item for every entangled pair of supercolours $A \hgraph B$, and every $a \in A,b \in B$ such that either $a$ or $b$ is a fixpoint colour 
        \begin{align*}
            \text{$X.a$ and $X'.b$ are connected by an edge} \quad \iff \quad A \to B.
        \end{align*}
    \end{enumerate}
\end{lemma}

\begin{proof}
    Let $A,B$ be supercolours such that $A \to B$.
    Since the class is well cut, we know that for every recolouring $e$ which occurs in the class, all connections between $X.e(A)$ and $X'.b$ are the same, where $b$ ranges over $B$ (because all vertices in $X'$ are on the same side of $X$ with respect to $e$), and also all connections between $X.a$ and $X'.e(B)$ are the same (because all vertices in $X$ are on the same side of $X'$ with respect to $e$).
    Therefore, if for some fixpoint colours $a \in A$ and $b \in B$, it holds if that $X.a$ and $X'.b$ are connected by an edge or a non-edge, then the same is true for every $a \in A$ and $b \in B$ such that one of $a,b$ is a fixpoint colour.
    Then the result easily follows by unravelling the definitions.
\end{proof}

In a class which is well cut, given two disjoint nodes $X,X'$, we write $X \to X'$ if the equivalent statements from the above lemma hold.
If the class satisfies the conditions of the Orientation Lemma, we know thanks to the Bipolarity Lemma that this relation is total: for every disjoint pair $X,X'$, either $X \to X'$ or $X' \to X$.

\subsection{From local to global linearisations}\label{sec:linear-global}

To finish the proof, there remains to establish the following result.

\begin{lemma}
    Let $\Tt$ be a class of $k$-tree decompositions which satisfies the conditions of the Orientation Lemma, is well cut, and locally linear.
    Then $\Tt$ is linear.
\end{lemma}

\begin{proof}[Proof]
    We start by establishing that over any antichain of nodes, the relation $\to$ is transitive, and therefore it is a (strict) linear order.

    \begin{claim}\label{claim:antichains-transitive}
        Let $X_1,X_2,X_3$ be an antichain of nodes such that $X_1 \to X_2 \to X_3$.
        Then $X_1 \to X_3$. 
    \end{claim}

    \begin{claimproof}
        Consider any recolouring $e$ occurring in the class.
        By Lemma~\ref{lem:nice-cuts}, the cut associated to the context $X_2 \subset \rootnode$ and $e$ is consistent.
        By definition, vertices in $X_1$ are to the left of this cut and vertices in $X_3$ are on the right.
        Therefore for every entangled pair $A \to B$, there is an edge between $X_1.e(A)$ and $X_3.e(B)$, and thus $(X_1,e)$ and $(X_3,e)$ are oriented by $\to$.
        We conclude thanks to Lemma~\ref{lem:equivalence-in-well-cut-classes}.
    \end{claimproof}
    
    The next statement describes a relationship between $\to$ and the ancestor-descendant relation.

    \begin{claim}\label{claim:order-and-descendants}
        Let $X,X'$ be disjoint nodes such that $X \to X'$, and let $Z' \subset X'$.
        Then $X \to Z'$.
    \end{claim}

    \begin{claimproof}
        Let $A \to B$ be an entangled pair of supercolours, and let $e'$ be the recolouring associated to $Z' \subset X'$.
        Then by definition of $X \to X'$ we know that for every colour $a \in A$, $X.a$ and $X'.e'(B)$ are connected by an edge.
        Therefore for every colour $a \in A$ and $b \in B$, $X.a$ and $Z'.b$ are connected by an edge, and thus $X \to Z'$.
    \end{claimproof}

    We need a last simple claim about $\to$, which is similar to Claim~\ref{claim:antichains-transitive}.

    \begin{claim}\label{claim:comparing-vertex-node}
        Let $X \to X'$ be two disjoint nodes, let $e$ be a recolouring, and let $y$ be an entangled vertex which is neither in $X$ nor in $X'$.
        If $y$ is on the left of $X$ with respect to $e$, then the same is true for $X'$.
    \end{claim}

    \begin{claimproof}
        Let $A$ be the supercolour of $y$ and let $B$ be such that $A \hgraph B$.
        By Lemma~\ref{lem:nice-cuts}, the cut associated to $X \subset \rootnode$ and $e$ is consistent.
        By definition, all vertices of $X'$ are on the right side of this cut, and $y$ is on the left side.
        Take any vertex $x'$ in $X'$ with colour $e(B)$ (such a vertex exists because the class is normalised).
        Then thanks to consistency we know that
        \begin{align*}
            \text{$yx'$ is an edge} \quad \iff \quad A \to B.
        \end{align*}
        Therefore $y$ is on the left side of $X'$.
    \end{claimproof}

    We now fix a recolouring $e_0$ which occurs in the class until the end of the proof; from now on, when we say ``on the left'' or ``on the right'', we implicitly mean that this is with respect to $e_0$.
    Consider the antichain of nodes comprised of all leaves, over which we know that $\to$ defines a linear order.
    We say that two leaves $X \to X'$ are \emph{successive} if there is no leaf $X''$ such that $X \to X'' \to X'$.
    
    It follows from Claim~\ref{claim:comparing-vertex-node} that for every entangled vertex $y$ which is not in any leaf, either $y$ is to the left of every leaf, or it is to the right of every leaf, or there are two successive leaves $X \to X'$ such that $y$ is to the right of $X$ and to the left of $X'$.
    We respectively say that the entangled vertex $y$ is \emph{at the beginning}, \emph{at the end}, or \emph{between $X$ and $X'$}.
    We refer to these sets of entangled vertices as the \emph{parts}.
    
    We now deal with local vertices, which is a bit more tedious.
    Since the graph is connected, every local vertex $y$ admits a local path to some entangled vertex $y'$, in which case we say that $y$ \emph{attaches to} $y'$.
    Recall from Lemma~\ref{lem:nice-cuts} that for every node $X$, there is no local path connecting entangled vertices on the left of $X$ with entangled vertices on the right of $X$.
    Therefore local vertices cannot attach to entangled vertices from two different parts.
    Likewise, if they attach to some vertices from some leaves, then there are at most two such leaves, and they are successors, otherwise we would have a local path between entangled vertices from two non successor leaves $X,X''$, which would contradict Lemma~\ref{lem:nice-cuts} applied to a leaf $X'$ such that $X \to X' \to X''$.

    Summing up, for every local vertex $y$, one of the following cases hold:
    \begin{itemize}
        \item $y$ attaches to an entangled vertex from some part, in which case we say that $y$ attaches to that part;
        \item $y$ does not attach to an entangled vertex from any part, but it attaches to vertices $x,x'$ from two successive leaves $X \to X'$, in which case we say that $y$ attaches to the part between $X$ and $X'$;
        \item all vertices $x$ such that $y$ attaches to $x$ belong to the same leaf $X$, in which case we say that $y$ attaches to the leaf $X$.
    \end{itemize}
    For entangled vertices, we also say that they attach to a part when they are in that part, and likewise for leaves.

    Consider the following preorder, which we call the outer preorder:
    \[
    \begin{array}{c}
        \text{attached}\\ \text{to the}\\ \text{beginning}
    \end{array}
    <
    \begin{array}{c}
        \text{attached}\\ \text{to $X_1$}
    \end{array}
    <
    \begin{array}{c}
        \text{attached} \\ \text{between}\\ \text{$X_1$ and $X_2$}
    \end{array}
    < \cdots <
    \begin{array}{c}
        \text{attached}\\ \text{between}\\ \text{$X_{n-1}$ and $X_n$}
    \end{array}
    <
    \begin{array}{c}
        \text{attached}\\ \text{to $X_n$}
    \end{array}
    <
    \begin{array}{c}
        \text{attached}\\ \text{to the end},
    \end{array}
    \]
    where $X_1 \to X_2 \dots \to X_{n-1} \to X_n$ denote the successive leaves.
    We now prove that this preorder has bounded width (where, of course, the bound depends only on the class, and not on the decomposition).

    \begin{claim}
        The outer preorder has bounded width.
    \end{claim}

    \begin{claimproof}
        Consider a leaf $X$.
        Then by definition of tree decompositions, $X$ has rank at most $k$.
        Moreover, by definition there is no edge between local vertices that attach to $X$ and vertices in other parts, and therefore the set of vertices attached to $X$ has rank at most $k$.
        Let $C$ denote this set, as a class of the outer preorder.

        Now consider the cut corresponding to $X$ and the recolouring $e_0$.
        Then all vertices before $C$ in the outer preorder are on the left of this cut, and all vertices after $C$ in the outer preorder are on the right of this cut.
        Since the cut is consistent by Lemma~\ref{lem:nice-cuts}, given vertices $y$ and $y'$ which are respectively before $C$ and after $C$ in the outer preorder, whether or not $yy'$ is an edge depends only on the supercolours of $y$ and $y'$.
        This implies the claim.
    \end{claimproof}

    Now the outer preorder is not a linearisation since there could be vertices from some class that are introduced in different nodes.
    However, to conclude the proof, it suffices to find, for each class $C$ of the outer preorder, a linearisation of the restriction of the decomposition to $C$, and invoke Lemma~\ref{lem:combining-preorders}.
    For this, we will prove that for every class $C$, the sub-decomposition comprised of the nodes containing a vertex in $C$ has bounded Strahler number, and conclude thanks to Lemma~\ref{lem:dichotomy-holds-when-strahler-bounded}.
    
    For a class $C$ of the outer preorder, define its bordering leaves to be
    \begin{itemize}
        \item $X$ and $X'$ if $C$ is comprised of vertices attached between $X$ and $X'$;
        \item $X$ if $C$ is comprised of vertices attached to $X$;
        \item $X_1$ if $C$ is comprised of vertices attached to the beginning; and
        \item $X_n$ if $C$ is comprised of vertices attached to the end.
    \end{itemize}
    The following claim proves a relationship between the class of a vertex and its introducing node.
    
    \begin{claim}
        Let $y$ be a vertex introduced in a node $Y$, and let $C$ be its class in the outer preorder.
        Then $Y$ is a non-proper ancestor of a bordering leaf of $C$.
    \end{claim}

    \begin{claimproof}
        We start with the case where $y$ is an entangled vertex.
        If $y$ is attached to a leaf then it belongs to the leaf, so $Y$ is equal to that leaf and there is nothing to prove.

        Otherwise, assume that $y$ is between $X$ and $X'$ for some successive leaves $X \to X'$.
        If the claim does not hold, then $Y,X$ and $X'$ form an antichain.
        Since $X$ and $X'$ are successive, it follows from Claim~\ref{claim:order-and-descendants} that it cannot be that $X \to Y \to X'$.
        Therefore either $Y \to X \to X'$ or $X \to X' \to Y$.
        This contradicts the fact that $y$ is to the right of $X$ and to the left of $X'$.

        Assume now that $y$ is at the beginning.
        If the claim does not hold then $X_1,Y$ is an antichain, and again we deduce from Claim~\ref{claim:order-and-descendants} that it cannot be that $Y \to X_1$.
        Therefore $X_1 \to Y$ which contradicts the fact that $y$ is to the left of $X_1$.
        If $y$ is at the end, the proof is similar.
        Therefore the claim is proved for entangled vertices.

        Assume now that $y$ is a local vertex.
        By definition, for every bordering leaf $X$ of $C$, $y$ has a local path to an entangled vertex in $X$.
        Assume towards a contradiction that $Y$ is not an ancestor of a bordering leaf, so that $Y$ together with the bordering leaves form an antichain.
        Since $Y$ is connected, $y$ has a local path inside $Y$ to an entangled vertex $y'$, so that $y'$ has local paths to entangled vertices in every bordering leaf of $C$.
        This easily leads to a contradiction in each case, we give the details for completeness.
        \begin{itemize}
            \item If $C$ is comprised of vertices between $X$ and $X'$ (i.e.~the bordering leaves are $X$ and $X'$).
            Then if $Y \to X \to X'$, the local path from $y'$ to an entangled vertex in $X'$ contradicts Lemma~\ref{lem:nice-cuts} since $y'$ is on the left of the cut corresponding to $X \subset \rootnode$, and $X$ is on its right.
            Otherwise, we have $X \to X' \to Y$ thanks to Claim~\ref{claim:order-and-descendants} and the local path from $y'$ to an entangled vertex in $X$ contradicts Lemma~\ref{lem:nice-cuts} applied to $X' \subset \rootnode$.
            \item If $C$ is comprised of vertices attached to the beginning, then necessarily $X_1 \to Y$ thanks to Claim~\ref{claim:order-and-descendants}, therefore the path between $y'$ and an entangled vertex in $X_1$ contradicts Lemma~\ref{lem:nice-cuts} applied to $X_1 \subset \rootnode$.
            \item If $C$ is comprised of vertices attached to the end, the proof is symmetric.
            \item If $C$ is comprised of vertices attached to a leaf $X$.
            Then $y$ has a local path to an entangled vertex $y'$ which is not in $X$, which contradicts the fact that local vertices in $C$ only have local paths to entangled vertices that are in $X$.\qedhere
        \end{itemize}
    \end{claimproof}

    Therefore for every class $C$, vertices in $C$ are all introduced in the union of at most two chains of nodes.
    In particular, the set of nodes containing vertices from $C$ has Strahler number $\leq 2$, and therefore, thanks to Lemma~\ref{lem:dichotomy-holds-when-strahler-bounded} together with the fact that $\Tt$ is locally linear, every class $C$ of the outer preorder admits an inner preorder such that these preorders have bounded width, and every class of the inner preorder is such that every vertex from the class is introduced in a single node.
    We conclude that $\Tt$ is linear by applying Lemma~\ref{lem:combining-preorders}.
    \end{proof}

\section{Transductions and obstructions}\label{sec:obstructions}

In this section, we formally introduce transductions and show the following statement.

\begin{lemma}\label{lem:obstructions-transduce}
    Let $\Cc$ be a class generated by one of the obstructions from $\Oo$.
    Then $\Cc$ \mso{} transduces the class of all trees, and \fo{} transduces a class containing subdivisions of all binary trees. 
\end{lemma}

Note that in fact, the second conclusion of the lemma is stronger, because any class containing subdivisions of all binary trees \mso{} transduces the class of all trees.
Combining the lemma with Theorem~\ref{thm:main}, this proves Corollary~\ref{cor:main} from the introduction.

\subsection{Transductions}
\label{sec:transductions}

We now give a formal definition of \mso{} transductions.

\begin{definition}[\mso{} transduction]
The syntax of a  graph-to-graph \mso{} transduction is given by a number $k \in \set{0,1,\ldots}$, which is called the \emph{number of parameters}, and two formulas
\begin{align*}
\myunderbrace{
    \varphi_V(x,Z_1,\ldots,Z_k)
}{vertex formula}
\quad
\text{and}
\quad 
\myunderbrace{
    \varphi_E(x,y,Z_1,\ldots,Z_k)
}{edge formula}
\end{align*}
over the vocabulary of graphs (i.e.~one binary relation).
\end{definition}

The semantics of an \mso{} transduction is a binary relation that associates to each input graph a set of output graphs. The output graphs are defined as follows. Fix some choice of parameters $Z_1,\ldots,Z_k$, which are subsets of vertices in the input graph. For each such choice, the vertex formula defines a unary relation on  vertices of the input graph, and the edge formula defines a binary relation on this subset (by restricting it to the former unary relation).  If the binary relation is symmetric and irreflexive, then this gives  a possible output graph. The outputs of the transduction are all output graphs that arise this way, for some choice of parameters.

\begin{definition}\label{def:transduction-order}
A class of graphs  $\Cc$  \mso{} \emph{transduces} a class of graphs $\Dd$ if there is an \mso{} transduction such that every graph from $\Dd$ can be obtained as one of the possible outputs of the transduction for some graph in $\Cc$. 
\end{definition}

As mentioned in Footnote~\ref{footnote:surjective-transduction}, the transduction may produce graphs that are not in $\Dd$. 
An \fo{} transduction is just like an \mso{} transduction except that the formulas are in fact in \fo{} (i.e.~they don't quantify over sets).

\subsection{Sparse obstructions}

We first prove Lemma~\ref{lem:obstructions-transduce} for sparse obstructions.

\mypic{151}

\begin{lemma}\label{lem:sparse-obstructions}
    Classes generated by one of the sparse obstructions \fo{} transduce subdivisions of all binary child trees.
\end{lemma}

\begin{proof}
    For obstruction $A$ there is nothing to do: the identity gives such a transduction.
    For obstruction $B$ we should just remove one edge from each triangle, which is done by colouring its two endpoints (note that there is a choice of an edge per triangle (e.g.~the bottom one), such that there are no connections between the chosen edges). 
\end{proof}

\subsection{Stable obstructions}

We now prove Lemma~\ref{lem:obstructions-transduce} for stable obstructions.

\mypic{152}

\begin{lemma}\label{lem:stable-obstructions}
    Classes generated by one of the stable obstructions \fo{} transduce all binary child trees.
\end{lemma}

\begin{proof}
    Again, it suffices to transduce child-graphs of all full binary trees.
    Consider obstruction~$C$, which generates comparability graphs.
    Observe that in a graph generated by the obstruction a vertex $v$ is an ancestor of a vertex $w$ in the decomposition tree if and only if 
    \begin{align*}
        vu \text{ is an edge} 
        \quad \Rightarrow \quad 
        wu \text{ is an edge}
    \end{align*}
    holds for every vertex $u$ that is neither $v$ nor $w$.
    Then $v$ is the parent of $w$ if and only if $w$ is a descendant of $v$ and every strict descendant of $v$ which is connected to $w$ (i.e.~comparable to $w$ but different from $w$), is a descendant of $w$.
    This gives the wanted \fo{} transduction.

    Consider now an obstruction from the duo $D$.
    In a graph generated by that obstruction, a blue vertex $v$ is an ancestor of another blue vertex $w$ if and only if all red neighbours of $w$ are also neighbours of $v$.
    This leads to the wanted transduction in the same fashion (by keeping only blue vertices). 
\end{proof}

\subsection{Deterministic broken half-graph obstructions}

We now prove Lemma~\ref{lem:obstructions-transduce} for deterministic broken half-graph obstructions.

\begin{lemma}
    Classes generated by obstructions $E,F,I,J$ and $K$ \fo{} transduce all binary child trees.
\end{lemma}

\begin{proof}
    We prove the result for each obstruction separately.
    \subsubsection*{Obstruction $E$.}

\mypic{180}

    Consider a graph generated by the template and the full binary tree of depth $n$.
    We say that a blue vertex $b$ is before another blue vertex $b'$ if the neighbourhood of $b$ (which is comprised of red vertices) is contained in the neighbourhood of $b'$.

    \begin{claim}
        A blue vertex $b$ introduced in a node $X$ is before a blue vertex $b'$ introduced in $X'$ if and only if $X$ is before $X'$ in the depth-first transversal of the tree.
    \end{claim}

    \begin{claimproof}
        Observe that the blue vertex introduced by $X$ is connected to the red vertex introduced by $X'$ if and only if $X$ and $X'$ are disjoint and $X$ is before $X'$.
        Therefore neighbours of $b'$ are exactly red vertices introduced in $X'$ and in nodes after $X'$.
        If $X$ is before $X'$ then all these nodes are after $X$ which proves one implication.

        Conversely assume that $b$ is before $b'$.
        Then the red vertex introduced by $X'$ is a neighbour of $b'$ (since it is a neighbour of $b$).
        Therefore $X$ is before $X'$.
    \end{claimproof}

    Consider a red vertex $r$ introduced in a node $X$.
    Then its neighbours are the blue vertex introduced in $X$, and the blue vertices introduced in nodes before $X$.
    Therefore, there is a unique blue neighbour (namely, the blue vertex introduced in $X$) which is after all other neighbours of $x$.
    Hence the relation that matches two vertices if they are introduced in the same node is definable in~\fo{}.

    For a vertex introduced in a node $X$, define its external neighbours to be the neighbours not introduced in $X$ (this is definable thanks to the previous paragraph).
    Then to define the tree structure, it suffices to observe that given two nodes $X,Y$, with blue vertices $b_X,b_Y$ and red vertices $r_X,r_Y$ we have
    \[
        X \subset Y \iff \text{ external neighbours of $b_X$ are neighbours of $b_Y$ and likewise for $r_X$ and $r_Y$}.\qedhere
    \]

    \subsubsection*{Obstruction duo $F$.}

    \mypic{181}
    
    The same \fo{} transduction will work for both templates from the duo.
    Consider a graph generated by the template: each leaf corresponds to three vertices (one of each colour), and each non-leaf node corresponds to a vertex (of colour two).
    The transduction uses a colour to guess the vertices corresponding to leaves.
    When restricting to leaves of colour 1 and 2, we get a half-graph, so given two vertices $v_1$ and $v_2$ with colour $1$ and $2$, one may express in \fo{} whether $v_1$ is to the left of $v_2$ and whether $v_1$ and $v_2$ are introduced in the same leaf.
    Similarly for colours $2$ and $3$.

    Then given a vertex $u_2$ from a non-leaf node, the vertices introduced in leaves corresponding to its subtree are those (strictly) on the right of its rightmost neighbour of colour $1$ and (strictly) on the left of its leftmost neighbour of colour $3$.
    Therefore we may express that a vertex corresponds to a descendant of another one by testing inclusion of this set of leaves.
    Therefore we recover the tree obtained from $T$ by removing its leaves as follows: the vertex formula selects vertices that are introduced in inner nodes (using the guessed colouring) and the edge formula connects $u$ and $v$ if $u$ is a descendant of $v$ and there is no $z$ which is an ancestor of $v$ and a descendant of $u$, or vice versa.

    \subsubsection*{Obstruction duo $G$.}

    \mypic{182}

    Consider a graph generated by one of the two templates.
    Each leaf corresponds to four vertices (one of each colour) and each non-leaf node corresponds to two vertices, one with colour 1 and one with colour 4.
    Note that the only possible edges between vertices of colour 1 and 4 are when they are introduced in the same node, so we may express this in \fo{}.
    As before, for vertices introduced in leaves, we may express ``being on the left'' for colours 1 and 2, as well as for colours 3 and 4.
    Therefore we may express whether vertices are introduced in the same leaf, and whether a leaf vertex is to the left of another.

    Given an inner node corresponding to a node $u_1$ of colour 1 and a node $u_4$ of colour $4$, the vertices introduced in leaves below that node are those that are to the left of the leftmost neighbour of $u_1$ of colour $1$ and to the right of the rightmost neighbour of $u_4$ of colour $3$.
    Therefore we can recover the tree in \fo{} as previously.

    \subsubsection*{Obstruction $I$.}

    \mypic{183}
    
    Consider a graph generated by the template.
    Each leaf corresponds to two vertices (one of each colour) and each non-leaf node introduces two vertices of colour 1.
    Note that the only edges between vertices of colour 1 connect the two vertices introduced by the same inner node.
    Given two connected vertices of colour 1 introduced in an inner node (and therefore, in the same inner node), we say that the one on the left is the one with the largest neighbourhood; note that it corresponds to the one on the left of the picture.

    As before we may express the order between vertices introduced in leaves.
    For every inner node, the vertices introduced in leaves below that node correspond to the symmetric difference between the neighbourhoods of the vertex on the left and on the right.
    Therefore we recover the tree in \fo{} as above.

    \subsubsection*{Obstruction $H$.}
    
    \mypic{184}

    The proof is almost the same as for obstruction $I$, except that colours 2 and 3 are now identified.

    \subsection{Nondeterministic broken half-graph obstructions}
    We now prove Lemma~\ref{lem:obstructions-transduce} for nondeterministic broken half-graph obstructions.

    \begin{lemma}
        Classes generated by obstructions $E',G',H'$ and $I'$ \fo{} transduce a class containing a subdivision of every binary tree.
    \end{lemma}

    \begin{proof}
        
    We first prove the result for \mso{} transductions.

    \begin{claim}\label{claim:mso-trans}
        The classes generated by obstructions $E',G',H'$ and $I'$ \mso{} transduce the class of all binary child trees.
    \end{claim}

    \mypic{186}

    \begin{claimproof}
        Note that each of these obstructions is obtained from a deterministic one by replacing an edge between two introduced vertices by a path: obstructions $E',G',H'$ and $I'$ are respectively obtained from $E,J,H$ and $I$.
        In every case, the edge which is replaced by a path is isolated (its endpoints don't have other neighbours in the same colour class) therefore the edge can be recovered from the path in \mso{}.
        By applying one of the above proofs, one gets the claim. 
    \end{claimproof}

    In particular, since \mso{} transductions are closed under composition, this proves that the classes generated by obstructions $E',G',H'$ and $I'$ have unbounded linear cliquewidth.

    We say that a class of graphs is weakly sparse if it does not contain arbitrarily large bicliques as subgraphs.
    We now prove the following sparsification\footnote{We are grateful to Szymon Toru\'{n}czyk for pointing out the idea of sparsifying for this proof.} claim.

    \begin{claim}\label{claim:sparsification}
        For every obstruction $O \in \{E',G',H',I'\}$ and for every class $\Cc$ generated by $O$, there is a class $\Dd$ which is weakly sparse and such that $\Cc$ \fo{} transduces $\Dd$ and $\Dd$ \mso{} transduces $\Cc$.
    \end{claim}

    Before proving the claim, we explain how to deduce the lemma from it.
    Here is an illustration of the situation, where $\Cc$ is a class generated by one of the nondeterministic obstructions:
    \[
        \text{Trees} \ \ \xleftarrow{\mso{}} \ \ \Cc \xleftrightarrows{\ \fo{} \ }{\mso{}} \Dd,
    \]
    where the arrows illustrate the existence of corresponding transductions, the left one comes from Claim~\ref{claim:mso-trans} and the right ones from Claim~\ref{claim:sparsification}.
    This proves that $\Dd$ is a weakly sparse class of unbounded linear cliquewidth and bounded cliquewidth.
    Therefore by Theorem~\ref{thm:main}, there is an obstruction $O_\Dd \in \Oo$ witnessed by $\Dd$.

    Since $\Dd$ is weakly sparse, it follows that $Oo_\Dd$ is either $A$ or $B$ hence $\Dd$ contains subdivisions of all binary child trees or their line graphs as induced subgraphs, and in particular, $\Dd$ \fo{} transduces subdivisions of all binary child trees.
    By composition of \fo{} transduction, the result also holds for $\Cc$.

    \begin{claimproof}
        Consider a nondeterministic obstruction $O \in \{E',G',H',I'\}$ and let $\Cc$ be a class generated by $O$.
        Consider a graph in $\Cc$, generated by some ordered binary tree.
        
        Each leaf of the tree generates some vertices (one for each colour).
        If $O \in \{F',H',I'\}$ then the vertices generated on the leaves are essentially a half-graph (with two colours (for $E'$ or $I'$) or three colours (for $H'$)).
        We replace these vertices by a rectangular grid with width 2 (for $F'$ or $I'$) or 3 (for and $H'$) in the obvious way.
        For $O=G'$, we either have a grid of width 4 or two grids of width 2 depending on which template is used from the duo.
        Note that these rectangular grids can be \fo{} transduced from the original half-graphs, and the half-graphs can be recovered from the grids in \mso{}.

        Then for each inner vertex, its neighbourhood in the grid is an interval, so we keep only the leftmost and the rightmost neighbour (again, this can be inverted in \mso{}).
        We also remove edges among inner vertices (these can be recovered by comparing neighbourhoods among leaves).
        The obtained graphs are indeed weakly sparse.
    \end{claimproof}

    This concludes the proof of the lemma.
\end{proof}

\end{proof}

\section{Conclusions}
\label{sec:conclusions}

We have shown that if the linear cliquewidth is unbounded, then all trees are \cmso{} transduced.
If moreover the cliquewidth is bounded, then a class containing a subdivision of every binary tree is \fo{} transduced.
A related conjecture is that if the linear cliquewidth is unbounded, then all trees can be obtained as vertex minors~\cite[Conjecture 1.1]{kante2018linear}.
For the case of bounded cliquewidth, our results reduce the conjecture to studying each of the obstructions, which should not be difficult for experts in vertex minors.

Another direction for future work is to find definable tree decompositions. We hope that the techniques developed in this paper can be used to show that if a class of graphs has bounded cliquewidth, then there is an \mso{} transduction that outputs a tree decomposition for each graph in the class.
This would be a dense analogue of a similar result for treewidth that was proved in~\cite{bojanczykDefinabilityEqualsRecognizability2016a}; for more consequences of this conjectured result, see~\cite{bojanczyk2023MsoTransductions,rank-decreasing}.

\bibliographystyle{ACM-Reference-Format}
\bibliography{bib}

\end{document}